\newtheorem{lemma}{Lemma}
\newtheorem{proposition}{Proposition}
\newtheorem{theorem}{Theorem}
\newtheorem{corollary}{Corollary}
\newtheorem{assumption}{Assumption}
\newtheorem{definition}{Definition}
\theoremstyle{definition}  
\newtheorem{example}{Example}
\newtheorem{remark}{Remark}
\newcommand{\Var}{\textup{Var}}
\newcommand{\E}{\textup{E}}
\newcommand{\F}{\mathcal{F}}
\newcommand{\R}{\mathbb{R}}
\newcommand{\M}{\mathcal{M}}
\def\dm{{ \mathrm{\scriptscriptstyle DM} }}
\def\reg{{ \mathrm{\scriptscriptstyle REG} }}
\def\hc{{ \mathrm{\scriptscriptstyle HC} }}
\newcommand{\T}{\top}
\newcommand{\N}{N}
\renewcommand{\P}{\mathbf{P}}
\renewcommand{\d}{\textup{d}}
\renewcommand{\epsilon}{\varepsilon}
\DeclareMathOperator*{\argmin}{argmin}
\newcommand{\indep}{\rotatebox[origin=c]{90}{$\models$}}
\begin{document}

\title{On the statistical role of inexact matching in observational studies}
\author{Kevin Guo \and Dominik Rothenh\"ausler} 
\date{\today}
\maketitle

\begin{abstract}
In observational causal inference, exact covariate matching plays two statistical roles:  (1) it effectively controls for bias due to measured confounding;  (2) it justifies assumption-free inference based on randomization tests.  This paper shows that \emph{inexact} covariate matching does not always play these same roles.  We find that inexact matching often leaves behind statistically meaningful bias and that this bias renders standard randomization tests asymptotically invalid.  We therefore recommend additional model-based covariate adjustment after inexact matching.  In the framework of local misspecification, we prove that matching makes subsequent parametric analyses less sensitive to model selection or misspecification.  We argue that gaining this robustness is the primary statistical role of inexact matching.
\end{abstract}

% ========================================================
\section{Introduction}

\subsection{Motivation}

We consider the problem of using a large observational dataset $\{ (X_i, Y_i, Z_i) \}_{i \leq n}$ to test whether a binary treatment $Z_i \in \{ 0, 1 \}$ has any causal effect on an outcome $Y_i \in \R$.  The vector $X_i \in \R^d$ contains covariates whose confounding effects must be controlled away.  Throughout this article, we assume that all relevant confounders are contained in $X_i$.
 
For over seventy years, statisticians have been tackling such problems using matching methods.  These methods control for the effects of the $X_i$'s by pairing each treated observation $i$ with a similar untreated observation $m(i)$.  Conceptually, this pairing process is often seen as extracting an approximate randomized experiment from an observational dataset \cite{rubin2007, rosenbaum2010, king_nielsen2019}.

When all treated observations are matched exactly ($X_i = X_{m(i)}$), a matched observational study reconstructs a randomized experiment in a statistically precise sense:  conditional on the matches, the treatment distribution among matched units is the same as the treatment distribution in a paired experiment \cite[Section 3.2.3]{rosenbaum2002}.  Using this connection, p-values and confidence intervals can be computed for exactly-matched observational studies using the same design-based randomization tests originally developed for experiments.  These tests only exploit the randomness in $Z$ and are thus valid without any assumptions on the $X$-$Y$ relationship.

However, it is often not possible to find an exact match for every treated observation.  For example, no treated units will be exactly matched when covariates are continuously-distributed.

In the presence of inexact matches, the precise statistical connection between matched-pairs studies and randomized experiments breaks down.  Unlike a paired experiment, the treatment distribution in an inexactly-matched observational study is neither uniform within pairs \cite{hansen2009} nor independent across pairs \cite[Section 5]{pashley_etal2021}.  As a result, standard randomization tests based on uniformity and independence lose their finite-sample validity.  Indeed, \cite{shah_peters2020} show that no nontrivial test can have assumption-free validity when continuous covariates are present.

This paper asks what statistical role matching plays when not all units can be matched exactly.  We consider two main possibilities.
\begin{enumerate}[label=(\Alph*)]
    \item Perhaps matching discrepancies typically become negligible in large samples, so that standard randomization tests remain asymptotically valid despite not being finite-sample exact.  If so, then the statistical role of inexact matching would be the same as the statistical role of exact matching:  controlling overt bias and providing a basis for nonparametric inference.  \label{item:frt_justification}
    \item Alternatively, matching discrepancies could remain statistically meaningful even in large samples and render standard randomization tests invalid.  If so, then additional model-based adjustment after matching would be necessary to obtain valid inference.  In this case, \cite{ho_imai_king_stuart_2007} has argued that the statistical role of matching is to provide a pre-processing step that makes subsequent model-based inferences less sensitive to model selection or misspecification. \label{item:robustness_justification}
\end{enumerate}

\subsection{Outline and overview of results}

In the first half of this article, we investigate possibility \ref{item:frt_justification} by studying the large-sample properties of randomization tests in matched observational studies.  Our formal results are developed for \cite{rosenbaum1989optimal}'s optimal Mahalanobis matching scheme.  However, much of the intuition extends to other matching schemes.

We find that conventional randomization tests are not generally valid in large samples, even under strong smoothness assumptions.  In fact, their Type I error may be dramatically inflated even when the true outcome model is linear, only a handful of covariates are present, and conventional balance tests pass.  The main issue is that covariate matching does not eliminate bias at a fast rate.  A secondary issue is that randomization tests may underestimate the sampling variance of commonly-used test statistics.  Previously, \cite{abadie_imbens2006, savje2021} and others have reported on this bias, although the variance issue seems to be a new finding.

Thus, we caution against applying standard randomization tests after matching inexactly.  Although the idea that matching approximates a randomized experiment is a useful conceptual tool \cite{rubin2007, rubin2008}, the analogy is often not precise enough to form the basis for inference.

In the second half of this article, we argue that \ref{item:robustness_justification} provides a more compelling justification for inexact matching.  We prove that in an appropriately-matched dataset, p-values based on linear regression remain approximately valid even if the linear model is locally misspecified.  Moreover, after matching, all sufficiently accurate model specifications will yield nearly identical $t$-statistics.  These results give formal support to claims made in \cite{rubin1973regression, rubin1979regression, ho_imai_king_stuart_2007}, and others.  However, our analysis gives additional insights.  In particular, we find that it is generally necessary to use matching with replacement rather than pair matching to achieve the full extent of robustness attainable by matching.

Based on these results, we recommend model-based adjustment and inference after matching inexactly.  Conceptually, this mode of inference makes transparent that structural assumptions such as approximate linearity are still required for reliable inference after inexact matching.  It also cleanly separates the randomness used for study design ($X_i$ and $Z_i$) from the randomness used in outcome analysis ($Y_i$).  The Bayesian approach advocated by \cite{rubin1991modes} also has these conceptual advantages, but this article focuses on frequentist inference.

In summary, our findings suggest re-thinking the role of inexact matching in observational studies.  On its own, covariate matching may not remove enough bias to justify the use of assumption-free randomization tests.  Moreover, inexact matching may lead randomization tests to underestimate the sampling variability of common test statistics.  However, matching does play an important role in the design stage, by making downstream parametric analyses more robust to model selection or misspecification.

\subsection{Setting}

The setting of this article is the Neyman-Rubin causal model with an infinite superpopulation.  We assume that units $\{ (X_i, Y_i(0), Y_i(1), Z_i) \}$ are independent samples from a common distribution $P$ and that only $(X_i, Y_i, Z_i)$ is observed, where $Y_i = Y_i(Z_i)$.  The problem of interest is to use the observed data to test Fisher's sharp null hypothesis:
\begin{align}
    H_0 \, : \, Y_i(0) = Y_i(1) \text{ with probability one under $P$.} \label{fishers_null}
\end{align}
All our results extend to testing a constant treatment effect, $H_{\tau} \, : \, Y_i(0) + \tau = Y_i(1)$.  However, they will not extend to the weak null hypothesis $\E \{ Y_i(1) - Y_i(0) \} = 0$.

Throughout, we assume the underlying population satisfies a few conditions:

\begin{assumption}
\label{assumption:primitives}
The distribution $P$ satisfies the following:
\begin{enumerate}[label=(\alph*), itemsep=-0.75ex]
    \item \emph{Unconfoundedness}. $\{ Y(0), Y(1) \} \, \indep \, Z \mid X$.
    \item \emph{Overlap}. $P(Z = 0 \mid X) \geq \delta > 0$. \label{item:overlap}
    \item \emph{More controls than treated}. $0 <  P(Z = 1) < 0.5$.
    \item \emph{Moments}. $\| X \|$ and $Y$ have more than four moments. \label{item:four_moments}
    \item \emph{Nonsingularity}. $\Var(X \mid Z = 1) \succ \mathbf{0}$ and $\Var(Y \mid X, Z) > 0$. 
\end{enumerate}
\end{assumption}

Unconfoundedness and overlap are standard identifying assumptions.  Meanwhile, the condition $0 < P(Z = 1) < 0.5$ ensures that treated observations exist and that it is eventually possible to find an untreated match for each treated observation.  The last two conditions are needed for various technical reasons, e.g. to ensure that the Mahalanobis distance exists.  

The analysis in this paper is asymptotic, and we assume that the sample size $n$ grows large as the dimension $d$ stays fixed.  In fact, following \cite{rubin1980bias}'s advice, we recommend thinking of $d$ as a fairly small number, say, eight or less.

The key asymptotic concept studied in this paper is the \emph{asymptotic validity} of p-values.

\begin{definition}[Asymptotic validity]
\label{definition:asymptotic_validity}
A sequence of p-values $\hat{p}_n$ is called \emph{asymptotically valid} at $P \in H_0$ if (\ref{eq:asymptotic_validity}) holds under independent sampling from $P$.
\begin{align}
\limsup_{n \rightarrow \infty} \P( \hat{p}_n < \alpha) \leq \alpha \quad \text{for every} \quad \alpha \in (0, 1) \label{eq:asymptotic_validity}
\end{align}
\end{definition}

\begin{remark}[Alternative sampling models]
The independent sampling model used in this paper differs from several alternatives used in the matching literature.  One alternative is the design-only framework which models $Z_i$ as random but treats both the matching and the unit characteristics $\{ X_i, Y_i(0), Y_i(1) \}$ as fixed \cite{rosenbaum2002}.  While this simplifies many issues, it precludes analyzing the typical size of matching discrepancies.  It also assumes away the complex dependence between the treatments $Z_i$ and the matching $\mathcal{M}$, which may be practically relevant \cite{pimentel2022}.  Another alternative assumes that the number of untreated observations $N_0$ grows much faster than the number of treated observations $N_1$.  For example, \cite{ferman2021matching} and \cite{abadie_imbens2012} assume that $N_0 \gg N_1^{d/2}$.  This scaling is favorable for matching, but the sample size requirement is stringent even for $N_1 = 100, d = 5$.  We find that the standard sampling regime gives better approximations in problems where $N_0$ is only a constant multiple of $N_1$.
\end{remark}

% ===========================================================
\section{Large-sample properties of paired randomization tests} \label{section:paired}

\subsection{Optimal matching and Fisher's randomization test}

In the first part of this paper, we present our findings on the large-sample properties of standard randomization tests in inexactly-matched observational studies.

The pair matching procedure we study is \cite{rosenbaum1989optimal}'s optimal Mahalanobis matching scheme.  This matching scheme pairs each treated observation $i$ with a unique untreated observation $m(i)$ in a way that minimizes the total Mahalanobis distance across pairs:
\begin{align}
\sum_{Z_i = 1} \{ (X_i - X_{m(i)})^{\T} \mathbf{\hat{\Sigma}}^{-1} (X_i - X_{m(i)}) \}^{1/2} \label{total_mahalanobis}
\end{align}
Ties may be broken arbitrarily.  In \Cref{total_mahalanobis}, $\mathbf{\hat{\Sigma}}$ denotes the sample covariance matrix of $X$ and we arbitrarily set $\mathbf{\hat{\Sigma}}^{-1} = \mathbf{I}_{d \times d}$ when $\hat{\mathbf{\Sigma}}$ is singular.  We also let $\mathcal{M} = \{ i \, : \, Z_i = 1 \text{ or } i = m(j) \text{ for some treated unit $j$} \}$ denote the set of matched units.

The randomization test we study is the paired Fisher randomization test.  This test computes a p-value for the null hypothesis (\ref{fishers_null}) as follows.  First, the user computes a test statistic $\hat{\tau} \equiv \hat{\tau}( \{ (X_i, Y_i, Z_i) \}_{i \in \mathcal{M}})$ on the matched data.  Two widely-used test statistics are the difference-of-means statistic (\ref{eq:dm_statistic}) and the regression-adjusted statistic (\ref{eq:ols_statistic}).
\begin{align}
    \hat{\tau}^{\dm} &= \frac{1}{N_1} \sum_{Z_i = 1} \{ Y_i - Y_{m(i)} \} \label{eq:dm_statistic}\\
    \hat{\tau}^{\reg} &= \argmin_{\tau \in \R} \min_{(\gamma, \beta) \in \R^{1 + d}} \sum_{i \in \mathcal{M}} (Y_i - \gamma - \tau Z_i - \beta^{\top} X_i)^2 \label{eq:ols_statistic}
\end{align}
Then, conditional on the original data $\mathcal{D}_n = \{ (X_i, Y_i, Z_i) \}_{i \leq n}$, the user defines pseudo-assignments $\{ Z_i^* \}_{i \in \mathcal{M}}$ by randomly permuting the true assignments $Z_i$ across matched pairs.  Finally, the p-value is defined as $\hat{p} = \P( | \hat{\tau}_*| \geq | \hat{\tau}| \mid \mathcal{D}_n)$ where $\hat{\tau}_* \equiv \hat{\tau}( \{ (X_i, Y_i, Z_i^*) \}_{i \in \mathcal{M}})$ is the test statistic evaluated using the pseudo-assignments instead of the true ones.  When there are more treated than control units or no treated units, we arbitrarily set $\hat{p} = 1$ since $\mathcal{M}$ is undefined.

The distribution of $\hat{\tau}_*$ given $\mathcal{D}$ is called the \emph{randomization distribution} of $\hat{\tau}_*$.  In practice, this distribution will be approximated using randomly sampled permutations.  Since the approximation error can be made arbitrarily small by sampling a large number of permutations, we will consider the idealized case where randomization probabilities are computed exactly.

\subsection{The paired Fisher randomization test is not generally valid} \label{section:examples_of_invalidity}

In this section, we give theoretical and numerical examples showing that the paired Fisher randomization test may fail to control asymptotic Type I error even in problems with smooth propensity scores and outcome models.  We also explain what goes wrong in each example.  All formal claims are proved in the Supplementary Materials.

Throughout, we use the following notations:  $e(x) := P(Z = 1 \mid X = x)$ is the propensity score, $\hat{p}^{\dm}$ is the randomization p-value based on the difference-of-means statistic (\ref{eq:dm_statistic}), and $\hat{p}^{\reg}$ is the randomization p-value based on the regression-adjusted statistic (\ref{eq:ols_statistic}).

\begin{example}[One covariate] \label{example:one_covariate}
The first example is based on the analysis in \cite{savje2021}.  Suppose that $P \in H_0$ satisfies Assumption \ref{assumption:primitives} and the following:
\begin{align*}
X &\sim \textup{Uniform}(0, 1)\\
Z \mid X &\sim \textup{Bernoulli}(\theta_0 + \theta_1 X)\\
Y \mid X, Z &\sim N( \beta_0 + \beta_1 X, \sigma^2).
\end{align*}
If $\beta_1 \neq 0$ and $P \{ e(X) \geq 0.5 \} > 0$, then $\P ( \hat{p}^{\dm} < \alpha) \rightarrow 1$ for every $\alpha \in (0, 1)$.  In other words, if overt bias is present and any units in the population have propensity scores larger than one-half, then the paired Fisher randomization test will almost always make a false discovery.  In this example, the same conclusion would hold for optimal propensity score matching.
\end{example}

\begin{example}[Two covariates] \label{example:two_covariates}
A multivariate analogue of \Cref{example:one_covariate} can be constructed using the method from \cite{rubin1976}.  Let $\mathbf{D} = \{ x \in \R^2 \, : \, \| x \|_2 \leq 1 \}$ be the unit disc in the plane, and let $P \in H_0$ be any distribution satisfying Assumption \ref{assumption:primitives} and the following:
\begin{align*}
X &\sim \textup{Uniform}(\mathbf{D})\\
Z \mid X &\sim \textup{Bernoulli} \{ 0.35(1 + \theta^{\top} X) \}\\
Y \mid X, Z &\sim N( \theta^{\top} X, \sigma^2)
\end{align*}
for some $\theta \in \mathbf{D}$.  A typical large sample from this distribution will have nearly twice as many untreated units as treated units.  However, some of those units will have propensity scores larger than one-half.  As a result, we still have $\P( \hat{p}^{\dm} < \alpha) \rightarrow 1$ for every $\alpha \in (0, 1)$.  The same conclusion holds under optimal propensity score matching, nearest-neighbor matching, or any other maximal pair-matching scheme.
\end{example}

In both of these examples, the paired Fisher randomization test fails because the test statistic $\hat{\tau}^{\dm}$ is asymptotically biased, but the randomization distribution does not replicate this bias.  This bias is mainly driven by the presence of units with propensity scores larger than one-half, because any pair-matching scheme must eventually run out of close matches for these units.  After all, treated units outnumber untreated units in regions of covariate space with $e(x) > 0.5$.  See \cite{savje2021} for a careful discussion of this issue.

In large samples, a careful analyst may detect this bias and attempt to remove it using regression adjustment.  For example, \cite{carpenter1977} writes: ``if the residual bias after matching is unacceptably large it may be removed by analysis of covariance."  However, the following example shows that this may not be enough to rescue the randomization p-value.

\begin{example}[Regression adjusted test statistics] \label{example:regression_adjusted_test_statistics}
Let $P$ satisfy the requirements of \Cref{example:one_covariate}, including $P \{ e(X) \geq 0.5 \} > 0$.  Consider the paired Fisher randomization test based on the regression-adjusted test statistic $\hat{\tau}^{\reg}$.  Standard least-squares theory tells us that this test statistic is exactly unbiased in finite samples.  Nevertheless, we have:
\begin{align*}
\limsup_{n \rightarrow \infty} \P( \hat{p}^{\reg} < \alpha) > \alpha \quad \text{for every } \alpha \in (0, 1).
\end{align*}
Thus, even the paired Fisher randomization test based on a correctly-specified regression model does not control asymptotic Type I error when units with propensity scores larger than one-half are present.
\end{example}

The issue here is more subtle, and is caused by a variance mismatch.  Since matching fails to balance covariates when $P \{ e(X) \geq 0.5 \} > 0$, the sample correlation between $X_i$ and $Z_i$ in $\mathcal{M}$ does not vanish in large samples.  However, $X_i$ and $Z_i^*$ are uncorrelated in the randomization distribution.  Correlation harms precision in least-squares regression, so this mismatch leads the randomization variance of $\hat{\tau}_*^{\reg}$ to underestimate the sampling variance of $\hat{\tau}^{\reg}$.

Rather than using regression-adjusted test statistics, some authors have recommended only using pair matching in populations where all units have propensity scores less than one-half.  For example, \cite{yu_silber_rosenbaum2020rejoinder} write ``In concept in large samples, pair matching is feasible if $\{ 1 - e(x) \} / e(x) > 1$ for all $x$" (notation edited to match ours).  

In such populations, it is eventually possible to find an arbitrarily close match for every treated unit.  As a result, $\hat{\tau}^{\dm}$ will be asymptotically unbiased and consistent.  However, asymptotic unbiasedness is not enough to justify randomization tests.  Valid inference require biases to be so small that ``that they are buried in estimated standard errors" \cite{rubin2022interview}.  Since the standard errors of the randomization distribution tend to zero at rate $n^{-1/2}$ \cite{bai2021matched}, valid inference requires bias to decay at a rate faster than $n^{-1/2}$.  This is stated formally in the following Proposition.

\begin{proposition}[Bias requirement]
\label{proposition:balance_requirement}
Suppose that $P \in H_0$ satisfies Assumption \ref{assumption:primitives} and $P \{ e(X) < 0.5 \} = 1$.  Then the randomization p-value $\hat{p}^{\dm}$ is asymptotically valid if and only if $\E( \hat{\tau}^{\dm} \mid \{ (X_i, Z_i) \}_{i \leq n}) = o_P(n^{-1/2})$.
\end{proposition}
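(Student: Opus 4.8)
The plan is to reduce the p-value to a comparison between the observed statistic, whose asymptotic law is a Gaussian shifted by the scaled conditional bias, and the randomization reference, which is a centered Gaussian with the same variance. Write $\mathcal{G}_n = \{ (X_i, Z_i) \}_{i \leq n}$, which fixes the optimal matching $\mathcal{M}$, and abbreviate $D_i = Y_i - Y_{m(i)}$ for each treated $i$. Under the sharp null $Y_i = Y_i(0)$, so relabeling treatment within a matched pair merely flips the sign of $D_i$; hence $\sqrt{N_1} \hat{\tau}_*^{\dm} = N_1^{-1/2} \sum_{Z_i = 1} \epsilon_i D_i$ for i.i.d.\ signs $\epsilon_i$, a Rademacher-weighted sum with conditional mean zero and conditional variance $N_1 V_n = N_1^{-1} \sum_{Z_i = 1} D_i^2$. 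Using unconfoundedness to set $\mu(x) = \E(Y(0) \mid X = x)$ and writing $r_i = Y_i(0) - \mu(X_i)$ (conditionally independent, mean zero given $\mathcal{G}_n$), I decompose $\hat{\tau}^{\dm} = B_n + W_n$ with $B_n = \E(\hat{\tau}^{\dm} \mid \mathcal{G}_n) = N_1^{-1} \sum_{Z_i = 1} \{ \mu(X_i) - \mu(X_{m(i)}) \}$ and $W_n = N_1^{-1} \sum_{Z_i = 1} (r_i - r_{m(i)})$. Since $N_1 / n \to P(Z = 1) \in (0, 0.5)$, the hypothesis $B_n = o_P(n^{-1/2})$ is equivalent to $\sqrt{N_1} B_n \to_p 0$, so I would work throughout on the $\sqrt{N_1}$ scale.

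The heart of the argument, and the step I expect to be the main obstacle, is to show that the randomization variance and the sampling variance of the noise share a common positive limit: $N_1 V_n \to_p V_\infty$ and $N_1 \Var(W_n \mid \mathcal{G}_n) = N_1^{-1} \sum_{Z_i = 1} \{ s^2(X_i) + s^2(X_{m(i)}) \} \to_p V_\infty$, where $s^2(x) = \Var(Y(0) \mid X = x)$ and $V_\infty = 2\, \E\{ s^2(X) \mid Z = 1 \} > 0$ by nonsingularity. This is exactly where the hypothesis $P\{ e(X) < 0.5 \} = 1$ enters: in this regime optimal Mahalanobis matching is consistent, producing matches close enough that $N_1^{-1} \sum_{Z_i = 1} \{ \mu(X_i) - \mu(X_{m(i)}) \}^2 \to_p 0$ and that the matched-control covariate average mimics the treated one. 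I would combine this matching-consistency input (available from the earlier analysis) with continuity of $\mu$ and $s^2$, the law of large numbers over the i.i.d.\ treated units, and the four-moment assumption to annihilate the squared-discrepancy and cross terms and identify the common limit. Crucially, this variance match holds whether or not the bias is negligible: close matches force the \emph{average squared} discrepancy to zero even when $\sqrt{N_1} B_n$ does not vanish, which is what permits a clean equivalence.

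I would then establish two central limit theorems. Conditionally on $\mathcal{G}_n$, a Lindeberg--Feller argument (finite residual moments) gives $\sqrt{N_1} W_n \to_d N(0, V_\infty)$; conditionally on $\mathcal{D}_n$, a CLT for Rademacher-weighted sums gives $\sqrt{N_1} \hat{\tau}_*^{\dm} \to_d N(0, V_\infty)$, its Lindeberg condition $\max_{Z_i = 1} D_i^2 / \sum_{Z_i = 1} D_i^2 \to_p 0$ following from $\sum_{Z_i = 1} D_i^2 \asymp N_1$ and $\max_{Z_i = 1} D_i^2 = o_P(N_1^{1/2})$ under the four-moment assumption. Together with $N_1 V_n \to_p V_\infty$, these yield the Gaussian approximation $\hat{p}^{\dm} = 2 \bar{\Phi}(|\sqrt{N_1} \hat{\tau}^{\dm}| / \sqrt{V_\infty}) + o_P(1)$, where $\bar{\Phi} = 1 - \Phi$ and $\sqrt{N_1} \hat{\tau}^{\dm} = b_n + \sqrt{N_1} W_n$ with $b_n = \sqrt{N_1} B_n$.

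Finally I would read off both directions. Setting $z = \Phi^{-1}(1 - \alpha/2)$ and $g(t) = \bar{\Phi}(z - t) + \bar{\Phi}(z + t)$, the conditional-on-$\mathcal{G}_n$ normal approximation (recall $\sqrt{N_1}\hat{\tau}^{\dm} \mid \mathcal{G}_n$ is asymptotically $N(b_n, V_\infty)$) gives $\P(\hat{p}^{\dm} < \alpha \mid \mathcal{G}_n) \to g(b_n / \sqrt{V_\infty})$, and $g$ is continuous, symmetric, and strictly increasing in $|t|$ with $g(0) = \alpha$ and $g(t) > \alpha$ for $t \neq 0$. Bounded convergence then yields $\P(\hat{p}^{\dm} < \alpha) \to \E\{ g(b_n / \sqrt{V_\infty}) \}$. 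If $b_n \to_p 0$, then $g(b_n / \sqrt{V_\infty}) \to_p \alpha$ and the Type I error converges to $\alpha$, establishing asymptotic validity. If $b_n \not\to_p 0$, there exist $\epsilon, \delta > 0$ and a subsequence along which $\P(|b_n| > \epsilon) \geq \delta$; on that event $g(b_n / \sqrt{V_\infty}) \geq g(\epsilon / \sqrt{V_\infty}) = \alpha + \gamma$ for some $\gamma > 0$, so $\E\{ g(b_n / \sqrt{V_\infty}) \} \geq \alpha + \gamma \delta$ and hence $\limsup_n \P(\hat{p}^{\dm} < \alpha) > \alpha$, violating validity. This proves the equivalence.
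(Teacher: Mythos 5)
Your proposal is correct and follows essentially the same route as the paper's proof: decompose $\hat{\tau}^{\dm}$ into the conditional bias plus a noise term, use $P\{e(X)<0.5\}=1$ to show matching consistency kills the squared $\mu$-discrepancies so that the randomization distribution and the conditional noise distribution share the limit $N(0,\,2\E\{\sigma^2(X)\mid Z=1\})$, and then exploit the strict monotonicity of the two-sided Gaussian rejection probability in the mean shift to read off both directions. The only difference is organizational — you package sufficiency and necessity into a single conditional approximation $\P(\hat{p}^{\dm}<\alpha\mid\mathcal{G}_n)\approx g(b_n/\sqrt{V_\infty})$, whereas the paper proves sufficiency by an unconditional Slutsky argument and necessity by a separate conditional lower bound, but the ingredients (Berry--Esseen/Lindeberg CLTs conditional on the design, the critical-value limit, the monotone function $g$) are the same.
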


When the linear model $\E(Y \mid X, Z) = \gamma + \beta^{\top} X$ holds, \Cref{proposition:balance_requirement} requires that optimal matching achieves very fine covariate balance in the direction of $\beta$:
\begin{align}
\frac{1}{N_1} \sum_{Z_i = 1} \{ X_i - X_{m(i)} \}^{\top} \beta = o_P(n^{-1/2}). \label{balance_requirement}
\end{align}
Unless the dimension $d$ is very small, \cite[Theorem 2.(ii)]{abadie_imbens2006} suggests that such strict balance is hard to achieve.  This is illustrated by the following numerical example.

\begin{example}[Slow bias decay]
\label{example:dimension4}
For various sample sizes $n$ between $200$ and $2,000$, we sampled data from the following linear/logistic model:
\begin{align*}
X &\sim \textup{Uniform}([-1, 1]^4)\\
Z \mid X & \sim \textup{Bernoulli}[1/\{ 1 + \exp(1.1 - X_1) \}]\\
Y \mid X, Z &\sim \N(3 X_1, 1).
\end{align*}
This was repeated $2,000$ times per sample size.  In each simulation, we recorded the bias of $\hat{\tau}^{\dm}$ and the paired randomization test p-value.  The results are shown in \Cref{fig:dimension_example}.  Although this distribution has $P \{ e(X) < 0.475 \} = 1$, the paired randomization test nevertheless performs poorly due to the failure of the bias condition (\ref{balance_requirement}).  In fact, the Type I error of nominally level 5\% tests appears to increase with the sample size.
\end{example}

\begin{figure}
\centering
    \includegraphics[width=14cm]{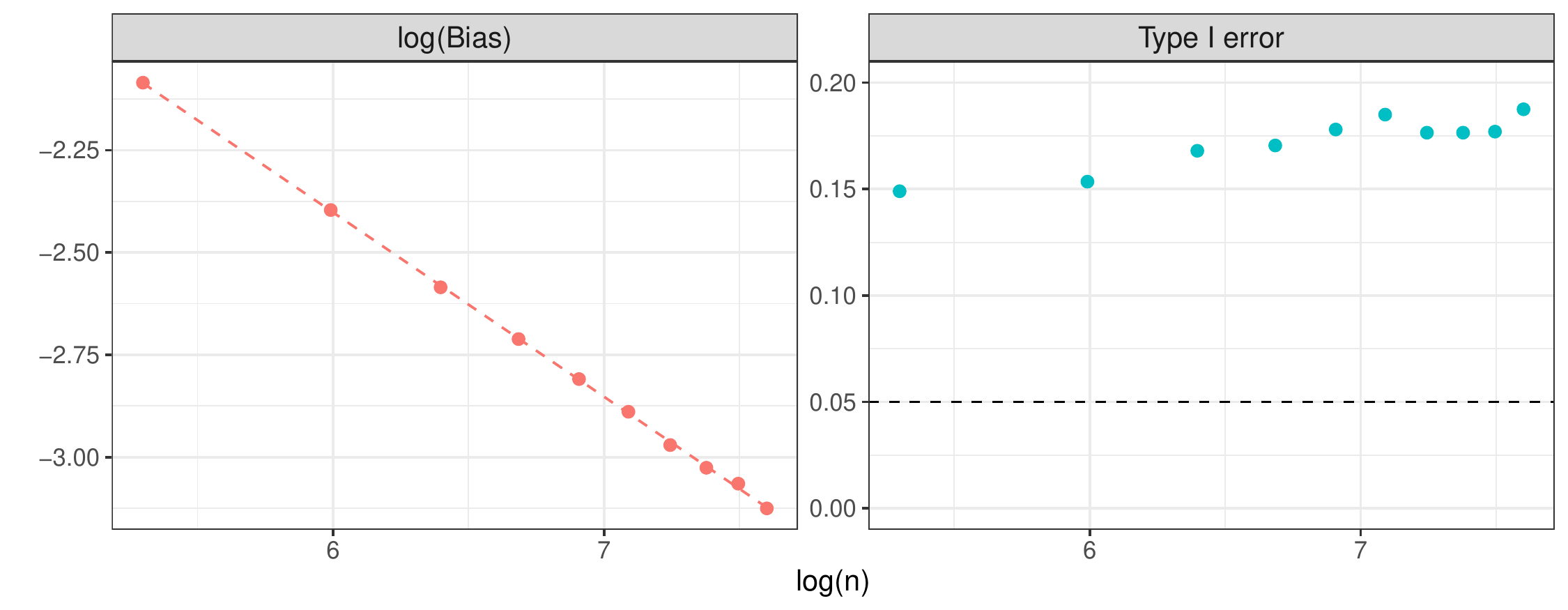}
    \caption{The left panel plots the average bias of the difference-of-means statistic in \Cref{example:dimension4} at various sample sizes, on a log-log scale.  The slope of the best-fit line is $\approx -0.45$, suggesting that the bias does not satisfy the $o(n^{-1/2})$ decay rate required by \Cref{proposition:balance_requirement}.  The right panel shows the Type I error of nominally level 5\% paired randomization tests based on the difference-of-means statistic.  All matches were computed in the $\texttt{R}$ programming language \cite{rcoreteam} using the \texttt{optmatch} package \cite{optmatch}.  Randomization p-values were approximated using 1,000 randomly sampled permutations.}
    \label{fig:dimension_example}
\end{figure}

\begin{remark}[Balance tests]
It is worth mentioning that the balance condition (\ref{balance_requirement}) would not hold even in a completely randomized experiment, so the asymptotic validity of $\hat{p}^{\dm}$ cannot be certified by any balance test with a completely randomized reference distribution.  This includes the two-sample $t$-test and all of the examples in \cite[Chapter 10]{rosenbaum2010}.  Indeed, in each of our simulations in \Cref{example:dimension4}, we also performed a nominal level 10\% balance check using Hotelling's $T^2$ test.  Imbalance was not detected even a single time.  A balance test based on a \emph{paired} experiment reference distribution would be powerful enough to detect cases where $\P( \hat{p}^{\dm} < \alpha) \rightarrow 1$ \cite{hansen2009}, although not powerful enough certify $\P( \hat{p}^{\dm} < \alpha) \rightarrow \alpha$.   See \cite{austin2008, hansen2008, imai_king_stuart2007, branson2021randomization} for further discussion of balance tests.
\end{remark}

\begin{remark}[Calipers]
Some of the poor behavior in these examples might be avoided or mitigated using calipers on the propensity score or the raw covariates \cite{hansen2009}.  However, the correct scaling of this caliper is a delicate issue which goes beyond the scope of this paper.  In \cite{pimentel2022}'s simulations, propensity calipers did help to reduce the false positive rate of the paired Fisher randomization test.  On the other hand, \cite{schafer_kang2009}'s simulations used propensity calipers but still found $\hat{\tau}^{\dm}$ to be severely biased.   \cite{kim2021local} study randomization tests based on finely-calipered stratifications of the covariate space.  Their results suggest that in moderate dimensions, obtaining validity via covariate calipers may require discarding the vast majority of treated observations.
\end{remark}

\subsection{Sufficient conditions for validity} \label{section:sufficient_conditions}

The examples in the previous section show that stringent sampling assumptions are required for the paired Fisher randomization test to be asymptotically valid.  For completeness, this section gives two sets of sufficient conditions that make this work.

First, we consider the test based on the difference-of-means statistic.  To control the bias that spoils validity in Examples \ref{example:one_covariate} and \ref{example:two_covariates}, we must assume that no units have propensity scores above one-half.  However, \Cref{example:dimension4} shows that this is not enough and we also need to restrict attention to very low-dimensional problems.

\begin{proposition}[Sufficient conditions for $\hat{p}^{\dm}$] \label{proposition:dm_sufficient}
Let $P \in H_0$ satisfy Assumption \ref{assumption:primitives} and the following:
\begin{enumerate}[label=(\alph*), itemsep=-0.75ex]
    \item \emph{No large propensity scores}. $P\{ e(X) < 0.5 - \eta \} = 1$ for some $\eta > 0$. \label{item:no_large_propensities}
    \item \emph{Smooth outcome model}. The map $x \mapsto \E(Y \mid X = x)$ is Lipschitz-continuous. \label{item:lipschitz_outcome_model}
    \item \emph{One continuous covariate}.  $X_i$ is scalar-valued, and has a continuous, positive density supported on a compact interval. \label{item:one_continuous_covariate}
\end{enumerate}
Then the p-value $\hat{p}^{\dm}$ based on the difference-of-means statistic (\ref{eq:dm_statistic}) is asymptotically valid.
\end{proposition}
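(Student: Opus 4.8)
The plan is to reduce asymptotic validity to a statement about the typical matching discrepancy, and then to exploit the one-dimensional structure to show that this discrepancy shrinks faster than $n^{-1/2}$.

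First I would invoke \Cref{proposition:balance_requirement}. Condition \ref{item:no_large_propensities} gives $P\{e(X) < 0.5 - \eta\} = 1$, which in particular implies $P\{e(X) < 0.5\} = 1$, so the hypotheses of \Cref{proposition:balance_requirement} hold and it suffices to prove that the conditional bias obeys $\E(\hat{\tau}^{\dm} \mid \{(X_i, Z_i)\}_{i \le n}) = o_P(n^{-1/2})$. By Assumption \ref{assumption:primitives}, $N_1/n \to P(Z = 1) \in (0, 0.5)$, so $N_1 \to \infty$ and eventually $N_0 > N_1$, which makes the matching and hence $\hat{\tau}^{\dm}$ well-defined with probability tending to one.

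Next I would rewrite the conditional bias in terms of matching discrepancies. Writing $\mu(x) := \E(Y(0) \mid X = x)$, the sharp null gives $Y_i = Y_i(0)$, and unconfoundedness together with independence across units yields $\E(Y_i \mid \{(X_j, Z_j)\}_{j \le n}) = \mu(X_i)$ for each $i$ (the index $m(i)$ being a measurable function of $\{(X_j, Z_j)\}$). Hence
\begin{align*}
\E(\hat{\tau}^{\dm} \mid \{(X_i, Z_i)\}_{i \le n}) = \frac{1}{N_1}\sum_{Z_i = 1}\{\mu(X_i) - \mu(X_{m(i)})\},
\end{align*}
and the Lipschitz assumption \ref{item:lipschitz_outcome_model}, with constant $L$, gives
\begin{align*}
\left|\E(\hat{\tau}^{\dm} \mid \{(X_i, Z_i)\}_{i \le n})\right| \le \frac{L}{N_1}\sum_{Z_i = 1}|X_i - X_{m(i)}|.
\end{align*}
It therefore remains to show that the average matching discrepancy $N_1^{-1}\sum_{Z_i = 1}|X_i - X_{m(i)}|$ is $o_P(n^{-1/2})$.

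The crux is this discrepancy bound, which I would obtain by a Hall's-theorem argument tailored to the scalar covariate. Since $\hat{\Sigma}$ is a positive scalar converging to $\Var(X) > 0$, the optimal matching minimizes $\sum_{Z_i = 1}|X_i - X_{m(i)}|$, so it is enough to exhibit \emph{any} feasible matching whose total raw discrepancy is $O_P(\log n)$: the optimal total is then no larger, and the average is $O_P(\log n / n) = o_P(n^{-1/2})$. To produce such a matching, fix $\delta_n = C \log n / n$ and form the bipartite graph joining each treated unit to every control within distance $\delta_n$; a matching with all discrepancies at most $\delta_n$ exists iff Hall's condition holds. Because the covariate is scalar, the binding constraints are indexed by intervals, so I need $\#\{\text{controls in } [u - \delta_n, v + \delta_n]\} \ge \#\{\text{treated in } [u, v]\}$ for every $[u, v]$. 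Condition \ref{item:no_large_propensities} forces the control intensity $(1 - e(x))f(x)$ to exceed the treated intensity $e(x)f(x)$ by at least $2\eta f_{\min}$ per unit length, using the positive density lower bound from condition \ref{item:one_continuous_covariate}; for long intervals this deterministic surplus dominates the $O_P(\sqrt{n \cdot \mathrm{length}})$ sampling fluctuations, while for short intervals the $\delta_n$-widening supplies $\Omega(\log n)$ additional controls, beating the fluctuations in the at most logarithmically many treated counts.

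The main obstacle is making this concentration argument uniform over all intervals at once. I would reduce to intervals whose endpoints are treated order statistics — shrinking an interval to its extreme treated points only weakens the control side of Hall's inequality — leaving $O(N_1^2)$ candidate constraints, and then apply a Bernstein/Chernoff tail bound to each count followed by a union bound; the $\log n / n$ scaling of $\delta_n$ is calibrated precisely so that the per-interval failure probability is $o(n^{-2})$ and survives the union bound. This also clarifies why the argument is special to one dimension: for $d \ge 2$ the nearest-available-control distance is $\Theta_P(n^{-1/d})$, which is not $o(n^{-1/2})$, consistent with the slow decay documented in \Cref{example:dimension4}.
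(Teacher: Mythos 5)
Your proposal is correct, and its overall skeleton matches the paper's: both reduce the claim to the bias condition of \Cref{proposition:balance_requirement}, pass from the conditional bias to the average matching discrepancy via the Lipschitz assumption, and then show $\sum_{Z_i=1}|X_i - X_{m(i)}|/N_1 = o_P(n^{-1/2})$. Where you genuinely diverge is in how that last discrepancy bound is established. The paper simply cites the proof of Proposition~1 of Abadie and Imbens (2012) for this step, whereas you construct a feasible matching from scratch: a caliper $\delta_n = C\log n/n$, the observation that in one dimension Hall's condition for the resulting bipartite graph reduces to $O(N_1^2)$ interval constraints with treated order statistics as endpoints, and a Bernstein-plus-union-bound argument in which long intervals are handled by the deterministic control surplus $2\eta f_{\min}$ per unit length (coming from $e(x) < 0.5 - \eta$ and the positive density) and short intervals by the $\Omega(\log n)$ extra controls supplied by the caliper widening. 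Combined with the fact that in $d=1$ the optimal Mahalanobis matching minimizes $\sum|X_i - X_{m(i)}|$, this yields a total discrepancy of $O_P(\log n)$, comfortably stronger than the required $o_P(n^{1/2})$. Your route costs more work but buys a self-contained proof with an explicit near-parametric rate, and it makes visible exactly where both the one-dimensionality (the interval reduction of Hall's condition) and the propensity bound (the per-unit-length surplus) enter; the paper's route is shorter but leaves the reader to verify that the cited external proof applies verbatim under these hypotheses. Your closing remark that for $d \ge 2$ the available-control spacing is $\Theta_P(n^{-1/d})$ is also consistent with the paper's conjecture that validity fails beyond a few continuous covariates.
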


We conjecture that asymptotic validity continues to hold with up to three continuous covariates.  However, proving this is beyond our current abilities.  In dimension four, \Cref{example:dimension4} suggests that asymptotic validity will no longer hold.

Next, we consider the test based on the regression-adjusted test statistic.  If we assume the model is correctly specified, then the bias of $\hat{\tau}^{\reg}$ is controlled even if more than three continuous covariates are present.  Meanwhile, the variance mismatch in \Cref{example:regression_adjusted_test_statistics} can be ruled out by assuming that no units have propensity scores larger than one-half.

\begin{proposition}[Sufficient conditions for $\hat{p}^{\reg}$] \label{proposition:ols_sufficient}
Let $P \in H_0$ satisfy Assumption \ref{assumption:primitives} and the following:
\begin{enumerate}[label=(\alph*), itemsep=-0.75ex]
    \item \emph{No large propensity scores}.  $P \{ e(X) < 0.5 - \eta \} = 1$ for some $\eta > 0$. \label{item:no_large_propensities_ols}
    \item \emph{Correctly-specified outcome model}. $\E(Y \mid X, Z) = \gamma + \beta^{\top} X$ for some $(\gamma, \beta) \in \R^{1 + d}$. \label{item:correctly_specified_parametric_model}
\end{enumerate}
Then the p-value $\hat{p}^{\reg}$ based on the regression-adjusted test statistic (\ref{eq:ols_statistic}) is asymptotically valid.
\end{proposition}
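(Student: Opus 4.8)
The plan is to reduce both $\hat{\tau}^{\reg}$ and its randomization counterpart $\hat{\tau}^{\reg}_*$ to paired differences of regression residuals, and then to show that under the sharp null the randomization second moment consistently estimates the sampling variance. I would first exploit the correct specification. Under $H_0$ and unconfoundedness we have $Y \indep Z \mid X$, so condition \ref{item:correctly_specified_parametric_model} gives $Y_i = \gamma + \beta^{\T} X_i + \epsilon_i$ with $\E(\epsilon_i \mid X_i, Z_i) = 0$ and $\Var(\epsilon_i \mid X_i, Z_i) = \sigma^2(X_i) > 0$. Let $\tilde{Z}_i$ denote the residual of $Z_i$ after least-squares regression on $(1, X_i)$ over the matched set $\mathcal{M}$. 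By the Frisch--Waugh--Lovell theorem,
\[
\hat{\tau}^{\reg} = \frac{\sum_{i \in \mathcal{M}} \tilde{Z}_i \epsilon_i}{\sum_{i \in \mathcal{M}} \tilde{Z}_i^2},
\]
since the deterministic part $\gamma + \beta^{\T} X_i$ lies in the column space of $(1, X)$ and is annihilated by the residualization. The same identity holds for $\hat{\tau}^{\reg}_*$ with $\tilde{Z}_i$ replaced by $\tilde{Z}^*_i$, the residual of the permuted treatment $Z^*_i$; the key distinction is that $\epsilon_i$ is \emph{held fixed} in the randomization distribution while $Z^*_i$ varies through the within-pair sign flips.

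Next I would use the matching quality guaranteed by condition \ref{item:no_large_propensities_ols} to linearize the residuals. Index the $N_1$ matched pairs as $(a_j, b_j)$ with $Z_{a_j} = 1$, $Z_{b_j} = 0$, and write $\Delta_j = X_{a_j} - X_{b_j}$. Since $\bar{Z}_{\mathcal{M}} = 1/2$, a direct computation shows the within-$\mathcal{M}$ regression of $Z$ on $X$ has slope $\hat{c} = \hat{\Sigma}_{\mathcal{M}}^{-1}\,\widehat{\Cov}_{\mathcal{M}}(X,Z)$ with $\widehat{\Cov}_{\mathcal{M}}(X,Z) \propto N_1^{-1}\sum_j \Delta_j$, so $\hat{c} = O_P(\|\bar{\Delta}\|)$ where $\bar{\Delta} = N_1^{-1}\sum_j \Delta_j$; here I use that $\hat{\Sigma}_{\mathcal{M}}$ converges to a positive-definite limit by the nonsingularity assumption. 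Hence $\tilde{Z}_i \approx Z_i - \tfrac12 = \pm\tfrac12$, giving $\sum_{i\in\mathcal{M}}\tilde{Z}_i^2 = N_1/2 + o_P(N_1)$ and $\sum_{i\in\mathcal{M}}\tilde{Z}_i\epsilon_i = \tfrac12\sum_j D_j + O_P(\sqrt{N_1}\,\|\bar{\Delta}\|)$, where $D_j = \epsilon_{a_j} - \epsilon_{b_j}$. This yields $\sqrt{N_1}\,\hat{\tau}^{\reg} = N_1^{-1/2}\sum_j D_j + o_P(1)$ once $\|\bar{\Delta}\| = o_P(1)$. The identical argument on the randomization side (now with the fixed interaction $(\hat{c}^*)^{\T}\sum_i (X_i - \bar{X})\epsilon_i$ controlled via a conditional second-moment bound on $\bar{\Delta}^* = N_1^{-1}\sum_j s_j \Delta_j$) gives $\sqrt{N_1}\,\hat{\tau}^{\reg}_* = N_1^{-1/2}\sum_j s_j D_j + o_P(1)$, with $s_j$ i.i.d.\ Rademacher signs. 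Both statistics thus collapse to the paired-difference ``oracles'' $N_1^{-1}\sum_j D_j$ and $N_1^{-1}\sum_j s_j D_j$.

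The final step is a pair of central limit theorems with a matching variance. Conditional on $\{(X_i, Z_i)\}$, the $D_j$ are independent and mean-zero, so Lindeberg's CLT (using the four moments of \ref{item:four_moments}) gives $N_1^{-1/2}\sum_j D_j \Rightarrow N(0, V)$ with $V = \lim N_1^{-1}\sum_j \Var(D_j \mid X, Z)$. Conditional on the full data $\mathcal{D}_n$, the weighted Rademacher sum $N_1^{-1/2}\sum_j s_j D_j$ satisfies a conditional Lindeberg CLT and converges to $N(0, V_*)$ with $V_* = \lim N_1^{-1}\sum_j D_j^2$. The crucial point — and where the sharp null is indispensable — is that $\E(D_j \mid X, Z) = 0$ because there is no within-pair treatment effect, so $\E(D_j^2 \mid X, Z) = \Var(D_j \mid X, Z) = \sigma^2(X_{a_j}) + \sigma^2(X_{b_j})$. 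Therefore $N_1^{-1}\sum_j D_j^2$ is a consistent estimator of $N_1^{-1}\sum_j \Var(D_j \mid X, Z)$ (its conditional mean matches, and its conditional variance is $O(1/N_1)$ by four moments), so $V_* = V$ \emph{even under heteroskedasticity}. Since the sampling law of $\sqrt{N_1}\,\hat{\tau}^{\reg}$ and the randomization law of $\sqrt{N_1}\,\hat{\tau}^{\reg}_*$ both converge to the same $N(0,V)$ with $V > 0$ and continuous limiting CDF, the p-value $\hat{p}^{\reg}$ converges in distribution to $\textup{Uniform}(0,1)$, which yields $\P(\hat{p}^{\reg} < \alpha) \to \alpha$ and hence asymptotic validity.

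I expect the main obstacle to be the matching-consistency input $N_1^{-1}\sum_j \|\Delta_j\|^2 = o_P(1)$ for optimal Mahalanobis matching under condition \ref{item:no_large_propensities_ols}, together with rigorously propagating it through the two linearizations. By Cauchy--Schwarz this average squared discrepancy controls both $\|\bar{\Delta}\|$ and, conditionally, $\E(\|\bar{\Delta}^*\|^2 \mid \mathcal{D}_n)$, so it is the single quantity that simultaneously kills the slope $\hat{c}$, the denominator correction, and the randomization-side error term; this is exactly the place where ruling out propensity scores above one-half (which would otherwise leave the $X$--$Z$ correlation in $\mathcal{M}$ nonvanishing, reproducing the variance mismatch of \Cref{example:regression_adjusted_test_statistics}) does its work. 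I would either invoke a matching-discrepancy lemma established elsewhere in the Supplementary Materials or prove the weak bound $o_P(1)$ directly, noting that for fixed $d$ it is far less stringent than the $o_P(n^{-1/2})$ balance required by \Cref{proposition:balance_requirement}, which is why correct specification buys validity in higher dimensions than the unadjusted test tolerates.
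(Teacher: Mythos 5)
Your proposal is correct and follows essentially the same route as the paper's proof: the Frisch--Waugh--Lovell reduction of $\hat{\tau}^{\reg}$ and $\hat{\tau}^{\reg}_*$ to (signed) paired residual differences is the same linearization the paper obtains via the partitioned inverse of the matched design matrix (Lemmas \ref{lemma:paired_expansion} and \ref{lemma:paired_ancova_critical_value}), the key input you flag --- that the average matching discrepancy vanishes when $P\{e(X)<0.5\}=1$, killing the $X$--$Z$ correlation in $\mathcal{M}$ --- is exactly \Cref{proposition:mahalanobis_discrepancy_vanishes} and \Cref{corollary:paired_discrepancy_Lp}, and your variance-matching argument under the sharp null mirrors \Cref{lemma:sigma_star_limit}. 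The only differences are cosmetic (Lindeberg in place of the paper's conditional Berry--Esseen argument).
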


By appropriately modifying our proofs, the same conclusion can be extended to other correctly-specified parametric regression models, e.g. logistic regression.  However, under assumption \ref{item:correctly_specified_parametric_model}, it is not necessary to use randomization inference for hypothesis testing.  Model-based ``sandwich" standard errors would work just as well if not better, since they remain valid even when the propensity score condition \ref{item:no_large_propensities_ols} fails.  Meanwhile, \Cref{example:regression_adjusted_test_statistics} shows that Fisher's randomization test may be invalid when large propensity scores are present.

\begin{remark}[Randomization tests vs. randomization inference] \label{remark:tests_vs_inference}
Although the results in this section provide some justification for randomization tests, the justifications are not truly design-based.  The key principle of design-based randomization inference is to base probability statements on the conditional randomness in $Z_i$ given everything else.  However, as \cite{pimentel2022} and \cite{pashley_etal2021} point out, the conditional distribution of $(Z_i)_{i \in \mathcal{M}}$ given $\mathcal{M}$ and $\{ (X_i, Y_i(0), Y_i(1) \}_{i \leq n}$ is highly intractable unless all matches are exact.  To get around this, the proofs of Propositions \ref{proposition:dm_sufficient} and \ref{proposition:ols_sufficient} actually \emph{condition} on treatments and use the outcome as the source of randomness.  In other words, the justification has nothing to do with design.
\end{remark}

% ===========================================================
\section{An alternative role for matching} \label{section:alternative_role}

\subsection{Combine matching with parametric outcome modeling}

In the second half of this article, we recommend an alternative framework for inference after matching.  Specifically, we suggest viewing matching as a pre-processing step for a conventional statistical analysis based on parametric outcome models.  This type of post-matching analysis has been recommended by \cite{ho_imai_king_stuart_2007} and \cite{stuart2010}.

The leading example we have in mind is an analysis that fits the linear regression model (\ref{eq:linear_regression}) in the matched sample, interrogates the linearity assumption using specification tests or diagnostic plots, and reports inferential summaries for the coefficient $\hat{\tau}^{\reg}$ based on heteroskedasticity-consistent robust standard errors.
\begin{align}
(\hat{\tau}^{\reg}, \hat{\gamma}, \hat{\beta}) &= \argmin_{(\tau, \gamma, \beta)} \sum_{i \in \mathcal{M}} (Y_i - \gamma - \tau Z_i - \beta^{\top} X_i)^2 \label{eq:linear_regression}
\end{align}

Let us give some motivation for this approach.  It is well-known that accurate outcome modeling improves a matched analysis by cleaning up the residual imbalances that remain after matching \cite{rubin1973regression, rubin1979regression}.  It may be less clear what role matching plays in improving an outcome analysis that already makes parametric assumptions.  For example, \cite{hill2002} asks: ``If the response surfaces are linear why wouldn't standard regression work just as well for covariance adjustment, even perhaps more efficiently than [matching methods]?"

Our main contribution in this half of the article is to prove that matching improves parametric outcome analysis by reducing sensitivity to model selection and misspecification.  Indeed, we regard this as the primary statistical role of matching.  Prior empirical work has made the same point, using a combination of simulations and informal arguments.  However, our formal analysis leads to additional insights.  For example, we show that more robustness is gained from matching with replacement than from optimal pair matching.

\subsection{The local misspecification framework}

To study the role of misspecification, we consider a class of nonlinear models defined through small perturbations of a baseline linear model. 

Let $P \in H_0$ be some distribution satisfying the linear outcome model $\E(Y \mid X, Z) = \gamma + \beta^{\top} X$.  For any bounded nonlinear function $g : \R^d \rightarrow \R^k$, let $P_{h,g}$ be the distribution of the vector $\{ X, Y(0) + h^{\top} g(X), Y(1) + h^{\top} g(X), Z \}$ when $\{ X, Y(0), Y(1), Z \} \sim P$.  Since we have simply added the same nonlinearity to both potential outcomes, Fisher's sharp null hypothesis (\ref{fishers_null}) continues to hold under $P_{h,g}$.  However, the outcome model now contains a nonlinear term:
\begin{align}
\E_{h,g}(Y \mid X, Z) = \gamma + \beta^{\top} X + h^{\top} g(X).
\end{align}

A sequence of models $\{ P_{h_n,g} \}_{n \geq 1}$ is called \emph{locally misspecified} if the coefficient $h \equiv h_n$ tends to zero with the sample size at rate $n^{-1/2}$.  This scaling is meant to model problems where nonlinearities are large enough to affect inference, but not so large that they can easily be caught.  In smooth models, no specification test can consistently detect the nonlinearity in a locally misspecified sequence \cite[Lemma A.1]{leeb_potscher2006}.  

We say that a sequence of p-values is \emph{robust to local misspecification} near $P \in H_0$ if it remains asymptotically valid even when the linear model is locally misspecified.  A more formal definition is the following.

\begin{definition}[Locally robust p-values] \label{definition:locally_robust_pvalue}
A sequence of p-values $\hat{p}_n$ is called \emph{robust to local misspecification} near $P \in H_0$ if (\ref{eq:local_size}) holds for every radius $C < \infty$ and every bounded nonlinear function $g$.
\begin{align}
\limsup_{n \rightarrow \infty} \sup_{\| h \| \leq C n^{-1/2}} P_{h_n,g}^n ( \hat{p}_n < \alpha) \leq \alpha \quad \text{for all } \alpha \in (0, 1). \label{eq:local_size}
\end{align}
\end{definition}

Outside of exceptional cases, p-values based on parametric outcome models fit to the full unmatched sample are not robust to local misspecification.  In contrast, p-values based on best-performing semiparametric methods \cite{vanderLaan_Rose2011, robins_etal2008} typically achieve guarantees far stronger than (\ref{eq:local_size}).  We will see that parametric tests gain some of the robustness of semiparametric methods when the data is first pre-processed using matching.

\subsection{Matching protects against local model misspecification}

The first main result of this section shows that when $P \{ e(X) < 0.5 \} = 1$, model-based p-values computed after optimal Mahalanobis matching remain valid even if the model is locally misspecified.

\begin{theorem}[Matching confers local robustness] \label{theorem:paired_local_robustness}
Let $P \in H_0$ satisfy Assumption \ref{assumption:primitives}, $P \{ e(X) < 0.5 \} = 1$, and the linear outcome model $\E(Y \mid X, Z) = \gamma + \beta^{\top} X$.  Let $\hat{p}^{\hc}$ be the (one- or two-sided) robust standard error p-value for testing the coefficient $\hat{\tau}^{\reg}$ in the regression (\ref{eq:linear_regression}).  Then $\hat{p}^{\hc}$ is robust to local misspecification near $P$ in the sense of \Cref{definition:locally_robust_pvalue}.
\end{theorem}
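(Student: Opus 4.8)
The plan is to reduce the theorem to a covariate-balance statement about the matched sample and then exploit the good matches guaranteed by the overlap condition $P\{e(X) < 0.5\} = 1$. Throughout I would condition on $(X_i, Z_i)_{i \leq n}$, which fixes the match $\mathcal{M}$ and the residualized treatments, and treat the outcome noise as the only source of randomness --- exactly the viewpoint flagged in \Cref{remark:tests_vs_inference}. First I would write out the data-generating equation under $H_0$ and $P_{h,g}$: since $Y_i = Y_i(0) + h^{\top} g(X_i)$, we have $Y_i = \gamma + \beta^{\top} X_i + h^{\top} g(X_i) + u_i$ with $u_i := Y_i(0) - \E(Y_i(0) \mid X_i)$ mean-zero given $X_i$ and, crucially, \emph{not} depending on $h$ or $g$. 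Applying Frisch--Waugh--Lovell to the regression (\ref{eq:linear_regression}) on $\mathcal{M}$, and letting $\tilde{Z}_i$ be the residual of $Z_i$ on $(1, X_i)$ within $\mathcal{M}$, the fitted coefficient decomposes as $\hat{\tau}^{\reg} = h^{\top} S_n + W_n$, where $S_n = \sum_{i \in \mathcal{M}} \tilde{Z}_i g(X_i) / \sum_{i \in \mathcal{M}} \tilde{Z}_i^2$ is the misspecification-bias direction and $W_n = \sum_{i \in \mathcal{M}} \tilde{Z}_i u_i / \sum_{i \in \mathcal{M}} \tilde{Z}_i^2$ is a pure noise term.

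Next I would handle the two pieces separately. For the noise term, conditional on $(X, Z)$ the quantity $W_n$ is a weighted sum of independent mean-zero $u_i$, so a Lyapunov central limit theorem --- using the more-than-four moments in \Cref{assumption:primitives} --- gives $W_n / \hat{\mathrm{se}} \to N(0,1)$, with the heteroskedasticity-consistent $\hat{\mathrm{se}}$ consistent for the conditional standard deviation; the local perturbation shifts each regression residual by only $h^{\top} g^{\perp}(X_i) = O(n^{-1/2})$, so it leaves $\hat{\mathrm{se}}$ asymptotically unchanged. This is just the correctly specified base case. For the bias term, because $\|h\| \leq C n^{-1/2}$, $N_1 \asymp n$, and $\hat{\mathrm{se}} \asymp n^{-1/2}$ (as $\tfrac{1}{2N_1}\sum \tilde{Z}_i^2 \to 1/4$), the worst-case standardized bias obeys $\sup_{\|h\| \leq C n^{-1/2}} |h^{\top} S_n| / \hat{\mathrm{se}} \lesssim \|S_n\|$. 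It therefore suffices to prove $\|S_n\| \to_P 0$, after which $T_n = \hat{\tau}^{\reg}/\hat{\mathrm{se}} \to N(0,1)$ uniformly over the local family and $\hat{p}^{\hc}$ is asymptotically uniform.

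The crux is thus showing $\|S_n\| \to_P 0$, and here the matched-pair structure does the work. Since $\mathcal{M}$ contains $N_1$ treated units and their $N_1$ matched controls with $\bar{Z}_{\mathcal{M}} = 1/2$, a direct computation shows that the matched-sample covariance of $Z$ with any $f(X)$ equals $\tfrac{1}{4}$ times the average within-pair difference $\frac{1}{N_1}\sum_{Z_i = 1}\{ f(X_i) - f(X_{m(i)})\}$. Residualizing $Z$ on $(1, X)$, this yields $\tfrac{1}{2N_1}\sum_{i \in \mathcal{M}} \tilde{Z}_i g(X_i) = \tfrac{1}{4}\,\overline{\Delta g} - b^{\top} \widehat{\Cov}_{\mathcal{M}}(X, g)$, where $\overline{\Delta g}$ is the average within-pair difference of $g$ and $b = \widehat{\Var}_{\mathcal{M}}(X)^{-1}\cdot\tfrac{1}{4}\overline{\Delta X}$ is the coefficient of $Z$ on $X$. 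By \Cref{assumption:primitives}, $\widehat{\Var}_{\mathcal{M}}(X)^{-1}$ and $\widehat{\Cov}_{\mathcal{M}}(X,g)$ are $O_P(1)$, so $\|S_n\| \to_P 0$ as soon as the average within-pair matching discrepancies $\overline{\Delta X}$ and $\overline{\Delta g}$ vanish.

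The remaining and hardest step is the matching lemma: that optimal Mahalanobis matching forces $\frac{1}{N_1}\sum_{Z_i = 1}\|X_i - X_{m(i)}\| \to_P 0$, and likewise for $g$. This is precisely where $P\{e(X) < 0.5\} = 1$ enters --- it guarantees a surplus of controls in every region of covariate space, so optimal matching eventually produces close matches for all but a vanishing fraction of treated units. I expect this to be the main obstacle, for two reasons: controlling the \emph{average} (not merely the typical) discrepancy of the data-dependent optimal assignment requires care, and $g$ is only assumed bounded, so pointwise smallness of $g(X_i) - g(X_{m(i)})$ is unavailable. I would address the latter by passing from vanishing average distance to weak convergence of the treated and matched-control empirical measures toward a common limit, then approximating the bounded $g$ by continuous functions using the population density supplied by the overlap and nonsingularity conditions. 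Combining the matching lemma with the noise-term analysis gives the uniform convergence $T_n \to N(0,1)$ needed to conclude that $\hat{p}^{\hc}$ is robust to local misspecification in the sense of \Cref{definition:locally_robust_pvalue}.
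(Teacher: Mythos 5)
Your proposal is correct and follows essentially the same route as the paper: condition on $\{(X_i,Z_i)\}$, observe that the within-pair balance identity makes the residualized treatment asymptotically orthogonal to $g(X)$ at the $\sqrt{N_1}$ scale, reduce everything to the vanishing of the average matching discrepancy under $P\{e(X)<0.5\}=1$ (proved by a surplus-of-controls partition argument plus optimality), and finish with a conditional CLT and consistency of the robust standard error. The only cosmetic differences are your use of Frisch--Waugh--Lovell in place of the paper's partitioned-matrix-inverse expansion $e_1^{\top}\mathbf{B}^{-1}\mathbf{r}$, and your plan to handle merely bounded $g$ via weak convergence of the treated and matched-control empirical measures where the paper instead approximates $g$ in $L^2$ by Lipschitz functions and interpolates the squared discrepancy using the moment bounds on $\|X\|$.
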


The intuitive explanation for this robustness is that regression after matching combines two complementary methods of bias reduction.  The first is the nearly-correctly-specified outcome model, which eliminates most of the bias and gets us within an $O(n^{-1/2})$ neighborhood of the correct answer.  From there, the nonparametric consistency of matching kicks in to handle the residual nonlinearity.  This is conceptually similar to the doubly-robust estimator of \cite{robins_rotnitzky_zhao1994}, which combines an outcome model and a propensity model to gain robustness and efficiency.  However, regression after optimal Mahalanobis matching does not produce a consistent estimate of the propensity score, so inferences based on \Cref{theorem:paired_local_robustness} are not semiparametrically efficient.  See \cite{lin2021densityratio} for related discussion.

Unfortunately, the conclusion of \Cref{theorem:paired_local_robustness} does not extend to populations where some units have propensity scores above one-half.  Pair matching may still improve robustness in such problems, but it will not protect against all directions of local misspecification.  The reason is that pair matching runs out of controls in some parts of the covariate space, costing one of the bias reduction methods used in \Cref{theorem:paired_local_robustness}.

Our next result shows that this problem can be avoided by matching \emph{with replacement}:
\begin{align}
m_r(i) \in \argmin_{j \, : \, Z_j = 0} (X_i - X_j)^{\top} \hat{\mathbf{\Sigma}}^{-1} (X_i - X_j). \label{eq:matching_with_replacement}
\end{align} 
To account for the fact that the same control unit may be matched more than once, we also replace the ordinary least-squares regression (\ref{eq:linear_regression}) by a weighted least-squares regression with multiplicity-counting weights. 

\begin{theorem}[Replacing controls helps] \label{theorem:replaced_robustness}
Let $P \in H_0$ satisfy Assumption \ref{assumption:primitives} and the linear outcome model.  Let $\mathcal{M}_r$ be the set of units matched by the scheme (\ref{eq:matching_with_replacement}).  Let $W_i = 1$ if observation $i$ is treated and otherwise set $W_i = \sum_{j = 1}^n Z_j \mathbf{1} \{ m_r(j) = i \}$.  Let $\hat{p}^{\hc}$ be the one- or two-sided robust standard error p-value for testing the coefficient $\hat{\tau}^{\reg}$ in the weighted regression (\ref{eq:weighted_regression}).
\begin{align}
( \hat{\tau}^{\reg}, \hat{\gamma}, \hat{\beta}) = \argmin_{(\tau, \gamma, \beta)} \sum_{i \in \mathcal{M}_r} W_i (Y_i - \gamma - \tau Z_i - \beta^{\top} X_i)^2. \label{eq:weighted_regression}
\end{align}
Then $\hat{p}^{\hc}$ is robust to local misspecification in the sense of \Cref{definition:locally_robust_pvalue}.
\end{theorem}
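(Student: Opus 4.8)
The plan is to condition on the design $\mathcal{G}_n := \sigma(\{(X_i, Z_i)\}_{i \le n})$ and treat the local nonlinearity as a deterministic mean shift. Under $P_{h_n,g}$ I may write $Y_i = \gamma + \beta^\T X_i + h_n^\T g(X_i) + \epsilon_i$, where $\epsilon_i := Y_i - \E_P(Y \mid X_i)$ satisfies $\E(\epsilon_i \mid X_i, Z_i) = 0$ and has the same conditional law as under $P$; only the mean is perturbed. Applying the Frisch--Waugh--Lovell theorem to the weighted regression (\ref{eq:weighted_regression}), and writing $\tilde Z_i$ for the weighted residual of $Z_i$ on $(1, X_i)$, gives $\hat\tau^{\reg} = B_n + S_n$ with
\[
B_n = \frac{\sum_i W_i \tilde Z_i\, h_n^\T g(X_i)}{\sum_i W_i \tilde Z_i^2}, \qquad S_n = \frac{\sum_i W_i \tilde Z_i \epsilon_i}{\sum_i W_i \tilde Z_i^2}.
\]
Here $B_n$ is the conditional bias and $S_n$ is conditionally mean-zero noise. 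I would establish the theorem by showing: (i) $B_n = o_P(n^{-1/2})$ uniformly over $\| h_n \| \le C n^{-1/2}$; (ii) the studentized noise $S_n / \sqrt{\hat V}$ converges in distribution to $N(0,1)$; and (iii) the robust variance estimator satisfies $\hat V / V_0 \to 1$, where $V_0 = \Var(S_n \mid \mathcal{G}_n)$. Together these give $\hat\tau^{\reg} / \sqrt{\hat V} \to N(0,1)$ in distribution under local misspecification, from which asymptotic validity of the (one- or two-sided) normal p-value follows.

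The key to (i) is an exact identity expressing the bias as a sum of nonlinear matching discrepancies. Let $g_N$ be the weighted nonlinear residual of $g$ after projecting onto $(1, X)$, so $g(x) = \hat a + \hat b^\T x + g_N(x)$ with $\sum_i W_i g_N(X_i) = 0$ and $\sum_i W_i X_i g_N(X_i) = \mathbf{0}$. Because both $\tilde Z$ and $g_N$ are weighted-orthogonal to $\mathrm{span}(1, X)$, the numerator of $B_n$ collapses to $h_n^\T \sum_{Z_i = 1} g_N(X_i)$; and the weight structure $W_i = \sum_j Z_j \mathbf{1}\{ m_r(j) = i\}$ on controls, combined with $\sum_i W_i g_N(X_i) = 0$, yields the exact identity $\sum_{Z_i = 1} g_N(X_i) = \tfrac12 \sum_{Z_i = 1}\{ g_N(X_i) - g_N(X_{m_r(i)})\}$. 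Substituting the definition of $g_N$ rewrites this as $\tfrac12 \sum_{Z_i = 1}\{ g(X_i) - g(X_{m_r(i)})\} - \tfrac12 \hat b^\T \sum_{Z_i = 1}(X_i - X_{m_r(i)})$. I would then invoke the property that matching \emph{with replacement} drives the average matching discrepancy to zero, $N_1^{-1} \sum_{Z_i = 1}\| X_i - X_{m_r(i)}\| = o_P(1)$, which holds under overlap and Assumption~\ref{assumption:primitives}\ref{item:four_moments} because a nearby control exists for every treated unit regardless of the local treated-to-control ratio --- precisely the property that pair matching forfeits when $e(x) > 1/2$, and the reason the condition $P\{ e(X) < 0.5\} = 1$ can be dropped here. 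Since $\hat b = O_P(1)$ and $g$ is (uniformly) continuous on the relevant support, both sums are $o_P(N_1)$, while the denominator obeys $\sum_i W_i \tilde Z_i^2 \asymp n$ by a non-degeneracy argument exploiting the balanced weights (treated and matched-control carry equal total weight $N_1$, so $Z$ cannot be linearly predicted from $X$ in the matched sample). Multiplying by $\| h_n \| = O(n^{-1/2})$ gives (i).

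For (ii)--(iii), conditionally on $\mathcal{G}_n$ the quantity $S_n$ is a weighted sum of independent mean-zero $\epsilon_i$, so I would apply a conditional Lindeberg--Feller central limit theorem, verifying the negligibility condition $\max_i W_i^2 \tilde Z_i^2 / \sum_i W_i^2 \tilde Z_i^2 \to 0$ and uniform integrability from the moment assumption. The robust standard error equals $\hat V = \sum_i W_i^2 \tilde Z_i^2 \hat\epsilon_i^2 / (\sum_i W_i \tilde Z_i^2)^2$ up to an asymptotically negligible finite-sample correction; since the fitted coefficients converge to $(\gamma, 0, \beta)$ at rate $O_P(n^{-1/2})$ and the nonlinearity contributes only $O(n^{-1/2})$, the residuals satisfy $\hat\epsilon_i = \epsilon_i + (\text{uniformly small})$, which I would feed into a weighted law of large numbers to conclude $\hat V / V_0 \to 1$. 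Combining the pieces, $B_n / \sqrt{\hat V} = o_P(1)$ and $S_n / \sqrt{\hat V} \to N(0,1)$ in distribution, giving the claim; uniformity over $\| h_n \| \le C n^{-1/2}$ follows because every design-side bound is free of $h$ and the $h$-dependence enters only through the deterministic shift already bounded in (i).

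I expect the main obstacle to be the weighted limit theorems in (ii)--(iii). The matching multiplicities $W_i$ are heavy-tailed and strongly dependent across units, and $\tilde Z_i$ depends on a global weighted fit, so checking the Lindeberg condition and the variance law of large numbers requires moment control on the match counts (of the Abadie--Imbens type) together with the balanced-weight non-degeneracy bound $\sum_i W_i \tilde Z_i^2 \asymp n$. A secondary technical point is extending the discrepancy bound in (i) beyond continuous $g$, which I would handle by approximating $g$ by a continuous function and controlling the remainder using boundedness of $g$ and the vanishing measure of its discontinuity set.
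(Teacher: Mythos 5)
Your proposal is correct and follows essentially the same route as the paper: reduce local misspecification to a deterministic mean shift in $Y$ conditional on the design, kill the bias using the fact that matching with replacement balances arbitrary square-integrable functions of $X$ (which rests on moment bounds for the match multiplicities), and finish with a conditional CLT for the weighted noise sum plus consistency of the weighted robust standard error. Your Frisch--Waugh--Lovell decomposition and the exact identity $\sum_{Z_i=1}g_N(X_i)=\tfrac12\sum_{Z_i=1}\{g_N(X_i)-g_N(X_{m_r(i)})\}$ are a cleaner finite-sample packaging of what the paper obtains asymptotically via the block-diagonal limit of the weighted design matrix, but every substantive ingredient --- including the match-count moment control you correctly flag as the main obstacle --- coincides with the paper's supporting lemmas on matching with replacement, the weighted regression expansion, and the weighted heteroskedasticity-consistent variance.
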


The reason matching with replacement helps is that it ensures no region of the covariate space will run out of untreated units.  Therefore, the bias-correction opportunity from matching is present even when $P \{ e(X) < 0.5 \} \neq 1$.  Based on this result, we generally recommend matching with replacement over pair matching unless there is good reason to believe that no units have propensity scores larger than one-half.  

The combination of matching with replacement and weighted linear regression has previously been implemented by \cite{dehejia_wahba2002}.  In the \texttt{R} programming language, matching with replacement is implemented by default in \cite{sekhon2011}'s \texttt{Matching} package.  

\subsection{Matching reduces model dependence}

Finally, we show that matching makes parametric analyses less sensitive to the exact model specification.  This gives rigorous support to the main claim in \cite{ho_imai_king_stuart_2007}.

Let $P_{h_n,g}$ be a locally misspecified sequence centered around a baseline linear model $P \in H_0$.  Thus, when the sample size is $n$, the true regression model takes the form:
\begin{align}
\E_{h_n,g}(Y \mid X, Z) = \gamma + \beta^{\top} X + h_n^{\top} g(X). \label{eq:local_outcome_model}
\end{align}
for some sequence $h_n = O(n^{-1/2})$.  We further assume that $\Var[ \{ X, g(X) \} \mid Z = 1] \succ \mathbf{0}$, so that $g$ is genuinely nonlinear in places with treated observations.

Consider three different modeling strategies that might be used to analyze the matched data:
\begin{enumerate}[itemsep=-0.75ex]
    \item \emph{Baseline}.  The first procedure fits a regression model that controls for $X$ linearly.  In $\texttt{R}$ and $\texttt{S}$ formula notation, this procedure fits the model $Y \sim 1 + Z + X$.
    \item \emph{Saturated}.  The second procedure fits a model that correctly includes the nonlinearity in \Cref{eq:local_outcome_model}, $Y \sim 1 + Z + X + g(X)$. 
    \item \emph{Model selector}.  The final procedure fits the saturated model, drops insignificant components of $g$, and then makes inferences as if the chosen model were pre-specified.  We make no assumptions on what significance tests are used in the model pruning step.
\end{enumerate}
After model specification, each procedure produces a p-value based on \cite{white1980}'s heteroskedasticity-consistent standard errors in their chosen models.  We denote these by $\hat{p}^{\hc 1}, \hat{p}^{\hc 2}$ and $\hat{p}^{\hc 3}$, respectively. 

In the full unmatched dataset, we would only expect the p-value based on the saturated model to perform well.  After all, the baseline model is misspecified and the model selector's p-value is rendered irregular by model selection.  However, the story is entirely different in the matched sample.

\begin{theorem}[Matching reduces model dependence] \label{theorem:model_dependence}
Consider the setting above.  If unweighted regressions based on pair matching are used, assume that $P$ satisfies the conditions of Theorem 1.  If weighted regressions based on matching with replacement are used, assume only that $P$ satisfies the conditions of Theorem 2.  Let $\phi^{(k)} = \mathbb{I} \{ \hat{p}^{\hc k} < \alpha \}$ denote the accept/reject decision based on $\hat{p}^{\hc k}$.  Then we have:
\begin{align*}
\lim_{n \rightarrow \infty} P_{h_n,g}^n ( \phi^{(1)} = \phi^{(2)} = \phi^{(3)}) = 1.
\end{align*}
In words, all three models yield the same conclusion with high probability.
\end{theorem}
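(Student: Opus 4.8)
The plan is to reduce the statement to showing that the three (one- or two-sided) $t$-statistics $T^{(k)} = \hat\tau^{(k)}/\hat{se}^{(k)}$ underlying the $\hat p^{\hc k}$ differ only by $o_P(1)$, and then to invoke an anti-concentration argument. Each HC p-value is a fixed monotone transformation of $|T^{(k)}|$ (or of $T^{(k)}$ in the one-sided case) compared against a critical value that converges to the standard-normal quantile $z$ (the difference in degrees of freedom across models being $O(1)$, hence asymptotically negligible), so $\phi^{(k)} = \mathbb{I}\{ |T^{(k)}| > z \}$ up to vanishing threshold perturbations. The saturated model is correctly specified for every $n$ and the true effect is zero under the sharp null, so standard OLS/White theory gives $T^{(2)} \Rightarrow N(0,1)$, a law with no atom at $\pm z$. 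Granting $T^{(1)} = T^{(2)} + o_P(1)$ and $T^{(3)} = T^{(2)} + o_P(1)$, one has for every $\delta > 0$
\[
P\bigl(\phi^{(1)} \ne \phi^{(2)} \text{ or } \phi^{(3)} \ne \phi^{(2)}\bigr) \;\le\; P\bigl( \textstyle\max_k |T^{(k)} - T^{(2)}| > \delta \bigr) \;+\; P\bigl( \big| |T^{(2)}| - z \big| \le \delta \bigr),
\]
and letting $n \to \infty$ and then $\delta \to 0$ sends the right-hand side to zero. So the real work is the $o_P(1)$ equivalence of the $t$-statistics.

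For the point estimates I would index all three procedures by subsets: for $S \subseteq \{1,\dots,k\}$, let $\hat\tau^{(S)}$ be the treatment coefficient in $Y \sim 1 + Z + X + g_S(X)$, so the baseline is $S = \emptyset$, the saturated model is $S = \{1,\dots,k\}$, and the selector returns some data-dependent $\hat S$. Relating any such model to the saturated one through the Frisch--Waugh--Lovell / omitted-variable identity gives
\[
\hat\tau^{(S)} = \hat\tau^{(2)} + \sum_{j \notin S} \hat\pi_j \, \hat h_j ,
\]
where $\hat h_j$ is the coefficient on $g_j(X)$ in the saturated fit and $\hat\pi_j$ is the coefficient on $Z$ in the auxiliary regression of $g_j(X)$ on the model-$S$ regressors $(1, Z, X, g_S)$. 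Two facts drive this difference to $o_P(n^{-1/2})$. First, $\hat h_j = h_{n,j} + O_P(n^{-1/2}) = O_P(n^{-1/2})$, since the saturated model is regular (using the nonsingularity of $\Var[\{X, g(X)\} \mid Z=1]$ and overlap) and the truth is locally misspecified with $h_n = O(n^{-1/2})$. Second, each partial imbalance $\hat\pi_j = o_P(1)$: this is the matching step. Under the conditions of Theorem 1 (pair matching, $P\{e(X)<0.5\}=1$) or Theorem 2 (matching with replacement), matching discrepancies vanish everywhere, so each treated unit's match has asymptotically identical covariates; hence every bounded continuous function of $X$ — in particular each $g_j$ — is asymptotically balanced between the groups, even after linear adjustment for $X$ and $g_S$. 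This is the balancing property already exploited in proving Theorems 1 and 2, and I would reuse it. A parallel argument, resting again on near-orthogonality of the balanced extra regressors to the residualized treatment and on the fact that the competing models' fitted values differ by only $O_P(n^{-1/2})$ per observation, shows $\hat{se}^{(S)}/\hat{se}^{(2)} \to 1$. Together these yield $T^{(S)} = T^{(2)} + o_P(1)$ for each fixed $S$.

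The model-selection procedure is then tamed by the finiteness of the model space. Because $k$ is fixed there are only $2^k$ candidate submodels, so the per-model bounds can be made uniform: $\max_{S \subseteq \{1,\dots,k\}} |T^{(S)} - T^{(2)}| = o_P(1)$. Whatever random model $\hat S$ the selector returns, $|T^{(\hat S)} - T^{(2)}| \le \max_S |T^{(S)} - T^{(2)}| = o_P(1)$, so $T^{(3)} = T^{(2)} + o_P(1)$; this is exactly what renders the irregular sampling distribution of $\hat S$ irrelevant and lets us bypass the usual pathologies of post-selection inference. Combined with $T^{(1)} = T^{(\emptyset)} = T^{(2)} + o_P(1)$, this supplies the hypotheses of the anti-concentration argument and closes the proof.

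I expect the main obstacle to be making the balance claim $\hat\pi_j = o_P(1)$, and its standard-error analogue, fully rigorous: one must translate ``matching discrepancies vanish'' into control of the partial imbalance of the nonlinear functions $g_j$ after adjusting for $X$ and for the other retained terms, and must keep the residualized-treatment denominators bounded away from zero (which follows from overlap and the nonsingularity assumption). Once the asymptotic balance of bounded continuous functions of $X$ is established — as it essentially is in the proofs of Theorems 1 and 2 — the remainder is bookkeeping via Frisch--Waugh--Lovell and a union bound over the finitely many submodels.
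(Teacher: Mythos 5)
Your proposal is correct and shares the paper's overall architecture: index the candidate models by subsets $S \subseteq [k]$, establish asymptotic equivalence of all $2^k$ $t$-statistics uniformly (which automatically tames the data-dependent selector, since its choice always lies in this finite family), and finish with an anti-concentration argument at the critical value. The difference lies in how the equivalence of $t$-statistics is proved. The paper re-derives, for every $S$, the same asymptotic linear expansion $\tilde t_S = \sum_{i \in \M}(2Z_i - 1)\epsilon_i / s\sqrt{N_1} + o_P(1)$ from its Lemmas on the design matrix and the score vector (the same machinery behind Theorems 1 and 2), concludes that the vector of $t$-statistics converges jointly to a degenerate limit with all coordinates equal to one $N(0,1)$ variable, and applies the Portmanteau lemma to a set whose boundary is null under that limit. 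You instead compare each submodel to the saturated one through the Frisch--Waugh--Lovell / omitted-variable identity, factoring $\hat\tau^{(S)} - \hat\tau^{(2)} = \sum_{j \notin S}\hat\pi_j \hat h_j$ into a partial imbalance $\hat\pi_j = o_P(1)$ (supplied by the matching-induced orthogonality of $Z$ to any square-integrable function of $X$, i.e.\ the paper's balance corollaries) times a locally-small coefficient $\hat h_j = O_P(n^{-1/2})$, and then use an explicit $\delta$-window bound at the threshold in place of Portmanteau. Both routes rest on the identical substantive fact --- the asymptotic block-diagonality of the matched design matrix between $Z$ and $(1, X, g(X))$ --- and your version arguably makes the mechanism more transparent, since it isolates exactly which product must be $o_P(n^{-1/2})$; the paper's version buys slightly less bookkeeping because the general-$S$ expansion and the standard-error limit are already proved once in its supporting lemmas and can simply be cited. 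The one place where your sketch is lighter than the paper is the standard-error comparison: the claim that fitted values across models differ by $O_P(n^{-1/2})$ per observation is the right idea, but the clean way to close it (as the paper does) is to note that $e_1^\top \mathbf{B}_S^{-1} \to (4, \mathbf{0})$ for every $S$ by block-diagonality, so every sandwich estimator collapses to the same scalar $8\mathbf{A}_{11}$ plus $o_P(1)$.
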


This phenomenon may be understood as follows.  Under the assumptions of \Cref{theorem:model_dependence}, matching is able to balance any function of $X$ (i.e. $\sum_{Z_i = 1} \{ g(X_i) - g(X_{m(i)}) \} / N_1 \rightarrow 0$).  This makes $Z$ and $g(X)$ approximately orthogonal in the matched sample.  From standard least-squares theory, we know that the inclusion or exclusion of a nearly orthogonal predictor has very little impact on the other regression coefficients, explaining the similarity of the three p-values.  Note that this argument would not work in the full data, since there is no reason to expect $Z$ and $g(X)$ to be nearly orthogonal before matching.

\section*{Acknowledgement}

We are grateful to several anonymous referees and the associate editor for valuable comments and suggestions.  We thank John Cherian, Peng Ding, Colin Fogarty, Isaac Gibbs, Samir Khan, Sam Pimentel, Fredrik S\"avje, and seminar participants at various conferences for feedback.  We especially thank Sky Cao for help with the proof of Lemma B.4.

\bibliography{matchbib.bib}
\bibliographystyle{plain}

\newpage 
% ========================================================
\appendix
\section{Proofs of main results} \label{section:proofs}

In this appendix, we prove the claims made in the main body of the paper. 

% ========================================================
\subsection{Notation}

Throughout, we will use the following notations.
\begin{itemize}
\item \emph{Linear algebra}.  For any integer $k \geq 1$, $e_k$ is the $k$-th standard basis vector (where the ambient dimension will be clear from context) and $[k]$ is the set $\{ 1, \ldots, k \}$;  for two symmetric matrices $\mathbf{A}, \mathbf{B}$, we write $\mathbf{A} \succeq \mathbf{B}$ (resp. $\mathbf{A} \succ \mathbf{B}$) if $\mathbf{A} - \mathbf{B}$ is positive semidefinite (resp. positive definite).  We let $\lambda_{\min}(\mathbf{A})$ and $\lambda_{\max}(\mathbf{A})$ denote the smallest and largest eigenvalues of $\mathbf{A}$, respectively.  All vectors are interpreted as column vectors.  The concatenation of two vectors $v, u$ is denoted by $(v, u)$, which is still a column vector.  

\item \emph{Subvectors}.  For a vector $h \in R^k$ and a set $S \subseteq [k]$, $h_S \in \R^{|S|}$ is the subvector corresponding to entries in $S$.  For a function $g : \R^d \rightarrow \R^k$, $g_S : \R^d \rightarrow \R^{|S|}$ denotes the function $x \mapsto g(x)_S$.  When $S = \emptyset$, we abuse notation and let $h_S = 0 \in \R$ and $g_S \equiv 0$.

\item \emph{Weak convergence}.  Given a sequence of random distributions $\hat{Q}_n$, we say that $\hat{Q}_n$ converges \emph{weakly in probability} to $Q$ (denoted $\hat{Q}_n \rightsquigarrow_* Q$) if $\rho(\hat{Q}_n, Q) = o_P(1)$ for some distance $\rho$ metrizing weak convergence.  Often, we will apply this with $\hat{Q}_n$ the distribution of a statistic $H_*$ conditional on the original data $\mathcal{D}_n = \{ (X_i, Y_i, Z_i) \}_{i \leq n}$.  In such cases, we will also write $H_* \rightsquigarrow_* H$ where $H \sim Q$.

\item \emph{Conditional means and variances}.   We use $\mu_z(x)$ to denote the regression function $\E(Y \mid X = x, Z = z)$ and $\sigma_z^2(X)$ to denote the conditional variance $\Var(Y \mid X = x, Z = z)$.  If $P$ is also assumed to satisfy Fisher's null hypothesis, then we will typically drop the subscript $z$ and simply write $\mu(x), \sigma^2(x)$ since the conditional distribution of $Y$ given $(X, Z)$ will not depend on $Z$.  We also set $p = P(Z = 1)$.

\item \emph{Randomization critical values}.  For a statistic $\hat{\tau}_* \equiv \hat{\tau}( \{ (X_i, Y_i, Z_i^*) \}_{i \in \M})$, we define the randomization critical value $\hat{q}_{1 - \alpha}$ by $\hat{q}_{1 - \alpha} = \inf \{ t \in \R \, : \, \P( | \hat{\tau}_*| \leq t \mid \mathcal{D}_n) \geq 1 - \alpha \}$, where $\mathcal{D}_n = \{ (X_i, Y_i, Z_i) \}_{i \leq n}$ is the original data.  In the special cases of the difference-of-means and regression adjusted test statistics $\hat{\tau}^{\dm}, \hat{\tau}^{\reg}$, we denote the randomization quantiles by $\hat{q}_{1 - \alpha}^{\dm}$ and $\hat{q}_{1 - \alpha}^{\reg}$, respectively.
\end{itemize}

% ========================================================
\subsection{Proof of \Cref{example:one_covariate}} \label{section:one_covariate_proof}

\begin{proof}
Let $\hat{\Delta} = \sum_{Z_i = 1} \{ X_i - X_{m(i)} \} / N_1$.  We begin by showing that $\hat{\Delta} \xrightarrow{P} \Delta = \{ 2(\theta_0 + \theta_1) - 1 \}^3 / 3 \theta_1^2(2 \theta_0 + \theta_1) \neq 0$.  Essentially, this follows from \cite[Proposition 1]{savje2021}, which shows that $\E( \sum_{Z_i = 1} \{ X_i - X_{m(i)} \} / np) \rightarrow \Delta$.  However, a small extension of his analysis is needed to obtain convergence in probability.

We briefly sketch this extension, freely using the notation from the Supplement of \cite{savje2021}.  Begin by decomposing the covariate imbalance into the sum of several terms:
\begin{align*}
\hat{\Delta} &= \frac{1}{np} \sum_{Z_i = 1} \{ X_i - X_{m(i)} \} + o_P(1)\\
&= \underbrace{\frac{1}{np} \sum_{i \in \mathcal{C}_+} X_i - \frac{1}{np} \sum_{i \in \M_+^*} X_i}_a + \underbrace{\frac{1}{np} \sum_{i \in \mathcal{T}_-} X_i - \frac{1}{np} \sum_{i \in \M^*_-} X_i}_b + \underbrace{\frac{1}{np} \sum_{i \in \mathcal{T}_+} X_i - \frac{1}{np} \sum_{i \in \mathcal{C}_+} X_i}_c + o_P(1)
\end{align*}
The proof of \cite[Lemma S7]{savje2021} shows that $\E(|a|) \leq  2\E( | \mathcal{C}_+ \backslash \M_+^*| /np) + 2 \E\{( | \M_+^*| - | \mathcal{C}_+|) / np \}$, and the two terms in this upper bound are shown to vanish in Lemmas S8 and S11 of \cite{savje2021}, respectively.  Hence, $a = o_P(1)$ by Markov's inequality.  Similarly, the proof of \cite[Lemma S12]{savje2021} shows that $b \leq \sum_{i \in \mathcal{T}_e} | X_i - X_{m_e^*(i)}| / np$, and Lemmas S8, S11, S12 and S13 in \cite{savje2021} show that the expectation of this upper bound vanishes.  Hence, $b = o_P(1)$ as well.  Finally, term $c$ converges to $\Delta$ by the law of large numbers.  Since $P \{ e(X) \geq 0.5 \} > 0$, we must have $\theta_0 + \theta_1 \neq 0.5$ and so $\Delta \neq 0$.

Now, we are ready to show $\P( \hat{p}^{\dm} < \alpha) \rightarrow 1$ for every $\alpha \in (0, 1)$.  Let $\epsilon_i = Y_i - \beta_0 - \beta_1 X_i$.
Since $\hat{\tau}^{\dm} = \beta_1 \hat{\Delta} + \sum_{Z_i = 1} \{ \epsilon_i - \epsilon_{m(i)} \} / N_1 = \beta_1 \hat{\Delta} + o_P(1)$, the continuous mapping theorem implies $| \hat{\tau}^{\dm}|$ converges in probability to $| \beta_1 \Delta| > 0$.  Write $\P( \hat{p}^{\dm} < \alpha) = \P( | \hat{\tau}^{\dm}| \geq \hat{q}_{1 - \alpha}^{\dm})$.  Since $| \hat{\tau}^{\dm}|$ tends to a positive constant but $\hat{q}_{1 - \alpha}^{\dm}$ tends to zero at rate $O_P(n^{-1/2})$ (\Cref{lemma:dm_critical_value}), the chance that $| \hat{\tau}^{\dm}|$ exceeds $\hat{q}_{1 - \alpha}^{\dm}$ tends to one.
\end{proof}

% ========================================================
\subsection{Proof of \Cref{example:two_covariates}} \label{section:two_covariates_proof}

\begin{proof}
It suffices to show that $\hat{\Delta} = \sum_{Z_i = 1} \{ X_i - X_{m(i)} \} / N_1$ satisfies $\P(\| \hat{\Delta} \| > \eta) \rightarrow 1$ for some $\eta > 0$.  From there, the rest of the proof goes through analogously to the proof of \Cref{example:one_covariate}.  To accomplish this, we will show that $\theta^{\top} \hat{\Delta} = \sum_{Z_i = 1} \{ \mu(X_i) - \mu(X_{m(i)}) \} / N_1$ does not vanish.  Throughout, we will make use of the conditional distribution of $\mu(X)$:
\begin{align*}
f_{\mu}(h \mid Z = z) = \frac{2}{\pi} \sqrt{1 - h^2} \times \left\{ \begin{array}{ll}
1 + h &\text{if } z = 1\\
1 - 7h/13 &\text{if } z = 0
\end{array}
\right.
\end{align*}

The differnece $\theta^{\top} \hat{\Delta}$ can be split into two averages, $\sum_{Z_i = 1} \mu(X_i)/N_1$ and $\sum_{Z = 1} \mu(X_{m(i)})/ N_1$.  The former converges to $\int_{-1}^1 h f_{\mu}(h \mid 1) \, \d h = 0.25$ by the law of large numbers.  We will show that the latter average is asymptotically bounded away from $0.25$.  We do this by showing that, with high probability, the average of even the $N_1$ \emph{largest} values of $\mu(X_i)$ among untreated observations $i$ is still smaller than $0.25$.  Thus, no pair matching scheme will be able to balance $\mu(\cdot)$.  This is roughly the approach taken in \cite{rubin1976, rubin_thomas2000}.

Let $\mathcal{H}$ collect the indices of the $N_1$ untreated observations with the largest values of $\mu(X_i)$, with $\mathcal{H} = \emptyset$ when there are fewer than $N_1$ untreated observations.  We claim that, with high probability, $\mathcal{H}$ includes all untreated values of $\mu(X_i)$ larger than $-0.22$.  This follows from the following calculation:
\begin{align*}
\frac{1}{N_1} \sum_{Z_i = 0} \mathbf{1} \{ \mu(X_i) \geq -0.22 \} &= \frac{\sum_{Z_i = 0} \mathbf{1} \{ \theta^{\top} X_i \geq -0.22 \} / N_0}{N_1 / N_0} \xrightarrow{P} \frac{P \{ \mu(X) \geq -0.22 \mid Z = 0 \}}{0.35 / 0.65} < \frac{0.533}{0.538} < 1
\end{align*}
Thus, with probability approaching one $\sum_{Z_i = 0} \mathbf{1} \{ \mu(X_i) \geq -0.22 \} < N_1$ and so $\mathcal{H}$ must contain all untreated units with $\mu(X_i) \geq -0.22$.  The other units in $\mathcal{H}$ must have $\mu(X_i) < -0.22$, so we may write:
\begin{align*}
\frac{1}{N_1} \sum_{Z_i = 1} \mu(X_{m(i)}) &\leq \frac{1}{N_1} \sum_{i \in \mathcal{H}} \mu(X_i)\\
&< \frac{1}{N_1} \sum_{Z_i = 0} \mu(X_i) \mathbf{1} \{ \mu(X_i) \geq -0.22 \} \\
&= \frac{\sum_{i = 1}^n (1 - Z_i) \mu(X_i) \mathbf{1} \{ \mu(X_i) \geq -0.22 \} / n}{N_1 / n}\\
&\xrightarrow{P} \frac{0.65 \E( \mu(X) \mathbf{1} \{ \mu(X) \geq -0.22 \} \mid Z = 0)}{0.35}\\
&\approx 0.2386.
\end{align*}
Thus, with probability approaching one $\sum_{Z_i = 1} \mu(X_{m(i)}) / N_1 < 0.24$ while $\sum_{Z_i = 1} \mu(X_i)/N_1 > 0.245$.  This proves that $\| \Delta \|$ is asymptotically bounded away from zero.
\end{proof}

% ========================================================
\subsection{Proof of \Cref{example:regression_adjusted_test_statistics}} \label{section:regression_adjusted_test_statistics_proof}

\begin{proof}
Let $\mathbf{B} = \sum_{i \in \M} (Z_i - 0.5, 1, X_i)(Z_i - 0.5, 1, X_i)^{\top} / 2 N_1$ be the matched design matrix.  \Cref{lemma:B_invertibility} (applied with $\phi(x) = x$) shows that $\mathbf{B}^{-1}$ exists with probability approaching one.  On this event, standard least-squares theory gives $\sqrt{N_1} \hat{\tau}^{\reg} \mid \{ (X_i, Z_i) \}_{i \leq n} \sim N \{ 0, 0.5 \sigma^2 (\mathbf{B}^{-1})_{11} \}$.

Let $\Delta$ be the asymptotic matching discrepancy introduced in \Cref{section:one_covariate_proof}, and set $\eta = 1 - 1/\{1 - 2 \Delta^2 / \E(X^2) \} > 0$.  We now show that with probability approaching one, the following occurs:
\begin{align}
0.5 \sigma^2 (\mathbf{B}^{-1})_{11} \geq 2 \sigma^2 (1 + \eta). \label{eq:variance_inflation}
\end{align}
To prove this, define $\hat{\mathbf{\Sigma}}_X = \sum_{i \in \M} X_i^2 / 2 N_1$ and observe that $\hat{\mathbf{\Sigma}}_X \leq \sum_{i = 1}^n X_i^2 / 2 N_1 \xrightarrow{P} \E(X^2) / 2p$.  Thus, with probability approaching one, $\hat{\mathbf{\Sigma}}_X \leq \E(X^2) / 2$.  When this occurs, we may use the partitioned matrix inversion formula to conclude $(\mathbf{B}^{-1})_{11} = 4/(1 - \hat{\Delta}^2 / \hat{\mathbf{\Sigma}}_X) \geq 4(1 + \eta)$.  This proves \Cref{eq:variance_inflation}.

Now, let $\epsilon > 0$ be arbitrary.  \Cref{lemma:paired_ancova_critical_value} shows that $\sqrt{N_1} \hat{q}_{1 - \alpha}^{\reg} \rightarrow \sqrt{2} \sigma z_{1 - \alpha/2}$ in probability, so (\ref{eq:critical_value_inequality}) holds with probability approaching one:
\begin{align}
\sqrt{N_1} \hat{q}_{1 - \alpha}^{\reg} \geq \sqrt{2} \sigma z_{1 - \alpha/2} - \epsilon. \label{eq:critical_value_inequality}
\end{align}

When (\ref{eq:variance_inflation}) and (\ref{eq:critical_value_inequality}) both hold, we may write:
\begin{align*}
\P( \hat{p}^{\reg} < \alpha \mid \{ (X_i, Z_i) \}_{i \leq n}) &= \P( | \sqrt{N_1} \hat{\tau}^{\reg}| \geq \hat{q}_{1 - \alpha}^{\reg} \mid \{ (X_i, Z_i) \}_{i \leq n})\\
&\geq \P( | \sqrt{N_1} \hat{\tau}^{\reg} | \geq \sqrt{2} \sigma z_{1 - \alpha/2} - \epsilon  \mid \{ (X_i, Z_i) \}_{i \leq n}) &\text{(\ref{eq:critical_value_inequality})}\\
&\geq \P_{H \sim N \{ 0, 2 \sigma^2(1 + \eta) \}}(|H| \geq \sqrt{2} \sigma z_{1 - \alpha/2} - \epsilon) &\text{(\ref{eq:variance_inflation}) + Anderson's Lemma}.
\end{align*}
Since (\ref{eq:variance_inflation}) and (\ref{eq:critical_value_inequality}) both hold with probability tending to one, we may take expectations on both sides and conclude $\limsup \P( \hat{p}^{\reg} < \alpha) \geq \P_{H \sim N \{ 0, 2 \sigma^2(1 + \eta) \}}(|H| \geq \sqrt{2} \sigma z_{1 - \alpha/2} - \epsilon)$.  For small enough $\epsilon$, this lower bound is strictly larger than $\alpha$.
\end{proof}

% ========================================================
\subsection{Proof of \Cref{proposition:balance_requirement}}

\begin{proof}
First, we prove sufficiency.  Let $\F_n = \sigma( \{ (X_i, Z_i) \}_{i \leq n})$.  Suppose that $\E( \hat{\tau}^{\dm} \mid \F_n) = o_P(n^{-1/2})$.  Let $s^2 = 2 \E \{ \sigma^2(X) \mid Z = 1 \}$.  Then we may write:
\begin{align*}
\sqrt{N_1} \hat{\tau}^{\dm} &= \sqrt{N_1} \E( \hat{\tau}^{\dm} \mid \F_n) + \frac{1}{\sqrt{N_1}} \sum_{Z_i = 1} ( \epsilon_i - \epsilon_{m(i)} ) = o_P(1) + \frac{1}{\sqrt{N_1}} \sum_{Z_i = 1} ( \epsilon_i - \epsilon_{m(i)} ) \rightsquigarrow N(0, s^2).
\end{align*}
where the final convergence follows from the Berry-Esseen Theorem applied conditional on $\F_n$ (see \Cref{lemma:epsilon_difference_asymptotic_normality} for a complete proof).  \Cref{lemma:dm_critical_value} studies the randomization critical value and proves $\sqrt{N_1} \hat{q}_{1 - \alpha}^{\dm} =  s z_{1 - \alpha/2} + o_P(1)$.  Thus, Slutsky's theorem gives $\P( \hat{p}^{\dm} < \alpha) = \P(| \sqrt{N_1} \hat{\tau}^{\dm}| \geq \hat{q}_{1 - \alpha}^{\dm}) \rightarrow \P_{H \sim N(0, s^2)}(|H| \geq s z_{1 -\alpha/2}) = \alpha$.  We have now shown that small conditional bias implies asymptotic validity.

Next, we prove necessity.  Let $s_n^2 = \sum_{i \in \M} \sigma^2(X_i) / N_1$.  \Cref{lemma:sigma_star_limit} shows that $s_n / s \rightarrow 1$ in probability when $P \{ e(X) < 0.5 \} = 1$; in particular, $1/s_n = O_P(1)$.  Thus, $\E( \hat{\tau}^{\dm} \mid \F_n) \neq o_P(n^{-1/2})$ implies $B_n := \E( \sqrt{N_1} \hat{\tau}^{\dm} / s_n \mid \F_n) \neq o_P(1)$ as well, meaning that there exists some $\eta > 0$ such that the following holds:
\begin{align*}
\liminf_{n \rightarrow \infty} \P(|B_n| > \eta) > \eta.
\end{align*}
By passing to a subsequence if necessary, we may assume the above holds for all $n$ and not just asymptotically.  

As explained above, \Cref{lemma:dm_critical_value} shows that $\sqrt{N_1} \hat{q}_{1 - \alpha}^{\dm} = s z_{1 - \alpha/2} + o_P(1)$.  Thus, for any $\epsilon > 0$, we have $\sqrt{N_1} \hat{q}_{1 - \alpha}^{\dm} / s_n > z_{1 - \alpha/2} - \epsilon$ with probability approaching one.  This allows us to write the following:
\begin{align*}
\liminf_{n \rightarrow \infty} \P( \hat{p}^{\dm} < \alpha) &= \liminf_{n \rightarrow \infty} \P( | \hat{\tau}^{\dm}| \geq \hat{q}_{1 - \alpha}^{\dm})\\
&= \liminf_{n \rightarrow \infty} \P( | \sqrt{N_1} \hat{\tau}^{\dm} / s_n| \geq \sqrt{N_1} \hat{q}_{1 - \alpha}^{\dm}, \sqrt{N_1} \hat{q}_{1 - \alpha}^{\dm} / s_n > z_{1 - \alpha/2} - \epsilon)\\
&+ \liminf_{n \rightarrow \infty} \P( | \sqrt{N_1} \hat{\tau}^{\dm} / s_n| \geq \sqrt{N_1} \hat{q}_{1 - \alpha}^{\dm}, \sqrt{N_1} \hat{q}_{1 - \alpha}^{\dm} / s_n \leq z_{1 - \alpha/2} - \epsilon)\\
&\geq \liminf_{n \rightarrow \infty} \P( | \sqrt{N_1} \hat{\tau}^{\dm} / s_n| > z_{1 - \alpha/2} - \epsilon) + 0.
\end{align*}
Based on this lower bound, it suffices to show $\liminf \P( | \sqrt{N_1} \hat{\tau}^{\dm} / s_n| > z_{1 - \alpha/2} - \epsilon) > \alpha$ for some $\epsilon > 0$.  

We show this by conditioning on $\F_n$:
\begin{align*}
\P( | \sqrt{N_1} \hat{\tau}^{\dm} / s_n| > z_{1 - \alpha/2} - \epsilon) &= \E\{ \P( | \sqrt{N_1} \hat{\tau}^{\dm} / s_n| > z_{1 - \alpha/2} - \epsilon \mid \F_n) \}\\
&= \E \left[ \P \left\{ \left| B_n  + \frac{1}{s_n \sqrt{N_1}} \sum_{Z_i = 1} ( \epsilon_i - \epsilon_{m(i)}) \right| > z_{1 - \alpha/2} - \epsilon \, \bigg| \, \F_n \right\} \right]\\
&\geq \underbrace{\E \left[ \P_{H \sim N(0, 1)}\{ |B_n + H| > z_{1 - \alpha/2} - \epsilon \mid \F_n \} \right]}_a\\
&- \underbrace{\E \left[ 2\sup_{t \in \R} \left| \P \left\{ \frac{1}{s_n \sqrt{N_1}} \sum_{Z_i = 1} (\epsilon_i - \epsilon_{m(i)}) \leq t  \, \bigg| \, \F_n \right\} - \P_{H \sim N(0, 1)}(H \leq t) \right| \right]}_b
\end{align*}
Term $b$ tends to zero by (\ref{eq:epsilon_difference_kolmogorov_convergence}) and the bounded convergence theorem.  Meanwhile, term $a$ can be further controlled using the fact that $x \mapsto \P_{H \sim N(0, 1)}(|x + H| > z_{1 - \alpha/2} - \epsilon)$ is increasing as $x$ moves away from zero:
\begin{align*}
a &= \E[ \P ( | B_n + H| > z_{1 - \alpha/2} - \epsilon ) \mathbf{1} \{ |B_n| > \eta \}] + \E[ \P( |B_n + H| > z_{1 - \alpha/2} - \epsilon) \mathbf{1} \{ |B_n| \leq \eta \}]\\
&\geq \E[ \P(| \eta + H| > z_{1 -\alpha/2} - \epsilon) \mathbf{1} \{ |B_n| > \eta \}] + \E[ \P( |H| > z_{1 - \alpha/2} - \epsilon) \mathbf{1} \{ |B_n| \leq \eta \}]\\
&= \P(| \eta + H| > z_{1 - \alpha/2} - \epsilon) \P(|B_n| > \eta) + \P(|H| > z_{1 - \alpha/2} - \epsilon) \P(|B_n| \leq \eta)\\
&\geq \P(| \eta + H| > z_{1 - \alpha/2} - \epsilon) \eta + \P(|H| > z_{1 - \alpha/2} - \epsilon)(1 - \eta)\\
&\geq \P(|H| > z_{1 - \alpha/2} - \epsilon) + \eta \{ \P(|\eta + H| > z_{1 - \alpha/2} - \epsilon) - \P(|H| > z_{1 - \alpha/2} - \epsilon) \}
\end{align*}
As $\epsilon \rightarrow 0$, the lower bound in the preceding display converges to $\alpha + \eta \{ \P(| \eta + H| > z_{1 - \alpha/2}) - \alpha \} > \alpha$.  Hence, for small enough $\epsilon$, we have $a > \alpha$ and we have shown $\liminf \P( \hat{p}^{\dm} < \alpha) > \alpha$.
\end{proof}

% ========================================================
\subsection{Proof of \Cref{proposition:dm_sufficient}} \label{section:dm_sufficient_proof}

\begin{proof}
Under the stated assumptions, \cite[Proposition 1]{abadie_imbens2012} (or more properly, its proof) implies that $\sum_{Z_i = 1} | X_i - X_{m(i)} | / N_1 = o_P(n^{-1/2})$.  By the Lipschitz condition on the outcome model, this also yields $\E( \hat{\tau}^{\dm} \mid \{ (X_i, Z_i) \}_{i \leq n}) = o_P(n^{-1/2})$.  Now validity follows from \Cref{proposition:balance_requirement}.
\end{proof}

% ========================================================
\subsection{Proof of \Cref{proposition:ols_sufficient}} \label{section:ols_sufficient_proof}

\begin{proof}
Let $s^2 = 2 \E \{ \sigma^2(X) \mid Z = 1 \}$.  \Cref{lemma:paired_expansion} (applied with $g \equiv 0$) shows that $\sqrt{N_1} \hat{\tau}^{\reg} \rightsquigarrow N(0, s^2)$.  Meanwhile, \Cref{lemma:paired_ancova_critical_value} shows that $\sqrt{N_1} \hat{q}_{1 - \alpha}^{\reg} \rightarrow s z_{1 - \alpha/2}$ in probability.  Therefore, Slutsky's theorem gives $\P( \hat{p}^{\reg} < \alpha) = \P( | \sqrt{N_1} \hat{\tau}^{\reg}| \geq \hat{q}_{1 - \alpha}^{\reg}) \rightarrow \P_{H \sim N(0, s^2)}(|H| \geq s z_{1 -\alpha/2}) = \alpha$.
\end{proof}

% ========================================================
\subsection{Proof of \Cref{theorem:paired_local_robustness}}

\begin{proof}
For simplicity, we only prove the result for one-sided p-values.  Let $g : \R^d \rightarrow \R^k$ be any bounded nonlinear function and $\{ h_n \} \subset \R^k$ a sequence with $\| h_n \| \leq C n^{-1/2}$.  Let $\psi(x, z) = (z - 0.5, 1, x)$ be the ``feature vector" used in the regression that computes $\hat{\tau}^{\reg}$.  Note we have shifted the treatment variable by a constant, although this does not affect the value of $\hat{\tau}^{\reg}$ since an intercept is present.

Define the following statistics:
\begin{align*}
(\tilde \tau_n^{\reg}, \tilde \theta) &= \argmin_{\tau, \theta} \sum_{i \in \M} \{ Y_i + h_n^{\top} g(X_i) - (\tau, \theta)^{\top} \psi(X_i, Z_i) \}^2\\
\tilde \epsilon_i &= Y_i + h_n^{\top} g(X_i) - (\tilde \tau_n^{\reg}, \tilde \theta)^{\top} \psi(X_i, Z_i)  \\
\tilde \sigma^2_{\hc} &= e_1^{\top} \left( \sum_{i \in \M} \psi(X_i, Z_i) \psi(X_i, Z_i)^{\top} \right)^{-1} \left( \sum_{i \in \M} \tilde \epsilon_i^2 \psi(X_i, Z_i) \psi(X_i, Z_i)^{\top} \right) \left( \sum_{i \in \M} \psi(X_i, Z_i) \psi(X_i, Z_i)^{\top} \right) e_1.
\end{align*}
In words, these are the regression coefficients, fitted residuals, design matrix, and heteroskedasticity-consistent robust standard error in the regression $Y + h_n^{\top} g(X) \sim 1 + (Z - 0.5) + X$.  

The reason to consider these statistics is the following: the distribution of the $t$-statistic $\tilde \tau_n^{\reg} / \tilde \sigma_{\hc}$ under $P^n$ is exactly the same as the distribution of the $t$-statistic $\hat{\tau}_n^{\reg} / \hat{\sigma}_{\hc}$ under $P_{h_n,g}^n$, where $\hat{\sigma}_{\hc}$ is the robust standard error for $\hat{\tau}_{\reg}$ in the regression $Y \sim 1 + (Z - 0.5) + X$.  In particular, we have:
\begin{align*}
P_{h_n,g}^n( \hat{p}^{\hc} < \alpha) &= P_{h_n,g}^n( \hat{\tau}_n^{\reg} / \hat{\sigma}_{\hc} > z_{1 - \alpha}) = P^n( \tilde \tau_n^{\reg} / \tilde{\sigma}_{\hc} > z_{1 - \alpha})
\end{align*}
and it remains to show $P^n( \tilde \tau_n^{\reg} / \tilde \sigma_{\hc} > z_{1 - \alpha}) \rightarrow \alpha$.

To prove this, we will use the fact that $\sqrt{N_1} \tilde \tau_n^{\reg} \rightsquigarrow N(0, s^2)$ even though the linear model is misspecified, where $s^2 = 2 \E \{ \sigma^2(X) \mid Z = 1 \}$.  This is proved formally in \Cref{lemma:paired_expansion} (applied with $S = \emptyset$), although we will sketch the intuition here.  Since $P \{ e(X) < 0.5 \} = 1$, optimal matching will succeed at balancing all functions of $X$ as the sample size grows large.  More formally, \Cref{corollary:empirical_Lp_convergence} shows that $\sum_{Z_i = 1} \{ g(X_i) - g(X_{m(i)}) \} / N_1 \rightarrow 0$ in probability, as long as $\Var \{ g(X) \} < \infty$.  This balancing property makes the $Z_i - 0.5$ approximately orthogonal to the nonlinearity $h_n^{\top} g(X_i)$, in the sense that their sample covariance tends to zero even after multiplying by $\sqrt{N_1}$:
\begin{align*}
\sqrt{N_1} \times \frac{1}{2N_1} \sum_{i \in \M} (Z_i - 0.5) h_n^{\top} g(X_i) &= \underbrace{\frac{\sqrt{N_1} h_n^{\top}}{4}}_{= O_P(1)} \times \underbrace{\frac{1}{N_1} \sum_{Z_i = 1} \{ g(X_i) - g(X_{m(i)}) \}}_{= o_P(1)} = o_P(1).
\end{align*}
This orthogonality makes it so that the omission of $g$ in the regression has very little effect on $\tilde \tau_n^{\reg}$.

The second fact we use is that $\sqrt{N_1} \tilde \sigma_{\hc} \xrightarrow{P} s$.  This is to be expected.  For example, if we assumed that the distribution of $Y \mid X, Z$ were suitably smooth, then it would follow directly from contiguity and the usual consistency of robust standard errors \cite{white1980}.  Since we do not assume smoothness, a more ``bare hands" proof is given in \Cref{lemma:paired_hc_standard_error}.

Combining the two facts with Slutsky's theorem gives $P^n( \tilde \tau_n^{\reg} / \tilde \sigma_{\hc} > z_{1 - \alpha}) \rightarrow \alpha$.  This proves that model-based inference remains valid along any locally misspecified sequence.  Since the above argument holds for any sequence $h_n$ satisfying $\| h_n \| \leq C n^{-1/2}$, it must also be the case that $\sup_{\| h \| \leq C n^{-1/2}} P_{h,g}^n( \hat{p}^{\reg} < \alpha) \rightarrow \alpha$.
\end{proof}

% ========================================================
\subsection{Proof of \Cref{theorem:replaced_robustness}}

\begin{proof}
The prove is identical to that of \Cref{theorem:paired_local_robustness}, except we get asymptotic normality from \Cref{lemma:replaced_expansion} instead of \Cref{lemma:paired_expansion}, and the asymptotics of the robust standard error come from \Cref{lemma:replaced_hc_standard_error} instead of \Cref{lemma:paired_hc_standard_error}.
\end{proof}

% ========================================================
\subsection{Proof of \Cref{theorem:model_dependence}}

\begin{proof}
Consider the case of where $P$ satisfies the conditions of \Cref{theorem:paired_local_robustness} and optimal pair matching is used.  For any nonempty set $S \subseteq [k]$, let $\psi_S(x, z) = (z - 0.5, 1, x, g_S(x))$ and set $\psi_{\emptyset}(x, z) = (z -0.5, 1, x)$.  Further define the following quantities:
\begin{align*}
(\tilde \tau_S, \tilde \theta_S) &= \argmin_{\tau, \theta} \sum_{i \in \M} \{ Y_i + h_n^{\top} g(X_i) - (\tau, \theta)^{\top} \psi_S(X_i, Z_i) \}^2\\
\tilde \epsilon_{S,i} &= Y_i + h_n^{\top} g(X_i) - (\tilde \tau_S, \tilde \theta_S)^{\top} \psi_S(X_i, Z_i)\\
\tilde \sigma^2_S &= e_1^{\top} \left( \sum_{i \in \M} \psi_S(X_i, Z_i) \psi_S(X_i, Z_i)^{\top} \right)^{-1} \left( \sum_{i \in \M} \tilde \epsilon_{S,i}^2 \psi_S(X_i, Z_i) \psi_S(X_i, Z_i)^{\top} \right) \left( \sum_{i \in \M} \psi_S(X_i, Z_i) \psi_S(X_i, Z_i)^{\top} \right)^{-1} e_1\\
\tilde t_S &= \tilde \tau_S / \tilde \sigma_S 
\end{align*}
In words, these are the regression coefficients, residuals, and robust standard errors, and $t$-statistic from a regression of $Y_i + h_n^{\top} g(X_i)$ on $\psi_S(X_i, Z_i)$. 

Let $t_S$ be the $t$-statistic from a regression of $Y_i$ on $\psi_S(X_i, Z_i)$ (so that $t_S = \tilde t_S$ when $h_n \equiv 0$).  Then the joint distribution of $(t_S \, : \, S \subseteq [k])$ under $P_{h_n,g}^n$ is the same as the joint distribution of $(\tilde t_S \, : \, S \subseteq [k])$ under $P^n$.  In particular, we have:
\begin{align*}
P_{h_n,g}^n( \phi^{(1)} = \phi^{(2)} = \phi^{(3)}) &\geq P_{h_n,g}^n( \mathbb{I} \{ | t_S | > z_{1 - \alpha/2} \} \text{ is the same for all } S \subseteq [k])\\
&= P^n( \mathbb{I} \{ | \tilde t_S | > z_{1 - \alpha/2} \} \text{ is the same for all } S \subseteq [k]).
\end{align*}
To show that the last line in the preceding display tends to one, we use Lemmas \ref{lemma:paired_expansion} and \ref{lemma:paired_hc_standard_error}, which jointly show that the expansion $\tilde t_S = \sum_{i \in \M} (2 Z_i - 1) \epsilon_i / s \sqrt{N_1} + o_P(1)$ holds for every $S \subseteq [k]$, where $s^2 = 2 \E \{ \sigma^2(X) \mid Z = 1 \}$.  Therefore, we have:
\begin{align*}
(\tilde t_S \, : \, S \subseteq [k]) \overset{P^n}{\rightsquigarrow} (H_S \, : \, H \subseteq [k]) \quad \text{where } H_S \equiv H \sim N(0, 1) \text{ for all } S \subseteq [k].
\end{align*}
In other words, the $t$-statistics jointly converge to a degenerate vector with all components equal to the same standard normal random variable.

Let $C$ be the set of vectors in $\R^{2^k}$ with all entries either all strictly above $z_{1 - \alpha/2}$ or all weakly less than $z_{1 - \alpha/2}$.  Since $\P \{ (H_S \, : \, H \subseteq [k]) \in \partial C) = 0$, the Portmanteau lemma gives:
\begin{align*}
P^n( \mathbb{I} \{ | \tilde t_S | > z_{1 - \alpha/2} \} \text{ is the same for all } S \subseteq [k]) &= P^n\{ (\tilde t_S \, : \, S \subseteq [k]) \in C \}\\
&\rightarrow P \{ (H_S \, : \, S \subseteq [k]) \in C \}\\
&= 1.
\end{align*}
Thus, we have shown $P_{h_n,g}^n( \phi^{(1)} = \phi^{(2)} = \phi^{(3)}) \rightarrow 1$.

The proof in the case of matching with replacement is identical, except we use Lemmas \ref{lemma:replaced_expansion} and \ref{lemma:replaced_hc_standard_error} instead of Lemmas \ref{lemma:paired_expansion} and \ref{lemma:paired_hc_standard_error} to show that all $t$-statistics are asymptotically equivalent.
\end{proof}

% ========================================================
\section{Supporting results} \label{section:technical}

In this section, we prove supporting technical results used in \Cref{section:proofs}.  Apart from \Cref{section:technical_lemmas}, we will always assume that $P$ satisfies \Cref{assumption:primitives}.

\subsection{Miscellaneous technical results} \label{section:technical_lemmas}

\begin{theorem}
\label{theorem:berry_esseen}
\emph{(Berry-Esseen Theorem, \cite{Petrov2000})}. Let $X_1, \ldots, X_n$ be independent but not necessarily identically distributed random variables with mean zero and three finite moments.  Let $S_n = \sum_{i = 1}^n X_i$ and $\sigma^2 = \Var(S_n)$.  Then $\sup_{t \in \R} | \P(S_n/\sigma \leq t) - \Phi(t)| \leq C \sum_{i = 1}^n \E( |X_i|^3) / \sigma^3$ for some absolute constant $C < \infty$.
\end{theorem}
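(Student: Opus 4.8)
The statement is the classical Berry--Esseen bound for independent but non-identically distributed summands, so the plan is to follow the standard Fourier-analytic argument built on Esseen's smoothing inequality. After rescaling we may assume $\sigma = 1$; writing $Y_j = X_j/\sigma$, $v_j = \E(Y_j^2)$ (so that $\sum_j v_j = 1$) and $\gamma_j = \E(|Y_j|^3)$, the quantity to control is the Lyapunov ratio $L = \sum_j \gamma_j = \sum_j \E(|X_j|^3)/\sigma^3$, and the goal becomes $\sup_t |F_n(t) - \Phi(t)| \le C L$, where $F_n$ is the distribution function of $S_n/\sigma$. If $L$ exceeds a fixed constant the bound is trivial, since the left side never exceeds one, so throughout I would assume that $L$ is as small as needed.

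First I would invoke Esseen's smoothing lemma: for every $T > 0$,
\[
\sup_t |F_n(t) - \Phi(t)| \le \frac{1}{\pi} \int_{-T}^T \left| \frac{\varphi_n(s) - e^{-s^2/2}}{s} \right| \, \d s + \frac{24}{\pi \sqrt{2\pi}\, T},
\]
where $\varphi_n(s) = \prod_{j} f_j(s)$ is the characteristic function of $S_n/\sigma$, each $f_j$ is the characteristic function of $Y_j$, and $1/\sqrt{2\pi}$ is the maximal density of $\Phi$. This reduces the problem to a bound on the characteristic-function discrepancy on a bounded window, at the cost of a boundary term that I will force to be $O(L)$ by taking $T$ proportional to $1/L$.

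The analytic heart is the per-factor expansion. Using $|e^{iu} - (1 + iu - u^2/2)| \le |u|^3/6$ gives $f_j(s) = 1 - v_j s^2/2 + R_j(s)$ with $|R_j(s)| \le \gamma_j |s|^3/6$, and comparing against $e^{-v_j s^2/2} = 1 - v_j s^2/2 + O(v_j^2 s^4)$ controls each factor individually. Two facts must then be assembled: (i) the product stays Gaussian-dominated, $|\varphi_n(s)| \le \exp(-s^2/2 + L|s|^3/6) \le e^{-s^2/4}$ for $|s| \le 3/(2L)$, obtained from $1 + x \le e^x$ applied to $\prod_j |f_j(s)| \le \prod_j \exp(-v_j s^2/2 + \gamma_j|s|^3/6)$; and (ii) a telescoping product-comparison bound $|\varphi_n(s) - e^{-s^2/2}| \le \sum_j |f_j(s) - e^{-v_j s^2/2}| \prod_{l \ne j}(\text{dominating factor})$, which when combined with (i) and the per-factor remainder estimates yields the key inequality $|\varphi_n(s) - e^{-s^2/2}| \le C L |s|^3 e^{-s^2/3}$ on the window $|s| \le 1/(4L)$.

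With that inequality in hand the proof closes quickly: taking $T = 1/(4L)$, the integral term is at most $C L \int_{-\infty}^{\infty} s^2 e^{-s^2/3}\,\d s = C' L$, while the boundary term equals $96 L / (\pi\sqrt{2\pi}) = C'' L$, so the smoothing lemma delivers $\sup_t |F_n(t) - \Phi(t)| \le C''' L$, which is the claim. I expect the main obstacle to be step (ii): tracking the heterogeneous variances $v_j$ and third moments $\gamma_j$ through the product comparison so that the accumulated Taylor remainders telescope into a clean $L|s|^3$ factor while retaining enough Gaussian decay to keep the integral convergent. The non-identical distributions are precisely what make this bookkeeping more delicate than in the i.i.d.\ case, although Lyapunov's inequality $v_j \le \gamma_j^{2/3}$ keeps all the auxiliary sums under control.
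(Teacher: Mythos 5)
The paper does not prove this statement at all: it is quoted as a classical result and attributed to Petrov (2000), so there is no in-paper argument to compare against. Your sketch is the standard Fourier-analytic proof via Esseen's smoothing inequality, which is essentially the argument in the cited reference, and the outline (rescaling, dispensing with large $L$, smoothing with $T \asymp 1/L$, per-factor third-order Taylor expansion, telescoping product comparison, and integration of $CL|s|^3 e^{-s^2/3}/|s|$) is the right skeleton. The one place where the sketch is genuinely glossy is the Gaussian-domination step: the inequality $|f_j(s)| \le \exp(-v_j s^2/2 + \gamma_j |s|^3/6)$ requires $1 - v_j s^2/2 \ge 0$, and since the window $|s| \le c/L$ grows as $L \to 0$ while individual $v_j$ need not be small relative to $\gamma_j^{2/3} \le L^{2/3}$, this condition can fail for some $j$ at the larger values of $|s|$ in the window; standard treatments (Feller, Petrov) insert an extra case analysis or a slightly different per-factor bound here. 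That is a known, repairable wrinkle in the classical argument rather than a conceptual gap, but a complete write-up would need to address it. For the purposes of this paper, the citation is the appropriate "proof," and reproving the theorem is unnecessary.
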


\begin{lemma}[Randomization Slutsky Theorem]
\label{lemma:randomization_slutsky}
If $A_n \rightsquigarrow_* A$ and $R_n = o_P(1)$, then $A_n + R_n \rightsquigarrow_* A$ as well.
\end{lemma}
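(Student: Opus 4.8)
The plan is to work with an explicit metric for weak convergence, namely the bounded Lipschitz metric $\rho(\mu,\nu) = \sup\{ |\int f\,d\mu - \int f\,d\nu| : \|f\|_\infty \leq 1,\ \textup{Lip}(f)\leq 1\}$, and to reduce the statement to controlling how much adding $R_n$ perturbs the conditional law of $A_n$. This choice is harmless: since the target law $Q = \mathcal{L}(A)$ is nonrandom, and any two metrics metrizing weak convergence generate the same topology on $\mathcal{P}(\R)$, convergence in probability of $\rho(\hat Q_n, Q)$ to $0$ is equivalent for any such metric (a $\rho_1$-ball around the fixed point $Q$ contains a $\rho_2$-ball and vice versa). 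So I may prove the claim for the bounded Lipschitz $\rho$ and invoke this equivalence. Write $\hat Q_n = \mathcal{L}(A_n \mid \mathcal{D}_n)$ and $\tilde Q_n = \mathcal{L}(A_n + R_n \mid \mathcal{D}_n)$ for the two relevant conditional laws. By the triangle inequality,
\begin{align*}
\rho(\tilde Q_n, Q) \leq \rho(\tilde Q_n, \hat Q_n) + \rho(\hat Q_n, Q),
\end{align*}
and the second term is $o_P(1)$ by the hypothesis $A_n \rightsquigarrow_* A$. It remains to show the coupling term $\rho(\tilde Q_n, \hat Q_n)$ is $o_P(1)$.

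Next I would bound the coupling term pointwise. For any $f$ with $\|f\|_\infty \leq 1$ and $\textup{Lip}(f) \leq 1$, the inequality $|f(A_n + R_n) - f(A_n)| \leq \min(|R_n|, 2)$ holds (the Lipschitz bound when $|R_n|$ is small, the boundedness of $f$ otherwise). Taking conditional expectation given $\mathcal{D}_n$ and then the supremum over all such $f$ yields
\begin{align*}
\rho(\tilde Q_n, \hat Q_n) = \sup_{f} \left| \E[f(A_n + R_n)\mid \mathcal{D}_n] - \E[f(A_n) \mid \mathcal{D}_n] \right| \leq \E\!\left[ \min(|R_n|, 2) \mid \mathcal{D}_n \right].
\end{align*}
It is worth emphasizing that this step requires no measurability assumption on $R_n$ relative to $\mathcal{D}_n$: the bound is established pointwise before conditioning, so it applies even when $R_n$ depends on the randomization draw as well as the data.

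Finally I would show the dominating quantity is negligible. Set $U_n = \E[\min(|R_n|, 2) \mid \mathcal{D}_n] \geq 0$. Since $R_n = o_P(1)$ and $t \mapsto \min(t,2)$ is bounded and continuous, $\min(|R_n|, 2) \to 0$ in probability while staying bounded by $2$; the bounded convergence theorem then gives $\E[\min(|R_n|, 2)] \to 0$. By the tower property $\E[U_n] = \E[\min(|R_n|,2)] \to 0$, so Markov's inequality forces $U_n = o_P(1)$. Combining with the triangle inequality above gives $\rho(\tilde Q_n, Q) \leq U_n + \rho(\hat Q_n, Q) = o_P(1)$, which is exactly $A_n + R_n \rightsquigarrow_* A$.

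This lemma is routine, so there is no substantive obstacle; the only points demanding care are the two I have flagged. The first is the metric-independence remark, which lets me pick the bounded Lipschitz metric rather than whatever unspecified $\rho$ appears in \Cref{definition:asymptotic_validity}'s surrounding notation. The second, and the one most easily mishandled, is the bookkeeping around conditioning: $R_n$ need not be $\sigma(\mathcal{D}_n)$-measurable, so I deliberately establish the uniform pointwise bound $|f(A_n+R_n)-f(A_n)|\le\min(|R_n|,2)$ first and only then pass to conditional expectations, after which the argument is purely a bounded-convergence-plus-Markov estimate.
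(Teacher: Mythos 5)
Your proof is correct, but it takes a genuinely different route from the paper's. The paper proves \Cref{lemma:randomization_slutsky} by the subsequence trick: from any subsequence extract a further sub-subsequence along which $\rho(\hat Q_n, Q) \rightarrow 0$ almost surely and $R_n \rightarrow 0$ almost surely, apply the ordinary Slutsky theorem along that sub-subsequence, and conclude for the full sequence. Your argument is instead a direct, quantitative estimate: you fix the bounded Lipschitz metric, justify that the choice of metric is immaterial because the target $Q$ is a fixed point of the weak topology, and then bound the coupling term by $\rho(\tilde Q_n, \hat Q_n) \leq \E\{\min(|R_n|,2) \mid \mathcal{D}_n\}$, which is $o_P(1)$ by bounded convergence plus Markov. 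What your approach buys is an explicit rate for the perturbation and, more importantly, a clean treatment of the conditioning bookkeeping: you establish the pointwise bound $|f(A_n+R_n)-f(A_n)| \leq \min(|R_n|,2)$ \emph{before} conditioning, so no measurability of $R_n$ with respect to $\mathcal{D}_n$ is needed, whereas the paper's one-line invocation of ``the ordinary Slutsky Theorem'' along the sub-subsequence quietly applies Slutsky conditionally on $\mathcal{D}_n$ even though the almost-sure convergence $R_{n_k} \rightarrow 0$ is a statement about the joint randomness. What the paper's approach buys is brevity and reuse of off-the-shelf results. Two small points you gloss over, common to both proofs and standard: the conditional laws should be taken as regular conditional distributions so that $\int f \, d\tilde Q_n = \E\{f(A_n+R_n) \mid \mathcal{D}_n\}$ almost surely, and the supremum defining the bounded Lipschitz metric should be taken over a countable dense subfamily so that the almost-sure bound holds simultaneously for all test functions.
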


\begin{proof}
For any subsequence $\{ n_k \}$ we may find a further sub-subsequence along which $A_n \overset{\text{a.s.}}{\rightsquigarrow} A$ and $R_n \xrightarrow{\text{a.s.}} 0$.  By the ordinary Slutsky Theorem, $A_n + R_n \overset{\text{a.s.}}{\rightsquigarrow} A$ along the sub-subsequence.  Since $\{ n_k \}$ is arbitrary, $A_n + R_n \rightsquigarrow_* A$ along the full sequence $n = 1, 2, 3, \ldots$.
\end{proof}

\begin{lemma}[Conditional WLLN]
\label{lemma:conditional_wlln} 
Let $A_n$ be a sequence of random variables with and $\F_n$ a sequence of $\sigma$-algebras.  Suppose that $\E(A_n \mid \F_n) = 0$ and $\Var(A_n \mid \F_n) = o_P(1)$.  Then $A_n = o_P(1)$.
\end{lemma}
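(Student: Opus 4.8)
The plan is to reduce everything to the conditional Chebyshev inequality and then handle the passage from conditional to unconditional probabilities by a truncation argument. Write $V_n = \Var(A_n \mid \F_n)$, so that by hypothesis $\E(A_n \mid \F_n) = 0$ and $V_n = o_P(1)$. Fix an arbitrary $\epsilon > 0$. Because the conditional mean of $A_n$ is zero, the conditional Chebyshev inequality gives
\begin{align*}
\P( |A_n| > \epsilon \mid \F_n) \leq \frac{V_n}{\epsilon^2},
\end{align*}
and since conditional probabilities never exceed one, we in fact have $\P( |A_n| > \epsilon \mid \F_n) \leq \min(1, V_n / \epsilon^2)$.

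Next I would take expectations and use the tower property, $\P( |A_n| > \epsilon) = \E \{ \P( |A_n| > \epsilon \mid \F_n) \} \leq \E \{ \min(1, V_n / \epsilon^2) \}$. To control this last expectation I would split the integrand according to whether $V_n$ is small. For any fixed $\delta > 0$, bounding $\min(1, V_n/\epsilon^2)$ by $\delta$ on the event $\{ V_n \leq \delta \epsilon^2 \}$ and by $1$ on its complement yields
\begin{align*}
\E \{ \min(1, V_n / \epsilon^2) \} \leq \delta + \P( V_n > \delta \epsilon^2).
\end{align*}

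Finally, since $V_n = o_P(1)$, the term $\P(V_n > \delta \epsilon^2)$ tends to zero as $n \to \infty$, so $\limsup_n \P(|A_n| > \epsilon) \leq \delta$. As $\delta > 0$ was arbitrary, $\P(|A_n| > \epsilon) \to 0$; and as $\epsilon > 0$ was arbitrary, this is exactly $A_n = o_P(1)$. The only genuinely delicate point is the step from the \emph{conditional-in-probability} smallness of the variance $V_n$ to \emph{unconditional} control of the tail probability of $A_n$: one cannot simply pass $o_P(1)$ through an expectation. The truncation of the conditional probability at one is precisely the device that resolves this, since it makes the integrand bounded and lets the split over $\{V_n \leq \delta\epsilon^2\}$ convert convergence in probability of $V_n$ into a convergent expectation without invoking any integrability of $A_n$ itself.
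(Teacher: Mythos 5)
Your proof is correct and follows essentially the same route as the paper's: conditional Chebyshev, truncation of the conditional probability at one, then passage to the unconditional probability by taking expectations. The only cosmetic difference is that the paper invokes the bounded convergence theorem (for a uniformly bounded sequence converging to zero in probability) where you unpack that step explicitly via the split over $\{ V_n \leq \delta \epsilon^2 \}$.
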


\begin{proof}
For any $\epsilon > 0$, Chebyshev's inequality implies $\P( |A_n| > \epsilon \mid \F_n) \leq \Var( A_n \mid \F_n) / \epsilon^2$.  We also have the trivial bound $\P( |A_n| > \epsilon \mid \F_n) \leq 1$.  Thus, $B_n = \P( |A_n| > \epsilon \mid \F_n) \leq \min \{ 1, \Var( A_n \mid \F_n ) /\epsilon^2 \}$ so $B_n$ is a uniformly bounded sequence of random variables that tends to zero in probability.  By the bounded convergence theorem, $\P( |A_n| > \epsilon) = \E[ B_n ] \rightarrow 0$.
\end{proof}

% ========================================================
\subsection{Theoretical background on optimal matching} \label{section:optimal_matching_background}

\begin{lemma} \label{lemma:mahalanobis_euclidean_equivalence}
Suppose that $P$ satisfies \Cref{assumption:primitives} and let $d_M(x, x') =  \{ (x - x')^{\top} \hat{\mathbf{\Sigma}}^{-1}(x - x') \}^{1/2}$ denote the Mahalanobis distance between $x, x'$ based on the estimated sample covariance matrix.  Then there exists a constant $\kappa(\mathbf{\Sigma}) < \infty$ such that $\P( \kappa(\mathbf{\Sigma})^{-1} \| x - x' \| \leq d_M(x, x') \leq \kappa(\mathbf{\Sigma}) \| x - x' \| \text{ for all } x, x' \in \R^d) \rightarrow 1$.
\end{lemma}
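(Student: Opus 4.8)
The plan is to reduce the uniform distance-equivalence claim to a statement purely about the extreme eigenvalues of $\hat{\mathbf{\Sigma}}$. Writing $v = x - x'$, the Rayleigh quotient bounds give $\lambda_{\min}(\hat{\mathbf{\Sigma}}^{-1}) \| v \|^2 \leq v^{\top} \hat{\mathbf{\Sigma}}^{-1} v \leq \lambda_{\max}(\hat{\mathbf{\Sigma}}^{-1}) \| v \|^2$ for every $v$, and since the eigenvalues of $\hat{\mathbf{\Sigma}}^{-1}$ are the reciprocals of those of $\hat{\mathbf{\Sigma}}$, the quantity $d_M(x, x')^2$ is sandwiched between $\| v \|^2 / \lambda_{\max}(\hat{\mathbf{\Sigma}})$ and $\| v \|^2 / \lambda_{\min}(\hat{\mathbf{\Sigma}})$. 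Taking square roots, the desired two-sided bound holds \emph{simultaneously for all} $x, x'$ the moment $\lambda_{\min}(\hat{\mathbf{\Sigma}})$ is bounded away from zero and $\lambda_{\max}(\hat{\mathbf{\Sigma}})$ is bounded above. Thus it suffices to produce population-dependent constants $0 < c \leq C < \infty$ with $c \leq \lambda_{\min}(\hat{\mathbf{\Sigma}}) \leq \lambda_{\max}(\hat{\mathbf{\Sigma}}) \leq C$ holding with probability tending to one.

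I would establish this by comparing $\hat{\mathbf{\Sigma}}$ to the population covariance $\mathbf{\Sigma} = \Var(X)$. First I claim $\mathbf{\Sigma} \succ \mathbf{0}$: by the law of total variance, $\mathbf{\Sigma} = \E \{ \Var(X \mid Z) \} + \Var \{ \E(X \mid Z) \} \succeq P(Z = 1)\, \Var(X \mid Z = 1)$, which is positive definite because of the nonsingularity condition $\Var(X \mid Z = 1) \succ \mathbf{0}$ in Assumption \ref{assumption:primitives} together with $P(Z = 1) > 0$. Hence $0 < \lambda_{\min}(\mathbf{\Sigma}) \leq \lambda_{\max}(\mathbf{\Sigma}) < \infty$, the finiteness following from the second moments implied by Assumption \ref{assumption:primitives}\ref{item:four_moments}.

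Next I invoke the weak law of large numbers: since $\E \| X \|^2 < \infty$, the sample covariance matrix converges in probability, $\hat{\mathbf{\Sigma}} \xrightarrow{P} \mathbf{\Sigma}$, entrywise and hence in operator norm. Weyl's inequality makes the eigenvalue maps Lipschitz in operator norm, so $\lambda_{\min}(\hat{\mathbf{\Sigma}}) \xrightarrow{P} \lambda_{\min}(\mathbf{\Sigma})$ and $\lambda_{\max}(\hat{\mathbf{\Sigma}}) \xrightarrow{P} \lambda_{\max}(\mathbf{\Sigma})$. Consequently, on an event of probability tending to one, $\lambda_{\min}(\hat{\mathbf{\Sigma}}) \geq \tfrac{1}{2} \lambda_{\min}(\mathbf{\Sigma})$ and $\lambda_{\max}(\hat{\mathbf{\Sigma}}) \leq 2 \lambda_{\max}(\mathbf{\Sigma})$; in particular $\hat{\mathbf{\Sigma}}$ is nonsingular there, so $d_M$ is genuinely a Mahalanobis distance rather than the fallback $\hat{\mathbf{\Sigma}}^{-1} = \mathbf{I}$. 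Setting $\kappa(\mathbf{\Sigma}) = \max \{ \sqrt{2 \lambda_{\max}(\mathbf{\Sigma})},\, \sqrt{2 / \lambda_{\min}(\mathbf{\Sigma})} \}$ and feeding these eigenvalue bounds into the sandwich from the first paragraph delivers $\kappa(\mathbf{\Sigma})^{-1} \| v \| \leq d_M(x, x') \leq \kappa(\mathbf{\Sigma}) \| v \|$ for all $x, x'$ on that event, which is the claim.

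I do not anticipate a genuine obstacle here. The only two points needing care are the passage from the conditional nonsingularity in Assumption \ref{assumption:primitives} to nonsingularity of the unconditional $\mathbf{\Sigma}$, handled by the total-variance decomposition above, and the requirement that $\kappa(\mathbf{\Sigma})$ be uniform over all pairs $x, x'$ — which is automatic once $\kappa$ is expressed through eigenvalue bounds on $\hat{\mathbf{\Sigma}}$ rather than through any pointwise estimate.
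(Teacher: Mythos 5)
Your proposal is correct and follows essentially the same route as the paper: a Rayleigh-quotient sandwich for $d_M(x,x')^2$ combined with the law of large numbers and continuity of eigenvalues to pin $\lambda_{\min}(\hat{\mathbf{\Sigma}})$ and $\lambda_{\max}(\hat{\mathbf{\Sigma}})$ within a factor of two of their population values, yielding the same constant $\kappa(\mathbf{\Sigma})$ (the paper phrases it through the eigenvalues of $\mathbf{\Sigma}^{-1}$, which is an equivalent parametrization). Your explicit total-variance argument that $\Var(X) \succeq P(Z=1)\,\Var(X \mid Z=1) \succ \mathbf{0}$ is a detail the paper leaves implicit, but it does not change the substance of the argument.
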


\begin{proof}
Observe that $\tfrac{1}{2} \lambda_{\min}( \mathbf{\Sigma}^{-1}) \leq \lambda_{\min}( \hat{\mathbf{\Sigma}}^{-1}) \leq \lambda_{\max}( \hat{\mathbf{\Sigma}}^{-1}) \leq 2 \lambda_{\max}( \mathbf{\Sigma}^{-1})$ with probability approaching one, by the law of large numbers and continuous mapping theorem.  When these inequalities hold, we have the following bounds for all $x, x'$:
\begin{align*}
d_M(x, x')^2 &= (x - x')^{\top} \hat{\mathbf{\Sigma}}^{-1}(x - x') \geq \lambda_{\min}( \hat{\mathbf{\Sigma}}^{-1}) \| x - x' \|^2 \geq \tfrac{1}{2} \lambda_{\min}( \mathbf{\Sigma}^{-1}) \| x - x' \|^2\\
d_M(x, x')^2 &= (x - x')^{\top} \hat{\mathbf{\Sigma}}^{-1}(x - x') \leq \lambda_{\max}( \hat{\mathbf{\Sigma}}^{-1}) \| x - x' \|^2 \leq 2 \lambda_{\max}( \mathbf{\Sigma}^{-1}) \| x - x' \|^2.
\end{align*}
Hence, the result holds with $\kappa(\mathbf{\Sigma}) = [\max \{ 2 \lambda_{\max}(\mathbf{\Sigma}^{-1}), 2 /\lambda_{\min}( \mathbf{\Sigma}^{-1}) \}]^{1/2}$.
\end{proof}

\begin{proposition}
\label{proposition:mahalanobis_discrepancy_vanishes}
If $P \{ e(X) < 0.5 \} = 1$, then $\sum_{Z_i = 1} \| X_i - X_{m(i)} \| / N_1 \xrightarrow{P} 0$.
\end{proposition}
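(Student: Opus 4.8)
The plan is to first reduce the Mahalanobis-optimal discrepancy to the Euclidean discrepancy of a conveniently chosen \emph{feasible} matching, and then to construct such a matching by binning covariate space. By \Cref{lemma:mahalanobis_euclidean_equivalence}, on an event of probability tending to one the two distances are equivalent up to a fixed constant $\kappa$. Since $m$ minimizes the total Mahalanobis distance, on this event any other feasible matching $m'$ (an injection from treated units into controls, which exists with probability tending to one because $N_1 < N_0$ eventually when $P(Z=1) < 0.5$) satisfies
\begin{align*}
\sum_{Z_i = 1} \| X_i - X_{m(i)} \| \;\le\; \kappa \sum_{Z_i = 1} d_M(X_i, X_{m(i)}) \;\le\; \kappa \sum_{Z_i = 1} d_M(X_i, X_{m'(i)}) \;\le\; \kappa^2 \sum_{Z_i = 1} \| X_i - X_{m'(i)} \|.
\end{align*}
Thus it suffices to show that for every $\epsilon > 0$ there is a feasible $m'$ with $\frac{1}{N_1}\sum_{Z_i = 1}\|X_i - X_{m'(i)}\| \le \epsilon + o_P(1)$; the fixed multiplicative constant $\kappa^2$ is harmless because $\epsilon$ is arbitrary.

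Second, I would fix $\epsilon > 0$ and choose a radius $R$ so large that $\E(\|X\|\,\mathbf{1}\{\|X\|>R\}\mid Z=1)$ and $\tfrac{1-p}{p}\,\E(\|X\|\,\mathbf{1}\{\|X\|>R\}\mid Z=0)$ are both below $\epsilon$; this is possible because $\|X\|$ is integrable under \Cref{assumption:primitives}. Partition the ball $\{\|x\|\le R\}$ into $K = K(\epsilon,R)$ cells of diameter less than $\epsilon$. Within each cell, pair as many treated units with distinct controls as possible; each local pair then has Euclidean discrepancy below $\epsilon$, so these pairs contribute at most $\epsilon$ to the average. The treated units left unpaired within their own cell, together with all treated units outside $\{\|x\|\le R\}$, are matched to the remaining unused controls — feasible because $N_1 < N_0$ forces a global control surplus, and $e(x) < 0.5$ forces a control surplus in the tail region as well, up to $o_P(N_1)$ exceptions.

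Third — and this is the crux — I would bound the number of overflow units
\begin{align*}
D = \sum_{\text{cells } C} \max\{0,\; N_1(C) - N_0(C)\},
\end{align*}
where $N_z(C)$ counts units with $Z=z$ in cell $C$. The assumption $P\{e(X)<0.5\}=1$ gives $\E N_1(C) \le \E N_0(C)$ for every cell, so the expected within-cell surplus of treated units is nonpositive and the positive part is created purely by fluctuations. Writing $N_1(C)-N_0(C) = \sum_i (2Z_i-1)\mathbf{1}\{X_i\in C\}$ as a sum of independent bounded terms of variance at most $n\,P_X(C)$ (with $P_X$ the marginal law of $X$), a nonpositive mean gives $\E\max\{0,\, N_1(C)-N_0(C)\} \le \tfrac12\sqrt{n\,P_X(C)}$. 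Summing over the $K$ cells and applying Cauchy--Schwarz yields $\E D \le \tfrac12\sqrt{nK}$, so $D/N_1 = O_P(\sqrt{K/n}) = o_P(1)$ for fixed $K$. Each overflow unit inside the ball travels distance at most $2R$, contributing $2R\,D/N_1 = o_P(1)$; the units outside the ball contribute at most their own $\|X_i\|$-mass plus the $\|X_j\|$-mass of the far controls they are matched to, both below $\epsilon$ in the limit by the choice of $R$. Combining the three sources, $\limsup_n$ of the average discrepancy is at most a fixed multiple of $\epsilon$ in probability, and letting $\epsilon \downarrow 0$ gives the claim.

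The main obstacle is precisely this overflow bound: because the hypothesis provides no uniform margin $e(x) < 0.5 - \eta$, cells where $e(x)$ is close to $0.5$ have only a razor-thin expected control surplus, so local deficits are produced by ordinary $\sqrt{n}$-scale sampling fluctuations. The decisive point is that a nonpositive mean lets us bound the expected deficit \emph{per cell} by its standard deviation alone, after which Cauchy--Schwarz over the \emph{fixed} number of cells keeps the aggregate deficit at order $\sqrt{nK} = o(N_1)$. A secondary technical nuisance is the tail bookkeeping — verifying that far treated units can be matched to far controls up to $o_P(N_1)$ exceptions — but this is routine given the integrability of $\|X\|$ and the tail control surplus $\int_{\|x\|>R}(1-2e)\,dP_X \ge 0$ implied by $e(x) < 0.5$.
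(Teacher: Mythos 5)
Your proposal is correct and shares the paper's overall architecture: reduce to exhibiting one good \emph{feasible} matching via \Cref{lemma:mahalanobis_euclidean_equivalence} and the optimality of $m$, truncate to a compact region, bin into finitely many small cells, match within cells, and control the tail by moment bounds. Where you genuinely diverge is the crucial ``cells do not overflow'' step. The paper shows that for each cell $C_j$ with positive mass, the ratio of untreated to treated counts converges in probability to $1 + \int_{C_j}\{1-2e(x)\}\,\d P_X(x)/P(X\in C_j, Z=1) > 1$, so that with probability tending to one \emph{no} cell has a treated surplus and the overflow is exactly zero. You instead allow overflow and bound it: since $\E\{N_1(C)-N_0(C)\}\le 0$, you get $\E\max\{0,N_1(C)-N_0(C)\}\le\tfrac12\sqrt{nP_X(C)}$, and Cauchy--Schwarz over the fixed number of cells gives $\E D\le\tfrac12\sqrt{nK}$, hence $D/N_1=o_P(1)$. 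Both arguments are valid; yours is slightly more robust in that it never uses strict positivity of the limiting per-cell surplus (it would survive $e(x)\le 0.5$ a.s.), at the cost of having to track the displaced units, whereas the paper's version makes the within-ball matching exact on a high-probability event. One small imprecision: your claim that each overflow unit ``travels distance at most $2R$'' presumes it is re-matched to a control inside the ball, which is not guaranteed; the fix is the same Cauchy--Schwarz bound $\smash{(D/N_1)^{1/2}(\sum_j\|X_j\|^2/N_1)^{1/2}}=o_P(1)$ that you (and the paper) already invoke for the tail units, so nothing essential is missing.
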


\begin{proof}
First, we prove that for any $\epsilon > 0$, there exists a pair matching scheme $\bar{m}_{\epsilon}$ with $P( \sum_{Z_i = 1} \| X_i - X_{\bar{m}_{\epsilon}(i)} \| / N_1 > \epsilon) \rightarrow 0$.  We use a similar approach to the proof of \cite[Proposition 1]{abadie_imbens2012}.  Let $B < \infty$ be a number so large that $P(X \not \in [-B, B]^d) \leq p^2 \epsilon / 128 \E(\|X\|^2)$.  Then, divide $[-B, B]^d$ into $N$ disjoint cubes $C_1, \ldots, C_N$ of side length at most $\epsilon / 4 \sqrt{d}$.  

Now, let us define our matching $\bar{m}_{\epsilon}$.  Suppose that the following events both occur:
\begin{align}
&\left\{ \frac{1}{n} \sum_{i = 1}^n Z_i > p/2, \frac{1}{n} \sum_{i = 1}^n \| X_i \|^2 < 2 \E(\|X \|^2), \frac{1}{n} \sum_{i = 1}^n \mathbf{1} \{ X_i \not \in [-B, B]^d \} < 2 P( X \not \in [-B, B]^d), N_0 > N_1 \right\} \label{eq:good_event1}\\
&\{ \text{No cube $C_j$ contains more treated than untreated units} \} \label{eq:good_event2}
\end{align}
Then it is possible to pair each treated unit $i$ with $\| X_i \|_{\infty} \leq B$ to an untreated unit in the same cube without ever running out of untreated units, and we will still have leftover untreated units to match the treated units with $\| X_i \|_{\infty} > B$.  Let $\bar{m}_{\epsilon}$ be any matching that does this, and let $\bar{m}_{\epsilon}$ be defined arbitrarily when any of the above events fails.

When (\ref{eq:good_event1}) and (\ref{eq:good_event2}) both occur, the average matching discrepancy is guaranteed to be less than $\epsilon$.  This is because all treated observations in cubes $C_j$ find good matches and very few treated observations fall outside these cubes:
\begin{align*}
\frac{1}{N_1} \sum_{Z_i = 1} \| X_i - X_{\bar{m}_{\epsilon}(i)} \|_2 &= \frac{1}{N_1} \sum_{j = 1}^m \sum_{\substack{i \in C_j\\Z_i = 1}} \| X_i - X_{\bar{m}_{\epsilon}(i)} \|_2 + \frac{1}{N_1} \sum_{Z_i = 1} \| X_i - X_{\bar{m}_{\epsilon}(i)} \|_2 \mathbf{1} \{ \| X_i \|_{\infty} > B \}\\
&\leq \frac{1}{N_1} \sum_{j = 1}^m \sum_{\substack{i \in C_j\\Z_i=1}} \sqrt{d} \epsilon / 4 \sqrt{d} + \left\{ \frac{1}{N_1} \sum_{Z_i = 1} ( \| X_i \|_2 + \| X_{\bar{m}_{\epsilon}(i)} \|_2 )^2 \right\}^{1/2} \left( \frac{1}{N_1} \sum_{Z_i = 1} \mathbf{1} \{ \|X_i\|_{\infty} > B \} \right)^{1/2}\\
&\leq \frac{1}{N_1} \sum_{Z_i = 1} \epsilon/4 + \frac{n}{N_1} \left\{ \frac{2}{n} \sum_{i = 1}^n \| X_i \|^2 \right\}^{1/2} \left( \frac{1}{n} \sum_{i = 1}^n \mathbf{1} \{ \|X_i \|_{\infty} > B \} \right)^{1/2}\\
&\leq \epsilon/4 + \frac{2 \sqrt{8}}{p} \E[\|X\|^2]^{1/2} \{ P(X \not \in [-B, B]^d) \}^{1/2}\\
&\leq \epsilon/4 + \epsilon/4\\
&= \epsilon/2.
\end{align*}

It remains to show that the events (\ref{eq:good_event1}) and (\ref{eq:good_event2}) both occur with probability tending to one.  For (\ref{eq:good_event1}), this follows immediately from four applications of the law of large numbers.  Meanwhile, the following calculation shows that the ratio of untreated to treated observations in any cube $C_j$ with $P(X \in C_j) > 0$ converges to number strictly larger than one:
\begin{align*}
\frac{\sum_{i = 1}^n \mathbf{1} \{ X_i \in C_j, Z_i = 0 \}}{\sum_{i = 1}^n \mathbf{1} \{ X_i \in C_j, Z_i = 1 \}} &\xrightarrow{P} \frac{P(X \in C_j, Z = 0)}{P(X \in C_j, Z = 1)} \\
&= 1 + \frac{P(X \in C_j, Z = 0) - P(X \in C_j, Z = 1)}{P(X \in C_j, Z = 1)}\\
&= 1 + \frac{1}{P(X \in C_j, Z = 1)} \int_{C_j} \{ P(Z = 0 \mid X = x) - P(Z = 1 \mid X = x) \} \, \d P_X(x) \\
&= 1 + \frac{1}{P(X \in C_j, Z = 1)} \int_{C_j} \{ 1 - 2 e(x) \} \, \d P_X(x)\\
&> 1. 
\end{align*}
Here, the last step follows from the assumption that $P \{ e(X) < 0.5 \} = 1$.  Thus, we have shown that $C_j$ contains at least as many untreated units as treated units with probability tending to one.  Since there are only finitely many cubes, the probability that (\ref{eq:good_event2}) tends to one as well.  Thus, we have shown $\P( \sum_{Z_i = 1} \| X_i - X_{\bar{m}_{\epsilon}(i)} \| / N_1 > \epsilon) \rightarrow 0$. 

Now, we are ready to prove the lemma.  Let $\epsilon > 0$ be arbitrary.  By \Cref{lemma:mahalanobis_euclidean_equivalence}, $\frac{1}{N_1} \sum_{Z_i = 1} d_M(X_i, X_{\bar{m}_{\epsilon}(i)}) \leq \kappa(\mathbf{\Sigma}) \epsilon$, with probability approaching one, where $d_M(x, x')$ is the estimated Mahalanobis distance and $\kappa( \mathbf{\Sigma})$ is the constant introduced in \Cref{lemma:mahalanobis_euclidean_equivalence}.  Since $\bar{m}_{\epsilon}$ is a feasible pair matching scheme and $m$ is the optimal pair matching scheme, we must have $\tfrac{1}{N_1} \sum_{Z_i = 1} d_M(X_i, X_{m(i)}) \leq \kappa(\mathbf{\Sigma}) \epsilon$ on the same event.  Applying \Cref{lemma:mahalanobis_euclidean_equivalence} again gives $\sum_{Z_i = 1} \| X_i - X_{m(i)} \| / N_1 \leq \kappa(\mathbf{\Sigma})^2 \epsilon$ with probability approaching one.  Since $\epsilon > 0$ is arbitrary, this proves the result.
\end{proof}

\begin{corollary}
\label{corollary:paired_discrepancy_Lp}
If $P \{ e(X) < 0.5 \} = 1$, then $\sum_{Z_i = 1} | g(X_i) - g(X_{m(i)}) |^2 / N_1 = o_P(1)$ whenever $\E \{ |g(X)|^2 \} < \infty$.
\end{corollary}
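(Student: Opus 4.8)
The plan is to reduce the claim to the case of a bounded Lipschitz integrand and then invoke \Cref{proposition:mahalanobis_discrepancy_vanishes}. The obstacle to a direct argument is that $g$ is only assumed square-integrable, so the discrepancy bound $\sum_{Z_i = 1} \| X_i - X_{m(i)} \| / N_1 \xrightarrow{P} 0$ cannot be applied to $g$ itself. I would therefore fix $\epsilon > 0$ and use the density of bounded Lipschitz functions in $L^2(P_X)$ to choose a bounded Lipschitz $\tilde g$ with $\E \{ |g(X) - \tilde g(X)|^2 \} \leq \epsilon$. Writing $r = g - \tilde g$ and using $(a + b)^2 \leq 2a^2 + 2b^2$, the target average splits as
\[
\frac{1}{N_1} \sum_{Z_i = 1} |g(X_i) - g(X_{m(i)})|^2 \leq \frac{2}{N_1} \sum_{Z_i = 1} |\tilde g(X_i) - \tilde g(X_{m(i)})|^2 + \frac{2}{N_1} \sum_{Z_i = 1} |r(X_i) - r(X_{m(i)})|^2,
\]
and I would bound the two terms separately.

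For the Lipschitz term, the key device is to convert a square into a first power of the distance. If $M = \| \tilde g \|_\infty$ and $L$ is the Lipschitz constant of $\tilde g$, then $|\tilde g(X_i) - \tilde g(X_{m(i)})|^2 \leq 2 M L \, \| X_i - X_{m(i)} \|$, since one factor of the square is at most $2M$ and the other is at most $L \| X_i - X_{m(i)} \|$. Summing, the first term is bounded by $4 M L \cdot \frac{1}{N_1} \sum_{Z_i = 1} \| X_i - X_{m(i)} \|$, which is $o_P(1)$ by \Cref{proposition:mahalanobis_discrepancy_vanishes}.

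For the residual term, I would further bound $|r(X_i) - r(X_{m(i)})|^2 \leq 2 r(X_i)^2 + 2 r(X_{m(i)})^2$ and handle the two resulting averages. The average $\frac{1}{N_1} \sum_{Z_i = 1} r(X_i)^2$ is a sample mean over the i.i.d. treated covariates and converges in probability to $\E \{ r(X)^2 \mid Z = 1 \} = \E\{ r(X)^2 e(X) \}/p \leq \E \{ r(X)^2 \} / p \leq \epsilon / p$, using the law of large numbers and $e(X) \leq 1$. For the matched controls I would exploit that optimal pair matching is injective: the units $\{ m(i) : Z_i = 1 \}$ are $N_1$ distinct untreated observations, so since $r^2 \geq 0$,
\[
\frac{1}{N_1} \sum_{Z_i = 1} r(X_{m(i)})^2 \leq \frac{N_0}{N_1} \cdot \frac{1}{N_0} \sum_{Z_j = 0} r(X_j)^2,
\]
whose probability limit is $\tfrac{1 - p}{p} \, \E \{ r(X)^2 \mid Z = 0 \} \leq \epsilon / p$ by the law of large numbers. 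Positivity of $p$ (Assumption \ref{assumption:primitives}) is what makes these bounds finite. Hence the residual term is, with probability tending to one, at most a fixed constant times $\epsilon / p$.

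Combining the two estimates finishes the proof: for any target $\delta > 0$, choosing $\epsilon$ small enough drives the residual bound below $\delta / 2$ while the Lipschitz term is already $o_P(1)$, so $\P ( \frac{1}{N_1} \sum_{Z_i = 1} |g(X_i) - g(X_{m(i)})|^2 > \delta ) \rightarrow 0$. I expect the main obstacle to be the low regularity of $g$, which forces both the $L^2$-approximation step and the ``square-to-distance'' estimate $|\tilde g(X_i) - \tilde g(X_{m(i)})|^2 \leq 2 M L \| X_i - X_{m(i)} \|$; a secondary point requiring care is the matched-control average, where injectivity of pair matching is used to pass from the data-dependent set of matched controls to the full untreated sample.
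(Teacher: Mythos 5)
Your proof is correct, and it follows the same skeleton as the paper's argument (approximate $g$ in $L^2$ by a Lipschitz function, control the approximation error via the injectivity of pair matching and the law of large numbers, and control the Lipschitz term via \Cref{proposition:mahalanobis_discrepancy_vanishes}); the interesting divergence is in how the squared Lipschitz increment is handled. The paper approximates by a Lipschitz $h$ that need not be bounded, so it must show $\sum_{Z_i=1}\|X_i - X_{m(i)}\|^2/N_1 = o_P(1)$ directly; it does this by writing $\|X_i - X_{m(i)}\|^2 = \|X_i - X_{m(i)}\|^{1/2}\|X_i - X_{m(i)}\|^{3/2}$, applying Cauchy--Schwarz, and invoking the third moment of $\|X\|$ together with \Cref{proposition:mahalanobis_discrepancy_vanishes}. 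You instead take the approximant to be \emph{bounded} Lipschitz and use the estimate $|\tilde g(X_i) - \tilde g(X_{m(i)})|^2 \leq 2ML\|X_i - X_{m(i)}\|$, which reduces the square to a first power and lets you cite \Cref{proposition:mahalanobis_discrepancy_vanishes} immediately. Your route is slightly cleaner and needs no moment condition on $\|X\|$ beyond what \Cref{proposition:mahalanobis_discrepancy_vanishes} already requires, whereas the paper's interpolation step leans on \Cref{assumption:primitives}\ref{item:four_moments}; the paper's version has the minor advantage that its intermediate conclusion $\sum_{Z_i=1}\|X_i - X_{m(i)}\|^2/N_1 = o_P(1)$ is a statement about the matching itself, reusable for unbounded Lipschitz integrands. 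Your treatment of the residual term (splitting into treated and matched-control averages, using injectivity to dominate the latter by the full untreated sample mean) is essentially the paper's step of bounding the two approximation-error sums by $\tfrac{n}{N_1}\cdot\tfrac{1}{n}\sum_{i=1}^n|g(X_i)-h(X_i)|^2$, just written out with conditional expectations.
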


\begin{proof}
Let $g$ satisfy $\E\{ |g(X)|^2 \} < \infty$ and let $\epsilon > 0$ be arbitrary.  Since Lipschitz functions are dense in $L^2$, there exists a function $h$ satisfying $\E \{ | g(X) - h(X)|^2 \} < \epsilon$ and $|h(x) - h(x')| \leq L \| x - x' \|$ for all $x, x'$ and some $L < \infty$.  Thus, we may write:
\begin{align*}
\frac{1}{N_1} \sum_{Z_i = 1} | g(X_i) - g(X_{m(i)})|^2 &\leq \frac{4}{N_1} \sum_{Z_i = 1} | g(X_i) - h(X_i)|^2 + \frac{4}{N_1} \sum_{Z_i = 1} | g(X_{m(i)}) - h(X_{m(i)})|^2\\
&+ \frac{4}{N_1} \sum_{Z_i = 1} | h(X_i) - h(X_{m(i)})|^2\\
&\leq \underbrace{\frac{4n}{N_1} \frac{1}{n} \sum_{i = 1}^n | g(X_i) - h(X_i)|^2}_a + \underbrace{4 L^2 \frac{1}{N_1} \sum_{i = 1}^n \| X_i - X_{m(i)} \|^2}_b.
\end{align*}
By the law of large numbers, the term $a$ converges to $(4/p) \E \{ |g(X) - h(X)|^2 \} < 4 \epsilon / p$.  Meanwhile, the following calculation shows that term $b$ is vanishing:
\begin{align*}
\frac{1}{N_1} \sum_{Z_i = 1} \| X_i - X_{m(i)} \|^2 &= \frac{1}{N_1} \sum_{Z_i = 1} \| X_i - X_{m(i)} \|^{0.5} \| X_i - X_{m(i)} \|^{1.5}\\
&\leq \left( \frac{1}{N_1} \sum_{Z_i = 1} \| X_i - X_{m(i)} \| \right)^{1/2} \left( \frac{1}{N_1} \sum_{Z_i = 1} \| X_i - X_{m(i)} \|^3 \right)^{1/2}\\
&\leq \left( \frac{1}{N_1} \sum_{Z_i = 1} \| X_i - X_{m(i)} \| \right)^{1/2} \left( \frac{8}{N_1} \sum_{Z_i = 1} \| X_i \|^3 + \| X_{m(i)} \|^3 \right)^{1/2}\\
&\leq \left( \frac{1}{N_1} \sum_{Z_i = 1} \| X_i - X_{m(i)} \| \right)^{1/2} \left( \frac{8}{N_1} \sum_{i = 1}^n \| X_i \|^3 \right)^{1/3}
\end{align*}
By \Cref{proposition:mahalanobis_discrepancy_vanishes} and our moment assumptions on $\| X \|$, this upper bound tends to zero.  In particular, it less than $\epsilon$ with probability approaching one.

Hence, combining $a$ and $b$ gives $\sum_{Z_i = 1} | g(X_i) - g(X_{m(i)})|^2 / N_1 \leq 4 \epsilon / p + \epsilon$ with probability approaching one.  Since $\epsilon$ is arbitrary, this proves the lemma.
\end{proof}

% ========================================================
\subsection{Theoretical background on matching with replacement}

\begin{lemma}
\label{lemma:Kn_moments}
Let $K_{i,n}$ be the number of treated observations matched to the $i$-th observation under the matching scheme (\ref{eq:matching_with_replacement}).  Then for every $q \geq 0$, $\sup_{n\geq 1}\E( K_{i,n}^q )$ is bounded by a constant $\kappa_q$ depending only on $q, \delta$ and $d$.
\end{lemma}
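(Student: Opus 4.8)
The plan is to condition on the covariate vector $\mathbf{X} = (X_1, \ldots, X_n)$ and thereby reduce the claim to a pure Euclidean nearest-neighbor counting problem in which the only remaining randomness is the treatment assignments. Since the sample covariance $\hat{\mathbf{\Sigma}}$ is a function of $\mathbf{X}$ alone, conditioning on $\mathbf{X}$ freezes the Mahalanobis geometry: writing $\tilde X_j = \hat{\mathbf{\Sigma}}^{-1/2} X_j$ (and $\tilde X_j = X_j$ on the event that $\hat{\mathbf{\Sigma}}$ is singular, where we set $\hat{\mathbf{\Sigma}}^{-1} = \mathbf{I}$), the match $m_r(j)$ is exactly the Euclidean nearest control to $\tilde X_j$. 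Conditionally on $\mathbf{X}$, the assignments $\{ Z_j \}$ are independent with $\P(Z_j = 1 \mid \mathbf{X}) = e(X_j) \le 1 - \delta$ by the overlap condition in \Cref{assumption:primitives}. It therefore suffices to bound $\E(K_{i,n}^q \mid \mathbf{X})$ by a constant depending only on $q, \delta, d$, uniformly in $\mathbf{X}$ and $n$, and then take expectations.

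Next I would cover the directions emanating from $\tilde X_i$ by finitely many cones $\mathcal{C}_1, \ldots, \mathcal{C}_{L_d}$, each with apex angle strictly less than $\pi/3$, where $L_d$ depends only on $d$ (obtained by covering $S^{d-1}$ with caps of angular radius below $\pi/6$). Let $K_{i,n}^{(\ell)}$ count the treated units matched to $i$ whose displacement $\tilde X_j - \tilde X_i$ lies in $\mathcal{C}_\ell$, so that $K_{i,n} = \sum_{\ell=1}^{L_d} K_{i,n}^{(\ell)}$. The geometric heart of the argument is an elementary fact: if $\tilde X_a - \tilde X_i$ and $\tilde X_b - \tilde X_i$ lie in a common cone with $\| \tilde X_a - \tilde X_i \| \le \| \tilde X_b - \tilde X_i \|$, then the angle bound below $\pi/3$ gives $\| \tilde X_a - \tilde X_b \| < \| \tilde X_b - \tilde X_i \|$. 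Hence if $b$ is matched to $i$ — so the open ball of radius $\| \tilde X_b - \tilde X_i \|$ about $\tilde X_b$ contains no control — then $a$ lies strictly inside this control-free ball and must itself be treated. Ordering the units of cone $\ell$ by increasing distance to $\tilde X_i$ (an ordering that is fixed once $\mathbf{X}$ is given), this shows that the event $\{ K_{i,n}^{(\ell)} \ge t \}$ forces the $t$ nearest units of that cone to be simultaneously treated.

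From here the tail bound is immediate and, crucially, free of circular conditioning: since the distance-ordering is $\mathbf{X}$-measurable and the treatments are conditionally independent, $\P(K_{i,n}^{(\ell)} \ge t \mid \mathbf{X}) \le (1 - \delta)^t$. Summing by parts gives $\E((K_{i,n}^{(\ell)})^q \mid \mathbf{X}) \le \sum_{t \ge 1} q\, t^{q-1} (1 - \delta)^t =: c_{q,\delta} < \infty$ for $q \ge 1$. Applying convexity of $x \mapsto x^q$ to $K_{i,n} = \sum_{\ell=1}^{L_d} K_{i,n}^{(\ell)}$ then yields $\E(K_{i,n}^q \mid \mathbf{X}) \le L_d^{q-1} \sum_\ell \E((K_{i,n}^{(\ell)})^q \mid \mathbf{X}) \le L_d^q c_{q,\delta}$ uniformly in $\mathbf{X}$ and $n$, and taking expectations gives the claim with $\kappa_q = L_d^q c_{q,\delta}$. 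The range $0 \le q < 1$ follows trivially from $K_{i,n}^q \le 1 + K_{i,n}$.

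The hard part is not the probabilistic estimate but arranging the geometry so that overlap can be invoked without circularity. The naive route conditions on the data-dependent control-free ball, which entangles the selection event with the very treatment labels whose distribution one wants to use; conditioning on $\mathbf{X}$ first and replacing the random ball by the fixed within-cone distance-ordering is exactly what legitimizes the clean product bound $(1-\delta)^t$. A secondary subtlety is the reduction to Euclidean geometry: because $\hat{\mathbf{\Sigma}}$ is random, the number of cones would a priori depend on its condition number, but passing to the coordinates $\tilde X_j$ makes the matching genuinely Euclidean, so $L_d$ depends only on $d$. Degenerate configurations (coincident covariates or ties in the matching) are harmless, since the key inequality $\| \tilde X_a - \tilde X_b \| < \| \tilde X_b - \tilde X_i \|$ is strict and thus insensitive to the tie-breaking rule.
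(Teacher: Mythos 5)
Your route is genuinely different from the paper's and, for the most part, sound. The paper proceeds by (i) coupling the true assignment to an i.i.d.\ $\textup{Bernoulli}(1-\delta)$ assignment via a monotonicity argument, (ii) bounding $\P\{ m_r(2) = \cdots = m_r(k) = 1 \}$ through exchangeability plus the $k$-NN in-degree bound of Devroye et al., and (iii) summing over multi-indices and invoking binomial concentration for $\E(N_0^{-\ell})$. You instead condition on $\mathbf{X}$, decompose the matched set by cones around $\tilde X_i$, and observe that $K_{i,n}^{(\ell)} \geq t$ forces the $t$ nearest cone-mates to be simultaneously treated, yielding the geometric tail $\P(K_{i,n}^{(\ell)} \geq t \mid \mathbf{X}) \leq (1-\delta)^t$ directly from conditional independence and overlap. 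This is more direct: it dispenses with the coupling lemma, the combinatorial count over index sequences, and the $N_0$ concentration step, and it correctly avoids the circularity of conditioning on the data-dependent control-free ball. Both arguments ultimately rest on the same cone-covering constant.

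There is, however, one genuine gap: your dismissal of degenerate configurations. The key inequality $\| \tilde X_a - \tilde X_b \| < \| \tilde X_b - \tilde X_i \|$ is strict only when $\tilde X_a \neq \tilde X_i$; for a unit $a$ coincident with $i$ it degenerates to equality, so such a unit may be an untreated competitor tied with $i$, and whether $b$ is matched to $i$ or to $a$ is decided by the tie-breaking rule rather than by your geometric forcing. Since \Cref{assumption:primitives} does not require $X$ to have a density, coincident covariates can occur with positive probability, and the cluster of units at $\tilde X_i$ is exactly where the implication ``$K_{i,n}^{(\ell)} \geq t$ $\Rightarrow$ the $t$ nearest cone-mates are treated'' breaks down: a single untreated twin of $i$ can sit at distance zero while $i$ still absorbs arbitrarily many matches, so the clean product bound $(1-\delta)^t$ no longer applies to that cluster. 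The fix is the same device the paper uses in \Cref{lemma:multiple_matches_probability_estimate}: append the tie-breaking uniforms as an extra coordinate, $\tilde X_j \mapsto (\tilde X_j, r U_j)$ with $r$ small, so that all points are almost surely distinct, the matching with tie-breaking becomes genuine nearest-neighbor matching in $\R^{d+1}$, and your argument applies verbatim with $L_{d+1}$ cones (the $U_j$ are independent of the treatments, so conditioning on $(\mathbf{X}, \mathbf{U})$ still leaves the $Z_j$ conditionally independent with success probability at most $1 - \delta$). With that amendment your proof is complete.
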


\begin{proof}
This result is very similar to \cite[Lemma 3.(iii)]{abadie_imbens2006}, so we defer the proof to Appendix \ref{section:additional_proofs}.  The proof is quite different from theirs, as we do not make as many regularity assumptions.
\end{proof}

\begin{lemma}
\label{lemma:matched_moments}
For any exponent $q > 1$ and any function $g : \R^d \rightarrow \R$ with $\E\{ | g(X)|^q \} < \infty$, we have:
\begin{align*}
\E[ | g(X_{m_r(1)})| \mathbf{1} \{ N_0 > 0 \}] \leq C(d, \delta, q) \E\{ | g(X)|^q \}^{1/q}
\end{align*}
for some constant $C(d, \delta, q) < \infty$ depending only on the dimension $d$, the overlap parameter $\delta$, and the exponent $q$.
\end{lemma}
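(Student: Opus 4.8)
The plan is to bound the target by transferring the expectation from the match point $X_{m_r(1)}$ back to the sampled point $X_1$ via an exchangeability identity, and then to invoke the multiplicity moment bound in \Cref{lemma:Kn_moments}. First I would split according to the treatment status of observation $1$. On the event $\{ Z_1 = 0, N_0 > 0 \}$ the index $j = 1$ attains Mahalanobis distance zero in (\ref{eq:matching_with_replacement}), so any minimizer $m_r(1)$ satisfies $X_{m_r(1)} = X_1$; this portion therefore contributes at most $\E | g(X_1)| \leq \E\{ | g(X)|^q \}^{1/q}$ by Lyapunov's inequality, using $q > 1$. It then remains to control $\E[ Z_1 | g(X_{m_r(1)})| \mathbf{1}\{ N_0 > 0 \}]$.

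The key step is the purely algebraic identity, valid for every realization,
\[
\sum_{i=1}^n Z_i | g(X_{m_r(i)})| \mathbf{1}\{ N_0 > 0 \} = \sum_{j=1}^n | g(X_j)| K_{j,n},
\]
where $K_{j,n} = \sum_{i=1}^n Z_i \mathbf{1}\{ m_r(i) = j \}$ is the number of treated units matched to observation $j$ (this equals zero unless $Z_j = 0$ and $N_0 > 0$). Taking expectations of both sides and using that all $n$ summands on each side are identically distributed across indices, I would divide by $n$ to obtain
\[
\E[ Z_1 | g(X_{m_r(1)})| \mathbf{1}\{ N_0 > 0 \}] = \E[ | g(X_1)| K_{1,n}].
\]
Hölder's inequality with conjugate exponents $q$ and $q' = q/(q-1)$ bounds the right-hand side by $\E\{ | g(X)|^q \}^{1/q} \, \E( K_{1,n}^{q'})^{1/q'}$, and \Cref{lemma:Kn_moments} applied at exponent $q'$ controls the second factor by the constant $\kappa_{q'}^{1/q'}$. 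Combining the two cases yields the claim with $C(d, \delta, q) = 1 + \kappa_{q'}^{1/q'}$.

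The step requiring the most care — and the main obstacle — is the passage from the summed identity to the per-index identity, which relies on the joint law of the data and the matching being invariant under relabeling of the observations. This invariance can fail if ties in (\ref{eq:matching_with_replacement}) are broken in a way that depends on the indices. I would remove this difficulty by attaching i.i.d. tie-breaking marks $U_i$ to the observations and resolving ties in favor of the smallest mark: the augmented sample $\{ (X_i, Z_i, U_i) \}$ is then i.i.d. and $m_r$ becomes a genuinely symmetric function of it, so the required exchangeability holds. Since ties occurring at distance zero still force $X_{m_r(1)} = X_1$, this modification affects neither the $Z_1 = 0$ computation nor the final bound.
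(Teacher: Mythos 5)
Your proposal is correct and follows essentially the same route as the paper: both transfer the expectation from $X_{m_r(1)}$ back to $X_1$ via exchangeability of the observations, arrive at $\E\{ | g(X_1)| K_{1,n} \}$, and conclude by H\"older's inequality together with \Cref{lemma:Kn_moments}. Your separate handling of the $Z_1 = 0$ self-match case and your explicit symmetric tie-breaking are minor refinements of details the paper treats implicitly (it bounds by the total match count rather than the treated-only count, and introduces the tie-breaking marks $U_i$ only in the later appendix).
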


\begin{proof}
The proof is based on \cite[Lemma 6.3]{gyorfi_etal2002}, and exploits the exchangeability of the observations:
\begin{align*}
\E [ | g(X_{m_r(1)})| \mathbf{1} \{ N_0 > 0 \}] &= \frac{1}{n} \sum_{i = 1}^n \E [ | g(X_{m_r(j)}) \mathbf{1} \{ N_0 > 0 \}]\\
&\leq \frac{1}{n} \sum_{j = 1}^n \sum_{i = 1}^n \E [ | g(X_i)| \mathbf{1} \{ m_r(j) = i \}]\\
&= \frac{1}{n} \sum_{i = 1}^n \E \left[ | g(X_i) | \sum_{j = 1}^n \mathbf{1} \{ m_r(j) = i \} \right] \\
&= \frac{1}{n} \sum_{i = 1}^n \E\{ | g(X_i)| K_{i,n} \}\\
&= \E \{ | g(X_1)| K_{1,n} \}.
\end{align*}
Now the conclusion follows from H\"older's inequality and \Cref{lemma:Kn_moments}.
\end{proof}

\begin{lemma}
\label{lemma:match_quality}
$\| X_1 - X_{m_r(1)} \| \mathbf{1} \{ N_0 > 0 \} \xrightarrow{P} 0$.
\end{lemma}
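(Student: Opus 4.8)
The plan is to reduce the statement to a classical nearest-neighbor consistency estimate, after first disposing of the mismatch between the Mahalanobis metric used to define the match and the Euclidean norm appearing in the claim. Because observation $1$ is matched to its Mahalanobis-nearest untreated neighbor, I would first invoke \Cref{lemma:mahalanobis_euclidean_equivalence}: on the high-probability event $E_n = \{\kappa(\mathbf{\Sigma})^{-1}\|x-x'\| \le d_M(x,x') \le \kappa(\mathbf{\Sigma})\|x-x'\| \text{ for all } x,x'\}$, the two distances are comparable up to the deterministic constant $\kappa = \kappa(\mathbf{\Sigma})$. Writing $j^\star$ for the \emph{Euclidean}-nearest untreated observation to $X_1$ (when $N_0 > 0$), a short sandwich shows that on $E_n$ one has $\|X_1 - X_{m_r(1)}\| \le \kappa\, d_M(X_1, X_{m_r(1)}) \le \kappa\, d_M(X_1, X_{j^\star}) \le \kappa^2 \|X_1 - X_{j^\star}\|$, where the middle inequality uses that $m_r(1)$ minimizes $d_M$ over untreated units and $j^\star$ is one such unit. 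Since $\kappa$ is a fixed constant and $P(E_n^c) \to 0$, it suffices to prove the Euclidean statement $\|X_1 - X_{j^\star}\|\,\mathbf{1}\{N_0 > 0\} \xrightarrow{P} 0$.

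For the Euclidean statement, I would fix $\epsilon > 0$ and condition on $X_1$. If $Z_1 = 0$ the distance is zero, so the bad event forces every untreated unit among observations $2,\dots,n$ to lie outside the Euclidean ball $B(X_1,\epsilon)$. For a single $j \ge 2$, overlap (\Cref{assumption:primitives}) gives $P(X_j \in B(X_1,\epsilon),\, Z_j = 0 \mid X_1) \ge \delta\, P_X(B(X_1,\epsilon))$, and independence across $j$ yields
\[
P\big(\|X_1 - X_{j^\star}\| > \epsilon,\, N_0 > 0 \,\big|\, X_1\big) \le \big(1 - \delta\, P_X(B(X_1,\epsilon))\big)^{\,n-1} \le \exp\!\big(-(n-1)\,\delta\, P_X(B(X_1,\epsilon))\big).
\]
Taking expectations over $X_1$ bounds the unconditional probability by $\E\big[\exp(-(n-1)\,\delta\, P_X(B(X_1,\epsilon)))\big]$.

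To finish, I would use the measure-theoretic fact that the support of $P_X$ carries full mass, so that for $P_X$-almost every $x$ one has $P_X(B(x,\epsilon)) > 0$ for every $\epsilon > 0$. Consequently the integrand $\exp(-(n-1)\,\delta\, P_X(B(X_1,\epsilon)))$ tends to $0$ almost surely, and since it is bounded by $1$, the bounded convergence theorem gives $\E[\,\cdot\,] \to 0$. Combining this with the reduction of the first paragraph shows $P(\|X_1 - X_{m_r(1)}\|\,\mathbf{1}\{N_0 > 0\} > \kappa^2 \epsilon) \to 0$ for every $\epsilon > 0$, which is the desired convergence in probability.

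The only genuinely delicate points are the interchange of the Mahalanobis and Euclidean metrics—handled cleanly by the sandwich on $E_n$, which keeps $\kappa$ deterministic—and the almost-everywhere positivity of $P_X(B(X_1,\epsilon))$, which lets the exponential bound vanish pointwise. Everything else is the standard nearest-neighbor estimate underlying \cite[Lemma 6.3]{gyorfi_etal2002}, which the paper already relies on in \Cref{lemma:matched_moments}, so I expect no substantial obstacle beyond bookkeeping.
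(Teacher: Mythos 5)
Your proposal is correct and follows essentially the same route as the paper's proof: reduce to Euclidean nearest-neighbor matching via the sandwich from \Cref{lemma:mahalanobis_euclidean_equivalence}, then bound the conditional probability that no untreated unit falls in $B(X_1,\epsilon)$ by a product over $j \geq 2$ that vanishes for almost every $X_1$ by the definition of support, finishing with dominated convergence. The only cosmetic differences are that you integrate against the unconditional law of $X_1$ rather than $P_X(\cdot \mid Z=1)$ and make the overlap bound $1-\delta P_X(B(x,\epsilon))$ explicit, neither of which changes the argument.
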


\begin{proof}
First, we prove that $\| X_1 - X_{m_e(1)} \| \mathbf{1} \{ N_0 > 0 \} \xrightarrow{P} 0$ where $m_e(\cdot)$ forms matches using Euclidean distance rather than Mahalanobis distance.  We do this by mimicking the proof of \cite[Lemma 6.1]{gyorfi_etal2002}.  For any $\epsilon > 0$, we have:
\begin{align*}
\P( \| X_1 - X_{m_e(1)} \| \mathbf{1} \{ N_0 > 0 \} > \epsilon) &\leq \P( \| X_1 - X_{m_e(1)} \| \mathbf{1} \{ N_0 > 0 \} > \epsilon \mid Z_1 = 1)\\
&= \int_{\R^d} \P( \| x - X_{m_e(1)} \| \mathbf{1} \{ N_0 > 0 \} > \epsilon \mid Z_1 = 1, X_1 = x) \, \d P_X(x \mid Z = 1)\\
&\leq \int_{\R^d} \prod_{i = 2}^n P \{ Z = 1 \text{ or } X \not \in \bar{\mathbb{B}}_{\epsilon}(x) \} \, \d P_X(x \mid Z = 1)\\
&= \int_{\R^d} P \{ Z = 1 \text{ or } X \not \in \bar{\mathbb{B}}_{\epsilon}(x) \}^{n -1 } \, \d P_X(x \mid Z = 1).
\end{align*}
For each $x$ in the support of $P(X \mid Z = 1)$, the probability $P \{ Z = 1 \text{ or } X \not \in \mathbb{B}_{\epsilon}(x) \}$ is strictly less than one by overlap and the definition of support.  Therefore, $P\{ Z =1 \text{ or }X \not \in \mathbb{B}_{\epsilon}(x) \}^{n - 1} \rightarrow 0$ for $P( X \mid Z = 1)$-almost every $x$.  Hence, by the dominated convergence theorem, $\P( \| X_1 - X_{m_e(1)} \| > \epsilon) \rightarrow 0$.  Since $\epsilon$ is arbitrary, this means $\| X_1 - X_{m_e(1)} \| = o_P(1)$ under Euclidean matching.  

To extend this result to Mahalanobis matching, we reason as follows.  Suppose that the event in \Cref{lemma:mahalanobis_euclidean_equivalence} occurs, so that Mahalanobis and Euclidean distances are equivalent.  Then we may write:
\begin{align*}
\| X_1 - X_{m_r(1)} \| &\leq \kappa(\mathbf{\Sigma}) d_M(X_1, X_{m_r(1)}) \leq \kappa(\mathbf{\Sigma}) d_M(X_1, X_{m_e(1)}) \leq \kappa(\mathbf{\Sigma})^2 \| X_1 - X_{m_e(1)} \|.
\end{align*}
Since this bound holds with probability approaching one and $\| X_1 - X_{m_e(1)} \| \xrightarrow{P} 0$, we have $\| X_1 - X_{m_r(1)} \| \xrightarrow{P} 0$ as well.
\end{proof}

\begin{lemma}
\label{lemma:replaced_Lp_convergence}
Let $g : \R^d \rightarrow \R$ satisfy $\E \{ | g(X)|^q \} < \infty$ for some $q > 1$.  Then $\E[| g(X_1) - g(X_{m_r(1)})|^r \mathbf{1} \{ N_0 > 0 \}] \rightarrow 0$ for any $r \in [1, q)$.
\end{lemma}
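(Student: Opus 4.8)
The plan is to upgrade the match-quality statement of \Cref{lemma:match_quality} from the covariate $X$ to the (possibly rough) function $g$, using the moment bound of \Cref{lemma:matched_moments} to absorb the approximation error. The obstacle is that $g$ is only assumed to lie in $L^q(P_X)$, so it may be badly discontinuous and the convergence $\| X_1 - X_{m_r(1)} \| \mathbf{1} \{ N_0 > 0 \} \xrightarrow{P} 0$ does not by itself control $g(X_1) - g(X_{m_r(1)})$. I would therefore approximate $g$ by a bounded, uniformly continuous function $h$, treat the smooth part with \Cref{lemma:match_quality}, and control the residual $g - h$ with \Cref{lemma:matched_moments}.

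In detail, fix $\epsilon > 0$ and pick a bounded, uniformly continuous $h : \R^d \to \R$ with $\| g - h \|_{L^q(P_X)} < \epsilon$; such an $h$ exists because bounded continuous functions are dense in $L^q(P_X)$ for $q < \infty$. Since $r \in [1, q)$, the functional $Y \mapsto \E[ |Y|^r \mathbf{1} \{ N_0 > 0 \}]^{1/r}$ is an $L^r$-seminorm, so Minkowski's inequality yields
\begin{align*}
\E\left[ | g(X_1) - g(X_{m_r(1)})|^r \mathbf{1} \{ N_0 > 0 \} \right]^{1/r} \leq a_n + b_n + c_n,
\end{align*}
where $a_n, b_n, c_n$ are the analogous $L^r$-seminorms of $g(X_1) - h(X_1)$, $h(X_1) - h(X_{m_r(1)})$, and $h(X_{m_r(1)}) - g(X_{m_r(1)})$, respectively. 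I then bound the three pieces separately. For $a_n$: since $\mathbf{1} \{ N_0 > 0 \} \leq 1$ and $r \leq q$, Lyapunov's inequality on the probability space gives $a_n \leq \| g - h \|_{L^r(P_X)} \leq \| g - h \|_{L^q(P_X)} < \epsilon$. For $c_n$: I apply \Cref{lemma:matched_moments} to the function $\phi = | g - h|^r$ with exponent $q/r > 1$ (valid because $r < q$); since $\E\{ \phi(X)^{q/r} \} = \E\{ | g(X) - h(X)|^q \} < \epsilon^q$, this gives $c_n^r = \E[ \phi(X_{m_r(1)}) \mathbf{1} \{ N_0 > 0 \}] \leq C(d, \delta, q/r)\, \epsilon^r$, i.e. $c_n \leq C(d, \delta, q/r)^{1/r}\, \epsilon$. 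For $b_n$: boundedness and uniform continuity of $h$ together with $\| X_1 - X_{m_r(1)} \| \mathbf{1} \{ N_0 > 0 \} \xrightarrow{P} 0$ (\Cref{lemma:match_quality}) give $| h(X_1) - h(X_{m_r(1)})|^r \mathbf{1} \{ N_0 > 0 \} \xrightarrow{P} 0$ while staying bounded by $(2 \| h \|_{\infty})^r$, so the bounded convergence theorem forces $b_n \to 0$.

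Combining the bounds yields $\limsup_{n \to \infty} \E[ | g(X_1) - g(X_{m_r(1)})|^r \mathbf{1} \{ N_0 > 0 \}]^{1/r} \leq \{ 1 + C(d, \delta, q/r)^{1/r} \}\, \epsilon$, and letting $\epsilon \downarrow 0$ finishes the proof. The one genuinely delicate step is $c_n$: because $X_{m_r(1)}$ is a nearest-neighbor index and is \emph{not} distributed like a fresh draw of $X$, the error $g(X_{m_r(1)}) - h(X_{m_r(1)})$ cannot be controlled by $\| g - h \|_{L^1(P_X)}$ directly, and it is precisely \Cref{lemma:matched_moments} --- with the exponent bookkeeping $q/r > 1$ --- that makes this term small uniformly in $n$.
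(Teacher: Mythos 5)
Your proposal is correct and follows essentially the same route as the paper: approximate $g$ in $L^q$ by a bounded continuous function $h$, split into three terms, control the term at the matched point via \Cref{lemma:matched_moments} applied to $|g-h|^r$ with exponent $q/r$, and handle the smooth term via \Cref{lemma:match_quality}. The only (immaterial) differences are that you use Minkowski's inequality where the paper uses the $3^{r-1}$ convexity bound, and the bounded convergence theorem where the paper invokes Vitali's theorem.
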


\begin{proof}
Let $\epsilon > 0$ be arbitrary.  There exists a bounded continuous function $h$ such that $\E \{ | h(X) - g(X)|^q \} < \epsilon$, since $C_b(\R^d)$ is dense in $L^q$ for any $q$.  Thus, we may write:
\begin{align*}
\E[ | g(X_1) - g(X_{m_r(1)})|^r \mathbf{1} \{ N_0 > 0 \}] &\leq 3^{r - 1}\E[| g(X_1) - h(X_1)|^r \mathbf{1} \{ N_0 > 0 \}]\\
&+ 3^{r-1} \E[| g(X_{m_r(1)}) - h(X_{m_r(1)})|^r \mathbf{1} \{ N_0 > 0 \}]\\
&+ 3^{r - 1} \E[ | h(X_1) - h(X_{m_r(1)})|^r \mathbf{1} \{ N_0 > 0 \}]\\
&\leq 3^{r - 1} \epsilon^{r/q} + 3^{r - 1} C(d, \delta, q/r) \E [ | g(X) - h(X)|^q] &\text{\Cref{lemma:matched_moments}}\\
&+ 3^{r - 1} \E[ |h(X_1) - h(X_{m_r(1)})|^r \mathbf{1} \{ N_0 > 0 \}]\\
&\leq 3^{r - 1} \epsilon^{r/q} + 3^{r - 1} C(d, \delta, q/r) \epsilon\\
&+ 3^{r - 1} \E[| h(X_1) - h(X_{m_r(1)}|^r \mathbf{1} \{ N_0 > 0 \}].
\end{align*}
Since $h$ is continuous and $X_{m_r(1)} \xrightarrow{P} X_1$ (\Cref{lemma:match_quality}), $|h(X_{m_r(1)}) - h(X_1)|^r \mathbf{1} \{ N_0 > 0 \} = o_P(1)$.  Moreover, this random variable has a uniformly bounded higher moment by \Cref{lemma:matched_moments}.  Thus, Vitali's convergence theorem gives $\E[| h(X_1) - h(X_{m_r(1)})|^r \mathbf{1} \{ N_0 > 0 \}] \rightarrow 0$.  In particular, for all large $n$ we have:
\begin{align*}
\E[| g(X_1) - g(X_{m_r(1)})|^r \mathbf{1} \{ N_0 > 0 \}] &\leq 3^{r - 1} \epsilon^{r/q} + 3^{r - 1} C(d, \delta, q/r) \epsilon + \epsilon.
\end{align*}
Since $\epsilon$ is arbitrary, this proves the lemma.
\end{proof}

\begin{corollary}
\label{corollary:empirical_Lp_convergence}
Let $g : \R^d \rightarrow \R$ satisfy $\E \{ |g(X)|^q \} < \infty$ for some $q > 1$.  Then $\sum_{Z_i = 1} | g(X_i) - g(X_{m_r(i)})|^r / N_1 = o_P(1)$ for any $r \in [1, q)$.
\end{corollary}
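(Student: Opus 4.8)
The plan is to deduce this empirical statement from the single-observation bound in \Cref{lemma:replaced_Lp_convergence} by an exchangeability argument, isolating the random normalizer $N_1$ so that a clean Markov bound applies. Write $A_n = \frac{1}{N_1} \sum_{Z_i = 1} | g(X_i) - g(X_{m_r(i)})|^r$ and factor it as $A_n = \frac{n}{N_1} \cdot B_n$, where $B_n = \frac{1}{n} \sum_{i=1}^n Z_i | g(X_i) - g(X_{m_r(i)})|^r$. Since $N_1 / n \xrightarrow{P} p := P(Z = 1) \in (0, 0.5)$ by the law of large numbers, the factor $n/N_1$ is $O_P(1)$, so it suffices to show $B_n = o_P(1)$.

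To control $B_n$, I would pass to its expectation on the (high-probability) event $\{ N_0 > 0 \}$ and then apply Markov's inequality, since $B_n \geq 0$. The complementary event $\{ N_0 = 0 \}$ has probability $p^n \to 0$ and can be discarded. The crucial step is that the i.i.d. data $\{ (X_i, Y_i, Z_i) \}_{i \leq n}$ are exchangeable and the matching map $m_r$ is equivariant under relabeling of indices (ties have probability zero for continuous covariates and may otherwise be broken symmetrically), so every summand of $B_n \mathbf{1} \{ N_0 > 0 \}$ has the same expectation. This is exactly the device used in the proof of \Cref{lemma:matched_moments}, and it gives
\begin{align*}
\E[ B_n \mathbf{1} \{ N_0 > 0 \} ] = \E[ Z_1 | g(X_1) - g(X_{m_r(1)})|^r \mathbf{1} \{ N_0 > 0 \} ] \leq \E[ | g(X_1) - g(X_{m_r(1)})|^r \mathbf{1} \{ N_0 > 0 \} ].
\end{align*}
By \Cref{lemma:replaced_Lp_convergence}, which applies because $r \in [1, q)$, the right-hand side tends to zero, so $\E[ B_n \mathbf{1} \{ N_0 > 0 \} ] \to 0$ and hence $B_n = o_P(1)$.

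Finally I would reassemble the pieces: on $\{ N_0 > 0, N_1 > 0 \}$, whose probability tends to one, we have $A_n = (n/N_1) B_n$ with $n/N_1 = O_P(1)$ and $B_n = o_P(1)$, so $A_n = o_P(1)$ by Slutsky; the convention $A_n = 0$ when $N_1 = 0$ renders the remaining exceptional event harmless. The only real subtlety — and the step deserving the most care — is the exchangeability/equivariance reduction that replaces the empirical average by the single-term expectation of \Cref{lemma:replaced_Lp_convergence}; once that identity is in place the result follows immediately, with the $O_P(1)$ normalizer and the negligibility of $\{ N_0 = 0 \}$ being routine bookkeeping.
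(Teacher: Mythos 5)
Your proposal is correct and is precisely the argument the paper compresses into ``immediate from Lemma \ref{lemma:replaced_Lp_convergence} and Markov's inequality'': the exchangeability reduction to the single-term expectation, the Markov bound, and the $n/N_1 = O_P(1)$ normalization are exactly the intended steps. No substantive differences to report.
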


\begin{proof}
This is immediate from \Cref{lemma:replaced_Lp_convergence} and Markov's inequality.
\end{proof}

\begin{lemma} \label{lemma:replaced_wlln}
For any $q > 1$ and any function $g : \R^d \rightarrow \R$ with $\E \{ | g(X)|^q \} < \infty$, the following hold:
\begin{align*}
\frac{1}{N_1} \sum_{Z_i = 1} g(X_i) \xrightarrow{P} \E\{ g(X) \mid Z = 1 \} \quad \text{and} \quad \frac{1}{N_1} \sum_{Z_i = 1} g(X_{m_r(i)}) \xrightarrow{P} \E \{ g(X) \mid Z = 1 \}.
\end{align*}
\end{lemma}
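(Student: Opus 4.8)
The plan is to prove the two convergence statements separately: the first is a routine application of the law of large numbers, while the second is reduced to the first using the matching-discrepancy bound already established in \Cref{corollary:empirical_Lp_convergence}.

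For the first statement, I would rewrite the average over treated units as a ratio of full-sample averages,
\begin{align*}
\frac{1}{N_1} \sum_{Z_i = 1} g(X_i) = \frac{ n^{-1} \sum_{i = 1}^n Z_i g(X_i)}{ n^{-1} \sum_{i = 1}^n Z_i}.
\end{align*}
Since $\E \{ |g(X)|^q \} < \infty$ with $q > 1$, Lyapunov's inequality gives $\E \{ |g(X)| \} < \infty$, so the law of large numbers applies to both $Z_i g(X_i)$ and $Z_i$. The numerator converges to $\E \{ Z g(X) \} = p \, \E \{ g(X) \mid Z = 1 \}$ and the denominator converges to $p = P(Z = 1) > 0$, so the continuous mapping theorem yields the claimed limit $\E \{ g(X) \mid Z = 1 \}$.

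For the second statement, I would bound the difference between the two averages by the average matching discrepancy. By the triangle inequality,
\begin{align*}
\left| \frac{1}{N_1} \sum_{Z_i = 1} g(X_{m_r(i)}) - \frac{1}{N_1} \sum_{Z_i = 1} g(X_i) \right| \leq \frac{1}{N_1} \sum_{Z_i = 1} | g(X_i) - g(X_{m_r(i)}) |.
\end{align*}
Applying \Cref{corollary:empirical_Lp_convergence} with $r = 1$, which is admissible since $q > 1$, the right-hand side is $o_P(1)$. Combining this with the first statement shows that the two averages share the same probability limit, giving $\frac{1}{N_1} \sum_{Z_i = 1} g(X_{m_r(i)}) \xrightarrow{P} \E \{ g(X) \mid Z = 1 \}$.

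The argument is short because the substantive work is already done: the nontrivial fact that matching with replacement balances arbitrary $L^q$ functions of the covariates is precisely the content of \Cref{corollary:empirical_Lp_convergence} and the chain of lemmas (\Cref{lemma:match_quality}, \Cref{lemma:matched_moments}) behind it. The only point requiring mild care is the event $\{ N_0 = 0 \}$ on which $m_r$ is undefined; but this event has probability tending to zero since $P(Z = 0) > 0.5$, so it does not affect convergence in probability and is in any case already absorbed into the statement of \Cref{corollary:empirical_Lp_convergence}. Accordingly, I do not anticipate any genuine obstacle in this lemma.
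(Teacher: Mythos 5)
Your proposal is correct and follows essentially the same route as the paper: the first limit via the law of large numbers and continuous mapping theorem, and the second by bounding the difference of the two averages by $\sum_{Z_i=1} |g(X_i) - g(X_{m_r(i)})|/N_1$ and invoking \Cref{corollary:empirical_Lp_convergence} with $r=1$. The only difference is cosmetic — you spell out the ratio-of-averages step and the $\{N_0 = 0\}$ caveat that the paper leaves implicit.
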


\begin{proof}
The convergence $\sum_{Z_i = 1} g(X_i) / N_1 \xrightarrow{P} \E \{ g(X) \mid Z = 1 \}$ is simply the law of large numbers and the continuous mapping theorem.  For the other average, we reason as follows:
\begin{align*}
\left| \frac{1}{N_1} \sum_{Z_i = 1} g(X_{m_r(i)}) - \E \{ g(X) \mid Z = 1\} \right| &\leq  \frac{1}{N_1} \sum_{Z_i = 1} | g(X_i) - g(X_{m_r(i)})| + \left| \frac{1}{N_1} \sum_{Z_i = 1} g(X_i) - \E \{ g(X) \mid Z = 1\} \right| 
\end{align*}
The two terms in the upper bound tend to zero by \Cref{corollary:empirical_Lp_convergence} and the law of large numbers, respectively.
\end{proof}

% ========================================================
\subsection{Randomization test critical values}

Throughout this section, we will make use of the following quantities:
\begin{align}
s_n^2 = \frac{1}{N_1} \sum_{i \in \M} \sigma^2(X_i), \quad \hat{s}_n^2 = \frac{1}{N_1} \sum_{Z_i = 1} ( \epsilon_i - \epsilon_{m(i)})^2, \quad s^2 = 2 \E \{ \sigma^2(X) \mid Z = 1 \} \label{eq:randomization_variances}
\end{align}

\subsubsection{The difference-of-means statistic}

\begin{lemma}
\label{lemma:sigma_star_limit}
Let $P \in H_0$ satisfy $P \{ e(X) < 0.5 \} = 1$.  Then $s_n$ and $\hat{s}_n$ both converge to $s$ in probability.
\end{lemma}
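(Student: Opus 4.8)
The plan is to treat $s_n$ and $\hat s_n$ separately, in each case conditioning on $\F_n = \sigma(\{(X_i, Z_i)\}_{i \le n})$. This conditioning is what makes the argument go through: the optimal matching map $m$ depends only on the covariates, so it is $\F_n$-measurable, and given $\F_n$ the residuals $\epsilon_i = Y_i - \mu(X_i)$ are independent and mean zero with $\E(\epsilon_i^2 \mid \F_n) = \sigma^2(X_i)$. Two earlier results do the heavy lifting: Corollary \ref{corollary:paired_discrepancy_Lp}, which says optimal matching balances any square-integrable function of $X$, and the Conditional WLLN (Lemma \ref{lemma:conditional_wlln}). I would first note that the moment assumption ($Y$ has more than four moments) yields $\E\{\sigma^4(X)\} \le \E(Y^4) < \infty$ and $\E(\epsilon^4) < \infty$, which is all the integrability the argument needs.

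For $s_n$, since pair matching uses $N_1$ distinct controls, $|\M| = 2N_1$ and I would split $s_n^2 = \frac{1}{N_1} \sum_{Z_i = 1} \sigma^2(X_i) + \frac{1}{N_1} \sum_{Z_i = 1} \sigma^2(X_{m(i)})$. The first average converges to $\E\{\sigma^2(X) \mid Z = 1\}$ by the law of large numbers. For the second, I would write it as the first average plus $\frac{1}{N_1} \sum_{Z_i = 1}\{\sigma^2(X_{m(i)}) - \sigma^2(X_i)\}$ and bound the remainder via Cauchy--Schwarz by $(\frac{1}{N_1}\sum_{Z_i=1}|\sigma^2(X_{m(i)}) - \sigma^2(X_i)|^2)^{1/2}$, which is $o_P(1)$ by Corollary \ref{corollary:paired_discrepancy_Lp} applied to $g = \sigma^2$. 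Hence $s_n^2 \xrightarrow{P} 2\E\{\sigma^2(X) \mid Z = 1\} = s^2$, and the continuous mapping theorem gives $s_n \xrightarrow{P} s$.

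For $\hat s_n$, I would expand $(\epsilon_i - \epsilon_{m(i)})^2 = \epsilon_i^2 - 2\epsilon_i\epsilon_{m(i)} + \epsilon_{m(i)}^2$ and handle the three averages. The first, $\frac{1}{N_1}\sum_{Z_i=1}\epsilon_i^2$, converges to $\E\{\sigma^2(X)\mid Z = 1\}$ by the law of large numbers. For the matched-control average $\frac{1}{N_1}\sum_{Z_i=1}\epsilon_{m(i)}^2$, its conditional mean given $\F_n$ is $\frac{1}{N_1}\sum_{Z_i=1}\sigma^2(X_{m(i)})$, which converges to $\E\{\sigma^2(X)\mid Z = 1\}$ exactly as in the $s_n$ step; because the matched controls are distinct, the summands are conditionally independent, so the conditional variance is at most $\frac{1}{N_1^2}\sum_{Z_i=1}\E(\epsilon_{m(i)}^4 \mid \F_n) = \frac{1}{N_1}\cdot\frac{1}{N_1}\sum_{Z_i=1}\mu_4(X_{m(i)})$ with $\mu_4(x) = \E(\epsilon^4\mid X = x)$. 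I would bound $\frac{1}{N_1}\sum_{Z_i=1}\mu_4(X_{m(i)}) \le \frac{N_0}{N_1}\cdot\frac{1}{N_0}\sum_{Z_j=0}\mu_4(X_j) = O_P(1)$ by dominating the matched controls with all controls and using $N_0/N_1 \xrightarrow{P} (1-p)/p$, so the conditional variance is $o_P(1)$ and Lemma \ref{lemma:conditional_wlln} delivers convergence to $\E\{\sigma^2(X)\mid Z = 1\}$. For the cross term $\frac{2}{N_1}\sum_{Z_i=1}\epsilon_i\epsilon_{m(i)}$, the conditional mean is zero; since for distinct treated $i, i'$ the four indices $i, m(i), i', m(i')$ are all distinct, the products are conditionally uncorrelated and the conditional variance equals $\frac{4}{N_1^2}\sum_{Z_i=1}\sigma^2(X_i)\sigma^2(X_{m(i)})$, which is $o_P(1)$ by the same domination argument, so Lemma \ref{lemma:conditional_wlln} makes this term vanish. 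Summing gives $\hat s_n^2 \xrightarrow{P} 2\E\{\sigma^2(X)\mid Z = 1\} = s^2$, and continuous mapping finishes.

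The main obstacle is controlling the averages taken over the matched controls, which are neither an i.i.d.\ sample nor independent of the treated covariates and are selected by a data-dependent map. The resolution rests on three structural facts I would emphasize: that $m$ is $\F_n$-measurable, so conditioning on $\F_n$ restores independence among the residuals; that pair matching without replacement makes the matched-control residuals conditionally independent across pairs, killing all covariance cross-terms; and that nonnegative moment functions of the matched controls can be dominated by the corresponding sum over \emph{all} controls, which is $O_P(1)$ after scaling by $N_0/N_1$. Once these are in place, convergence of the conditional means is precisely the matching-balance statement of Corollary \ref{corollary:paired_discrepancy_Lp}, and the vanishing of the conditional variances follows from the $1/N_1$ prefactor.
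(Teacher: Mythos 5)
Your proof is correct and follows essentially the same route as the paper's: the $s_n$ part is identical (LLN for the treated average plus Corollary~\ref{corollary:paired_discrepancy_Lp} applied to $\sigma^2$), and the $\hat s_n$ part rests on the same ingredients — $\F_n$-measurability of $m$, conditional independence across disjoint pairs, and the conditional Chebyshev argument of Lemma~\ref{lemma:conditional_wlln}. The only cosmetic difference is that you expand $(\epsilon_i-\epsilon_{m(i)})^2$ into three separately-controlled terms, whereas the paper treats $\hat s_n^2 - s_n^2$ as a single conditionally centered sum and bounds its conditional variance in one step via $\E\{(\epsilon_i-\epsilon_{m(i)})^4\mid\F_n\}$.
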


\begin{proof}
The following simple calculation shows that $s_n$ converges to the claimed limit.
\begin{align*}
s_n^2 &= \frac{1}{N_1} \sum_{Z_i = 1} \{ \sigma^2(X_i) + \sigma^2(X_{m(i)}) \}\\
&= \frac{1}{N_1} \sum_{Z_i = 1} 2 \sigma^2(X_i) + 
\underbrace{\frac{1}{N_1} \sum_{Z_i = 1} \{ \sigma^2(X_{m(i)}) - \sigma^2(X_i) \}}_{= o_P(1) \text{ by \Cref{corollary:paired_discrepancy_Lp}}}\\
&\xrightarrow{P} 2 \E \{ \sigma^2(X) \mid Z = 1 \}\\
&= s^2.
\end{align*}

To prove that $\hat{s}_n$ converges to the same limit, we will show that the difference between $\hat{s}_n^2$ and $s_n^2$ is vanishing.  Let $\F_n = \sigma(\{ (X_i, Z_i) \}_{i \leq n})$ and observe that $\E( \hat{s}_n^2 - s_n^2 \mid \F_n) = 0$.  Moreover, the conditional variance can be bounded as follows:
\begin{align*}
\Var( \hat{s}_n^2 - s_n^2 \mid \F_n) &= \Var \left( \frac{1}{N_1} \sum_{Z_i = 1} ( \epsilon_i - \epsilon_{m(i)})^2 \, \bigg| \, \F_n \right)\\
&\leq \frac{1}{N_1^2} \sum_{Z_i = 1} \E\{  ( \epsilon_i - \epsilon_{m(i)})^4 \mid \F_n \}\\
&\leq \frac{8}{N_1^2} \sum_{Z_i = 1} \{ \E( | \epsilon_i |^4 \mid \F_n) + \E( | \epsilon_{m(i)}|^4 \mid \F_n) \}\\
&\leq \frac{8}{N_1} \frac{n}{N_1} \frac{1}{n} \sum_{i = 1}^n \E( | \epsilon_i|^4 \mid \F_n)
\end{align*}
Since $\sum_{i = 1}^n \E[ \epsilon_i^4 \mid \F_n] / n = O_P( \E[ \epsilon^4]) = O_P(1)$ by Markov's inequality and our assumption that $Y$ has four moments, this upper bound tends to zero at rate $O_P(1/N_1) = O_P(1/n)$.  Hence, by \Cref{lemma:conditional_wlln}, $\hat{s}_n^2 - s_n^2 = o_P(1)$.
\end{proof}

\begin{lemma}
\label{lemma:dm_critical_value}
Assume $P \in H_0$.  For any $\alpha \in (0, 1)$, define the randomization quantile
\begin{align}
\hat{q}_{1 - \alpha}^{\dm} = \inf \{ t \in \R \, : \, \P( | \hat{\tau}_*^{\dm}| \leq t \mid \mathcal{D}_n) \geq 1 - \alpha \} \label{dm_quantile}
\end{align}
Then $\hat{q}_{1 - \alpha}^{\dm} = O_P(n^{-1/2})$.  If $P$ also satisfies $P \{ e(X) < 0.5 \} = 1$, then we also have $\hat{q}_{1 - \alpha}^{\dm} = z_{1 - \alpha/2} s N_1^{-1/2} + o_P(n^{-1/2})$, where $s$ is defined in \Cref{eq:randomization_variances}.
\end{lemma}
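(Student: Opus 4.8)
The plan is to exploit the simple structure of the randomization distribution, working throughout on the event $N_0 \geq N_1$ (of probability tending to one), where $\mathcal{M}$ is defined. Conditional on $\mathcal{D}_n$, the paired permutation test independently flips the treatment label within each matched pair, so the randomization statistic may be written as $\hat{\tau}_*^{\dm} = N_1^{-1} \sum_{Z_i = 1} \xi_i (Y_i - Y_{m(i)})$, where the $\xi_i$ are i.i.d.\ Rademacher signs independent of $\mathcal{D}_n$. In particular, conditional on $\mathcal{D}_n$ the statistic has mean zero and variance $v_n^2 / N_1$, where $v_n^2 = N_1^{-1} \sum_{Z_i = 1} (Y_i - Y_{m(i)})^2$.

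For the first claim I would bound the quantile directly by Chebyshev's inequality in the randomization distribution. Since $\P(|\hat{\tau}_*^{\dm}| > t \mid \mathcal{D}_n) \leq v_n^2 / (N_1 t^2)$, taking $t = v_n / \sqrt{\alpha N_1}$ shows $\hat{q}_{1-\alpha}^{\dm} \leq v_n / \sqrt{\alpha N_1}$. It then suffices to check $v_n = O_P(1)$. This follows from $(Y_i - Y_{m(i)})^2 \leq 2(Y_i^2 + Y_{m(i)}^2)$ and the fact that each control is used at most once, so that $\sum_{Z_i = 1} (Y_i - Y_{m(i)})^2 \leq 4 \sum_{i=1}^n Y_i^2$; the law of large numbers together with $N_1 / n \xrightarrow{P} p$ then gives $v_n^2 \leq (4n/N_1)\, n^{-1} \sum_{i=1}^n Y_i^2 = O_P(1)$. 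Since $N_1 \asymp n$, we conclude $\hat{q}_{1-\alpha}^{\dm} = O_P(n^{-1/2})$.

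Under the additional assumption $P\{ e(X) < 0.5 \} = 1$, the strategy is to prove a conditional central limit theorem for $\sqrt{N_1}\, \hat{\tau}_*^{\dm}$ and then pass to quantiles. First I would show $v_n^2 \xrightarrow{P} s^2$. Writing $Y_i - Y_{m(i)} = \delta_i + \eta_i$ with $\delta_i = \mu(X_i) - \mu(X_{m(i)})$ and $\eta_i = \epsilon_i - \epsilon_{m(i)}$, we have $N_1^{-1} \sum_{Z_i = 1} \eta_i^2 = \hat{s}_n^2 \xrightarrow{P} s^2$ by \Cref{lemma:sigma_star_limit}, while $N_1^{-1} \sum_{Z_i = 1} \delta_i^2 = o_P(1)$ by \Cref{corollary:paired_discrepancy_Lp} applied with $g = \mu$ (which has finite second moment since $\E \mu(X)^2 \leq \E Y^2 < \infty$), and the cross term is $o_P(1)$ by Cauchy-Schwarz. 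Next, conditional on $\mathcal{D}_n$ the summands $\xi_i (Y_i - Y_{m(i)})$ are independent and mean zero, so \Cref{theorem:berry_esseen} bounds the Kolmogorov distance between the conditional law of $\sqrt{N_1}\, \hat{\tau}_*^{\dm} / v_n$ and $N(0,1)$ by $C \sum_{Z_i = 1} |Y_i - Y_{m(i)}|^3 / (N_1 v_n^2)^{3/2}$. Because $N_1^{-1} \sum_{Z_i = 1} |Y_i - Y_{m(i)}|^3 = O_P(1)$ (using that $Y$ has more than four moments) and $v_n \xrightarrow{P} s > 0$, this ratio is $O_P(N_1^{-1/2}) = o_P(1)$, giving $\sqrt{N_1}\, \hat{\tau}_*^{\dm} / v_n \rightsquigarrow_* N(0,1)$ and hence $\sqrt{N_1}\, \hat{\tau}_*^{\dm} \rightsquigarrow_* N(0, s^2)$.

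The final step converts this weak-in-probability convergence into convergence of the randomization quantile. Note that $z_{1-\alpha/2}\, s$ is the $(1-\alpha)$ quantile of $|N(0, s^2)|$, whose distribution function is continuous and strictly increasing there (here $s > 0$ by the nonsingularity of $\Var(Y \mid X, Z)$). Using the subsequence device from the proof of \Cref{lemma:randomization_slutsky} — along any subsequence, extract a further subsequence on which the conditional laws of $\sqrt{N_1}\, \hat{\tau}_*^{\dm}$ converge weakly to $N(0,s^2)$ almost surely — the continuous mapping theorem (for the absolute value) together with the standard fact that weak convergence forces convergence of quantiles at continuity-and-strict-monotonicity points of the limit gives $\sqrt{N_1}\, \hat{q}_{1-\alpha}^{\dm} \to z_{1-\alpha/2}\, s$ almost surely along the sub-subsequence. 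Since the subsequence was arbitrary, $\sqrt{N_1}\, \hat{q}_{1-\alpha}^{\dm} \xrightarrow{P} z_{1-\alpha/2}\, s$, i.e.\ $\hat{q}_{1-\alpha}^{\dm} = z_{1-\alpha/2}\, s\, N_1^{-1/2} + o_P(n^{-1/2})$. I expect the main subtlety to lie precisely in this quantile-conversion step: since both $\hat{q}_{1-\alpha}^{\dm}$ and the normalizer $v_n$ are functions of $\mathcal{D}_n$, the argument must be run conditionally and transferred back along subsequences rather than by a direct appeal to Slutsky's theorem.
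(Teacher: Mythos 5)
Your proof is correct and follows essentially the same route as the paper's: Chebyshev's inequality in the randomization distribution for the $O_P(n^{-1/2})$ bound, and a conditional Berry--Esseen argument combined with the covariate-balance result (\Cref{corollary:paired_discrepancy_Lp} applied to $\mu$) and \Cref{lemma:sigma_star_limit}, followed by quantile convergence, for the sharper statement. The only (cosmetic) difference is that you absorb the matching-discrepancy term $\mu(X_i)-\mu(X_{m(i)})$ into the studentization constant $v_n$ and apply Berry--Esseen to the full sum, whereas the paper splits $\sqrt{N_1}\hat{\tau}_*/\hat{s}_n$ into a separate $o_P(1)$ term plus a term that is asymptotically $N(0,1)$; both handle the quantile-conversion step the same way.
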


\begin{proof}
First, we prove that $\hat{q}_{1 - \alpha}^{\dm} = O_P(n^{-1/2})$.  Since $\E( \hat{\tau}_*^{\dm} \mid \mathcal{D}_n) = 0$, Chebyshev's inequality applied conditionally gives $\hat{q}_{1 - \alpha}^{\dm} \leq \{ \Var( \hat{\tau}_*^{\dm} \mid \mathcal{D}_n) / \alpha \}^{1/2}$.  This may be further bounded as follows:
\begin{align*}
\{ \Var( \hat{\tau}_*^{\dm} \mid \mathcal{D}_n) / \alpha \}^{1/2} &= \left( \frac{1}{\alpha} \frac{1}{N_1^2} \sum_{i \in \M} Y_i^2 \right)^{1/2} \leq \frac{1}{\sqrt{n \alpha}} \frac{n}{N_1} \left( \frac{1}{n} \sum_{i = 1}^n Y_i^2 \right)^{1/2}
\end{align*}
Since $n / N_1 \xrightarrow{P} 1/p$ and $\tfrac{1}{n} \sum_{i = 1}^n Y_i^2 \xrightarrow{P} \E(Y^2) < \infty$, this upper bound is of order $O_P(n^{-1/2})$.

Next, we give a more precise result in the case $P \{ e(X) < 0.5 \} = 1$.  Consider the following stochastic expansion of $\sqrt{N_1} \hat{\tau}_* / s$:
\begin{align*}
\sqrt{N_1} \hat{\tau}_* / s &= \sqrt{N_1} \hat{\tau}_* / \hat{s}_n + o_P(1)\\
&= \frac{1}{\hat{s}_n} \frac{1}{\sqrt{N_1}} \sum_{Z_i = 1} (2 Z_i^* - 1) (Y_i - Y_{m(i)}) + o_P(1)\\
&= \underbrace{\frac{1}{
\hat{s}_n} \frac{1}{\sqrt{N_1}} \sum_{Z_i = 1} (2 Z_i^* - 1) \{ \mu(X_i) - \mu(X_{m(i)}) \}}_a + \underbrace{\frac{1}{
\hat{s}_n} \frac{1}{\sqrt{N_1}} \sum_{Z_i = 1} (2 Z_i^* - 1) ( \epsilon_i - \epsilon_{m(i)})}_b
\end{align*}

Term $a$ is mean zero conditional on $\mathcal{D}_n$ and has conditional variance $\sum_{Z_i = 1} | \mu(X_i) - \mu(X_{m(i)})|^2 / N_1 \hat{s}_n^2$.  By \Cref{corollary:paired_discrepancy_Lp}, $\sum_{Z_i = 1} | \mu(X_i) - \mu(X_{m(i)} |^2 / N_1 = o_P(1)$ and by \Cref{lemma:sigma_star_limit}, $1/\hat{s}_n^2 = O_P(1)$.  Hence, this conditional variance tends to zero in probability, so $a = o_P(1)$ by \Cref{lemma:conditional_wlln}.

Meanwhile, we will show that term $b$ converges weakly in probability to the standard normal distribution.  Observe that $\hat{s}_n^2$ is precisely the variance of $\sum_{Z_i = 1} (2 Z_i^* - 1) ( \epsilon_i - \epsilon_{m(i)}) / \sqrt{N_1}$, so we may apply the Berry-Esseen theorem (\Cref{theorem:berry_esseen}) conditionally on $\mathcal{D}_n$ to $b$ and obtain:
\begin{align*}
\sup_{t \in \R} |\P (b \leq t \mid \mathcal{D}_n) - \Phi(t)| &\leq \frac{C}{\sqrt{N_1}} \frac{1}{\sigma_*^3} \frac{1}{N_1} \sum_{Z_i = 1} \E( | \epsilon_i - \epsilon_{m(i)} |^3 \mid \mathcal{D}_n) \leq \frac{4 C}{\sqrt{N_1}} \frac{1}{\hat{s}_n^3} \frac{n}{N_1} \frac{1}{n} \sum_{i = 1}^n \E( | \epsilon_i|^3 \mid \mathcal{D}_n).
\end{align*}
Since $\sum_{i = 1}^n \E( | \epsilon_i|^3 \mid \mathcal{D}_n)/n = O_P \{ \E(| \epsilon^3|) \} = O_P(1)$ by Markov's inequality and $1/\hat{s}_n^3 = O_P(1)$ by \Cref{lemma:sigma_star_limit}, this upper bound tends to zero in probability.  Hence, $b \rightsquigarrow_* N(0, 1)$.

Since $\sqrt{N_1} \hat{\tau}_* / s = b + o_P(1)$ and $b \rightsquigarrow_* H$ for $H \sim N(0, 1)$, \Cref{lemma:randomization_slutsky} implies $\sqrt{N_1} \hat{\tau}_* / s \rightsquigarrow_* H$ as well.  By the continuous mapping theorem for weak convergence in probability, $| \sqrt{N_1} \hat{\tau}_* / s | \rightsquigarrow |H|$.  Weak convergence in probability to a continuous limit distribution implies convergence of quantiles \cite[Lemma 11.2.1.(ii)]{tsh}, so we conclude that:
\begin{align*}
\inf \{ t \in \R \, : \, \P( | \sqrt{N_1} \hat{\tau}_* / s | \leq t \mid \mathcal{D}_n) \geq 1 - \alpha \} \xrightarrow{P} \inf \{ t \in \R \, : \, \P_{H \sim N(0, 1)}(|H| \leq t) \geq 1 - \alpha \} = z_{1 - \alpha/2}.
\end{align*}
Thus, the basic calculus of quantiles allows us to write:
\begin{align*}
\hat{q}_{1 - \alpha}^{\dm} &= \inf \{ t \in \R \, : \, \P(| \hat{\tau}_* | \leq t \mid \mathcal{D}_n) \geq 1 - \alpha \}\\
&= s N_1^{-1/2} \inf \{ t \in \R \, : \, \ \P( | \sqrt{N_1} \hat{\tau}_* / s | \leq t \mid \mathcal{D}_n) \geq 1 - \alpha \}\\
&= s N_1^{-1/2} \{ z_{1 - \alpha/2} + o_P(1) \}\\
&= z_{1 - \alpha/2} s N_1^{-1/2} + o_P(n^{-1/2}).
\end{align*}
\end{proof}

% ========================================================
\subsubsection{The regression-adjusted statistic}

\begin{lemma}
\label{lemma:B_invertibility}
Let $\phi : \R^d \rightarrow \R^{\ell}$ satisfy $\E_P \{ \| \phi(X) \|^q \} < \infty$ for some $q > 4$ and also $\Var \{ \phi(X) \mid Z = 1 \} \succ \mathbf{0}$.  Define $\psi_{\phi}(x, z) = (z - 0.5, 1, \phi(x))$.  Let $\mathbf{B} = \sum_{i \in \M} \psi_{\phi}(X_i, Z_i) \psi_{\phi}(X_i, Z_i)^{\top} / 2 N_1$ and $\mathbf{B}_* = \sum_{i \in \M} \psi_{\phi}(X_i, Z_i^*) \psi_{\phi}(X_i, Z_i^*)^{\top} / 2 N_1$.  Then $\mathbf{B}$ and $\mathbf{B}_*$ are invertible with probability tending to one.
\end{lemma}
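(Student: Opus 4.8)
The plan is to prove invertibility by bounding the smallest eigenvalue of each matrix away from zero with probability tending to one; since $\mathbf{B}$ and $\mathbf{B}_*$ are averages of rank-one positive semidefinite terms, each is invertible exactly when $\lambda_{\min}>0$. The chief difficulty is that this lemma must hold \emph{even when overlap fails} --- it is invoked in the proof of \Cref{example:regression_adjusted_test_statistics}, where $P\{e(X)\ge 0.5\}>0$ --- so we may not appeal to any matching-quality result such as \Cref{corollary:paired_discrepancy_Lp}. In particular, the empirical second moments of the matched controls $\{X_{m(i)}\}$ have no identifiable limit and $\mathbf{B}$ need not converge to a tractable matrix. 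The argument must therefore lower-bound $\lambda_{\min}$ using only three reliable ingredients: the law of large numbers for averages over the treated units, the deterministic structure of the sub-block spanned by the treatment contrast $Z-0.5$ and the intercept, and positive semidefiniteness of the control contribution.

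For $\mathbf{B}$ I would test against a unit vector $v=(a,b,c)$ in the coordinates $(Z-0.5,\,1,\,\phi)$ and run a two-region argument on the sphere. Two deterministic identities drive everything: because $\M$ contains exactly $N_1$ treated and $N_1$ control units, the $(Z-0.5)$-variance entry equals $0.25$, its cross term with the intercept vanishes, and hence $v^\top\mathbf{B}v$ restricted to $c=0$ equals $0.25a^2+b^2$ with no randomness at all. When $\|c\|\ge\epsilon$, only the treated units are needed: $\tfrac1{N_1}\sum_{Z_i=1}(0.5a+b+c^\top\phi(X_i))^2$ converges uniformly over the sphere (law of large numbers, using the finite second moments implied by $q>4$) to a quadratic form bounded below by $\lambda_{\min}(\Var\{\phi(X)\mid Z=1\})\,\|c\|^2\ge \lambda_{\min}(\Var\{\phi\mid Z=1\})\,\epsilon^2>0$, while the control contribution is nonnegative. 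When $\|c\|<\epsilon$, I would split $v=(a,b,0)+(0,0,c)$: the first part gives the deterministic $0.25a^2+b^2\ge 0.25(1-\epsilon^2)$, the pure-$c$ part is nonnegative, and the cross term is at most $2\epsilon\,\lambda_{\max}(\mathbf{B})$ with $\lambda_{\max}(\mathbf{B})=O_P(1)$ (bound it by the trace $1.25+\tfrac1{2N_1}\sum_{i\in\M}\|\phi(X_i)\|^2$, controlling the matched-control moments by the full control average). Fixing a high-probability bound on $\lambda_{\max}(\mathbf{B})$ first and then choosing $\epsilon$ small makes this region positive too, so $\lambda_{\min}(\mathbf{B})\ge c_0>0$ with probability tending to one.

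For $\mathbf{B}_*$ the cleaner route is to condition on $\mathcal{D}_n$ and exploit that randomization only permutes labels within pairs. The $(1,\phi)$-block of $\mathbf{B}_*$ does not depend on the labels and equals the block $\mathbf{B}_{rr}$ of $\mathbf{B}$; since $\mathbf{B}_{rr}\succeq \tfrac12\,\tfrac1{N_1}\sum_{Z_i=1}(1,\phi(X_i))(1,\phi(X_i))^\top$, which converges by the law of large numbers to a positive definite limit (using $\Var\{\phi\mid Z=1\}\succ\mathbf{0}$), it is positive definite with probability tending to one. The conditional mean of the cross block between $Z_i^*-0.5$ and $(1,\phi)$ is zero (the within-pair signs are Rademacher) and its intercept component cancels exactly within every pair, so only the $\phi$-part survives, with conditional variance of order $\tfrac1{N_1}\cdot\tfrac1{N_1}\sum_{Z_i=1}\|\phi(X_i)-\phi(X_{m(i)})\|^2=O_P(1/N_1)=o_P(1)$. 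Hence, by \Cref{lemma:conditional_wlln} applied entrywise, $\mathbf{B}_*$ differs from the block-diagonal matrix with blocks $0.25$ and $\mathbf{B}_{rr}$ by a term of operator norm $o_P(1)$, and Weyl's inequality then yields $\lambda_{\min}(\mathbf{B}_*)>0$ with probability tending to one.

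I expect the main obstacle to be the $\|c\|<\epsilon$ region for $\mathbf{B}$, precisely because this is where the unknown control-side moments could enter. The key insight resolving it is that the genuinely data-dependent part of $v^\top\mathbf{B}v$ can be quarantined into the cross term and the nonnegative pure-$\phi$ term, leaving the deterministic and strictly positive contribution $0.25a^2+b^2$ from the contrast and intercept coordinates. Everything else --- the uniform law of large numbers on the sphere, the crude trace bound on $\lambda_{\max}(\mathbf{B})$, and the conditional second-moment computation for $\mathbf{B}_*$ --- is routine.
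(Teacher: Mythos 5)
Your proof is correct, and for the matrix $\mathbf{B}$ it takes a genuinely different route from the paper's. The paper handles $\mathbf{B}$ by observing that every unit matched by the with-replacement scheme (\ref{eq:matching_with_replacement}) is also matched by optimal pair matching, so that $\mathbf{B} \succeq \mathbf{B}_r / \max_i K_{i,n}$ with $\mathbf{B}_r$ the with-replacement design matrix of \Cref{lemma:replaced_design_matrix}; invertibility then follows from the entire nearest-neighbor machinery (\Cref{lemma:Kn_moments}, \Cref{lemma:match_quality}, \Cref{lemma:replaced_wlln}). Your two-region argument on the unit sphere avoids all of that: it uses only the law of large numbers over the treated units, the exact identities $\mathbf{B}_{11}=0.25$ and $\mathbf{B}_{12}=0$ coming from the $N_1$-treated/$N_1$-control structure of $\M$, positive semidefiniteness of the control contribution, and a trace bound on $\lambda_{\max}(\mathbf{B})$ obtained by majorizing the matched-control moments with the full-sample average. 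This is more elementary and self-contained (it sidesteps the small optimality argument needed to justify $\M_r \subseteq \M$), it correctly respects the constraint that no matching-quality result is available when $P\{e(X)\ge 0.5\}>0$, and it even delivers slightly more --- a quantitative lower bound $\lambda_{\min}(\mathbf{B})\ge c_0>0$ with probability tending to one, rather than bare invertibility. What the paper's route buys in exchange is brevity: \Cref{lemma:replaced_design_matrix} is needed anyway for \Cref{theorem:replaced_robustness}, so the reduction costs one line. For $\mathbf{B}_*$ your argument is essentially the paper's (the paper drops the control rows and applies \Cref{lemma:conditional_wlln} to the treated-only sum, whereas you keep the full label-independent $(1,\phi)$-block and kill the cross block by the same conditional variance computation); the difference is cosmetic. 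One tiny bookkeeping remark: the limiting quadratic form in your first region is $\tfrac12\E\{(0.5a+b+c^\top\phi(X))^2\mid Z=1\}$, so the lower bound carries a factor $\tfrac12$ in front of $\lambda_{\min}(\Var\{\phi(X)\mid Z=1\})\,\epsilon^2$; this does not affect the conclusion.
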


\begin{proof}
First, we prove the result for $\mathbf{B}_*$.  Repeated applications of \Cref{lemma:conditional_wlln} and \Cref{corollary:empirical_Lp_convergence} allow us to derive the following convergence:
\begin{align*}
\mathbf{B}_* &\succeq \frac{1}{2 N_1} \sum_{Z_i = 1} \psi_{\phi}(Z_i^*, X_i) \psi_{\phi}(Z_i^*, X_i)^{\top} \xrightarrow{P} 
\frac{1}{2} \left[ 
\begin{array}{cc}
0.25 &\mathbf{0}^{\top}\\
\mathbf{0} &\E \{ (1, \phi(X))(1, \phi(X))^{\top} \mid Z = 1 \}
\end{array}
\right] 
\end{align*}
Under the assumption that $\Var \{ \phi(X) \mid Z = 1 \} \succ \mathbf{0}$, the matrix on the far right-hand side of the preceding display is invertible.  Therefore, $\mathbf{B}_*$ is invertible with probability approaching one.

Next, we prove the result for $\mathbf{B}$, which requires \Cref{lemma:replaced_design_matrix} below.  Notice that any observation $i$ matched by the matching-with-replacement scheme (\ref{eq:matching_with_replacement}) will also be matched by the optimal pair-matching scheme (\ref{total_mahalanobis}).  Thus, if $m_r(i)$ denotes the nearest untreated neighbor of observation $i$, we have:
\begin{align*}
\mathbf{B} &\succeq \frac{1}{2 N_1} \sum_{Z_i = 1} \psi_{\phi}(X_i, Z_i) \psi_{\phi}(X_i, Z_i)^{\top} + \frac{1}{2 N_1} \sum_{j = m_r(i) \text{ for some treated $i$}} \psi_{\phi}(X_i, Z_i) \psi_{\phi}(X_i, Z_i)^{\top}\\
&\succeq \frac{1}{\max_i K_{i,n}} \underbrace{\frac{1}{2 N_1} \sum_{Z_i = 1} \{ \psi_{\phi}(X_i, Z_i) \psi_{\phi}(X_i, Z_i)^{\top} + \psi_{\phi}(X_{m_r(i)}, Z_{m_r(i)}) \psi_{\phi}(X_{m_r(i)}, Z_{m_r(i)})^{\top} \}}_{:= \mathbf{B}_r}
\end{align*}
Here, $K_{i,n} = \sum_{i = 1}^n Z_j \mathbf{1} \{ j = m_r(i) \}$ is the number of untreated observations matched to observation $i$.  \Cref{lemma:replaced_design_matrix} below shows that $\mathbf{B}_r$ converges to an invertible matrix.  Since $\max_i K_{i,n}$ is finite, $\mathbf{B} \succeq \mathbf{B}_r / \max_i K_{i,n} \succ \mathbf{0}$ with probability tending to one.
\end{proof}

\begin{lemma}
\label{lemma:paired_design_matrix_properties}
Let $\mathbf{B}$ and $\mathbf{B}_*$ be as in \Cref{lemma:B_invertibility}.  Then $e_1^{\top} \mathbf{B}_*^{-1} \xrightarrow{P} (4, \mathbf{0})$.  If $P \{ e(X) < 0.5 \} = 1$, then we also have:
\begin{align*}
\mathbf{B} \xrightarrow{P} \left[ 
\begin{array}{cc}
0.25 &\mathbf{0}^{\top}\\
\mathbf{0} &\E\{ (1, \phi(X))(1, \phi(X))^{\top} \mid Z = 1 \}
\end{array}
\right] 
\end{align*}
\end{lemma}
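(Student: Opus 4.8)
The plan is to exploit the block structure of both matrices under the partition of the feature vector $\psi_\phi(x,z) = (z - 0.5,\, 1,\, \phi(x))$ into its first coordinate and the remaining $(1+\ell)$ coordinates $(1,\phi(x))$. Writing each matrix in the form $\left(\begin{smallmatrix} a & b^\top \\ b & \mathbf{C}\end{smallmatrix}\right)$, I first observe that the lower-right block $\mathbf{C} = \tfrac{1}{2N_1}\sum_{i\in\M}(1,\phi(X_i))(1,\phi(X_i))^\top$ is \emph{identical} for $\mathbf{B}$ and $\mathbf{B}_*$, since it does not involve the treatment labels, and that the top-left entry equals $\tfrac{1}{2N_1}\sum_{i\in\M}(Z_i-0.5)^2 = 0.25$ \emph{exactly} in both cases because $(Z_i-0.5)^2\equiv 0.25$. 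Thus the only block distinguishing the two matrices, and the only one requiring real work, is the cross block $b$.

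For the first claim I would control the cross block of $\mathbf{B}_*$ via the randomization distribution. Indexing matched pairs by their treated member, exactly one unit per pair receives $Z^*=1$, so the intercept components cancel and the $\phi$-part of the cross block equals $\tfrac{1}{4N_1}\sum_{Z_i=1}\xi_i\{\phi(X_i)-\phi(X_{m(i)})\}$, where the signs $\xi_i\in\{\pm1\}$ are i.i.d.\ uniform conditional on $\mathcal{D}_n$. This has conditional mean zero, and its conditional variance is $\tfrac{1}{16N_1^2}\sum_{Z_i=1}\|\phi(X_i)-\phi(X_{m(i)})\|^2 \le \tfrac{1}{8N_1}\cdot\tfrac{1}{N_1}\sum_{i=1}^n\|\phi(X_i)\|^2 = O_P(1/N_1)$, so \Cref{lemma:conditional_wlln} gives $b=o_P(1)$; crucially this argument does \emph{not} invoke the propensity bound, matching the hypothesis of the first claim. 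I then apply the block-inverse formula, whose top row is $(s^{-1},\, -s^{-1}b^\top\mathbf{C}^{-1})$ with Schur complement $s = 0.25 - b^\top\mathbf{C}^{-1}b$. I do not need $\mathbf{C}$ to converge, only $\mathbf{C}^{-1}=O_P(1)$, which holds because $\mathbf{C}\succeq\tfrac{1}{2N_1}\sum_{Z_i=1}(1,\phi(X_i))(1,\phi(X_i))^\top\to\tfrac12\E\{(1,\phi)(1,\phi)^\top\mid Z=1\}\succ\mathbf{0}$ by $\Var\{\phi(X)\mid Z=1\}\succ\mathbf{0}$. Then $b=o_P(1)$ and $\mathbf{C}^{-1}=O_P(1)$ yield $s\to 0.25$ and $-s^{-1}b^\top\mathbf{C}^{-1}=o_P(1)$, hence $e_1^\top\mathbf{B}_*^{-1}\to(4,\mathbf{0})$.

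For the second claim I add $P\{e(X)<0.5\}=1$ and show $\mathbf{B}$ converges blockwise. After the intercepts cancel, the cross block of $\mathbf{B}$ is $\tfrac{1}{4N_1}\sum_{Z_i=1}(0,\,\phi(X_i)-\phi(X_{m(i)}))$, which is $o_P(1)$ by Cauchy--Schwarz together with \Cref{corollary:paired_discrepancy_Lp} applied componentwise to $\phi$. For $\mathbf{C}$ I split into a treated average, which converges to $\tfrac12\E\{(1,\phi)(1,\phi)^\top\mid Z=1\}$ by the law of large numbers, and a matched-control average, which I show converges to the same limit entrywise.

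The main obstacle is the quadratic entries of $\mathbf{C}$: the $(j,k)$ entry contains $\phi_j\phi_k$, and I must establish $\tfrac{1}{N_1}\sum_{Z_i=1}\{\phi_j(X_{m(i)})\phi_k(X_{m(i)}) - \phi_j(X_i)\phi_k(X_i)\}=o_P(1)$. This is precisely where the moment hypothesis $\E\{\|\phi(X)\|^q\}<\infty$ with $q>4$ is needed: it guarantees $\E\{(\phi_j\phi_k)^2\}\le\E\{\|\phi(X)\|^4\}<\infty$, so \Cref{corollary:paired_discrepancy_Lp} applies to $g=\phi_j\phi_k$, and a Cauchy--Schwarz step passing from the $L^2$ to the $L^1$ discrepancy makes the difference vanish. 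Handling the linear entries ($g=\phi_j$) and the constant entry analogously gives $\mathbf{C}\to\E\{(1,\phi)(1,\phi)^\top\mid Z=1\}$, and assembling the three blocks produces the stated block-diagonal limit for $\mathbf{B}$.
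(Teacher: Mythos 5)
Your proposal is correct and follows essentially the same route as the paper: the partitioned (Schur-complement) inverse with the exact top-left entry $0.25$, the conditional WLLN applied to the randomization cross term written as a Rademacher sum over pairs, stochastic boundedness of $\mathbf{C}^{-1}$ via the treated-only lower bound, and entrywise application of \Cref{corollary:paired_discrepancy_Lp} for the convergence of $\mathbf{B}$. The only difference is that you spell out details the paper leaves implicit, in particular why $q>4$ is what lets the corollary apply to the quadratic entries $\phi_j\phi_k$ of the lower-right block.
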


\begin{proof}
First, we show that $e_1^{\top} \mathbf{B}_*^{-1} \xrightarrow{P} (4, \mathbf{0})$.  By \Cref{lemma:B_invertibility}, $e_1^{\top} \mathbf{B}^{-1}$ exists with probability tending to one.  When it exist, the partitioned matrix inversion formula gives the following explicit expression:
\begin{align}
e_1^{\top} \mathbf{B}_*^{-1} &= \frac{1}{0.25 - \Delta_*^{\top} \hat{\mathbf{\Sigma}}_{\phi}^{-1} \Delta_*} ( 1, -(0, \Delta_*^{\top} \hat{\mathbf{\Sigma}}_{1, \phi}^{-1} )). \label{eq:e1topBstarinverse}
\end{align}
where $\Delta_* = \sum_{i \in \M} (Z_i^* - 0.5) \phi(X_i) / 2N_1$, $\hat{\mathbf{\Sigma}}_{\phi} = \sum_{i \in \M} \phi(X_i) \phi(X_i)^{\top} / 2 N_1$, and $\hat{\mathbf{\Sigma}}_{1, \phi} = \sum_{i \in \M} ( 1, \phi(X_i) ) ( 1, \phi(X_i) ) / 2 N_1$.

A straightforward application of \Cref{lemma:conditional_wlln} shows that $\Delta_* = o_P(1)$.  Moreover, the following calculation shows that $\hat{\mathbf{\Sigma}}_{\phi}^{-1}$ and $\hat{\mathbf{\Sigma}}_{1,\phi}^{-1}$ are stochastically bounded:
 \begin{align*}
    \| \hat{\mathbf{\Sigma}}_{\phi}^{-1} \|_{\textup{op}} & \leq \lambda_{\min} \left\{ \frac{1}{2 N_1} \sum_{Z_i = 1} \phi(X_i) \phi(X_i)^{\top} \right\}^{-1} \xrightarrow{P} \lambda_{\min} [ 0.5 \E \{ \phi(X) \phi(X)^{\top} \mid Z = 1 \}]^{-1} < \infty\\
    \| \hat{\mathbf{\Sigma}}_{1,\phi}^{-1} \|_{\textup{op}} & \leq \lambda_{\min} \left[ \frac{1}{2 N_1} \sum_{Z_i = 1} \{ 1, \phi(X_i) \} \{ 1, \phi(X_i) \}^{\top} \right]^{-1} \xrightarrow{P} \lambda_{\min}( 0.5 \E[ \{ 1, \phi(X) \} \{ 1, \phi(X) \}^{\top} \mid Z = 1])^{-1} < \infty.
\end{align*}
Thus, $\Delta_*^{\top} \hat{\mathbf{\Sigma}}_{\phi}^{-1} \Delta_* = o_P(1)$ and $(0, \Delta_*)^{\top} \hat{\mathbf{\Sigma}}_{1,\phi}^{-1} = o_P(1)$.  Combining these with the formula (\ref{eq:e1topBstarinverse}) gives $e_1^{\top} \mathbf{B}_*^{-1} = (4, \mathbf{0}) + o_P(1)$, as claimed.  For the second claim, simply apply \Cref{corollary:paired_discrepancy_Lp} to each coordinate of $\mathbf{B}$.
\end{proof}

\begin{lemma}
\label{lemma:ancova_covariance_properties}
Assume $P \in H_0$. Let $g : \R^d \rightarrow \R^k$ be a bounded function and let $\{ h_n \} \subset \R^k$ satisfy $\| h_n \| \leq C n^{-1/2}$.  Let $\phi : \R^d \rightarrow \R^{\ell}$ be any function with $\E \{ \| \phi(X) \|^q \} < \infty$ for some $q > 4$, and set $\psi_{\phi}(x, z) = (z - 0.5, 1, \phi(x))$.  Let $\mathbf{r} = \sum_{i \in \M} \psi_{\phi}(X_i, Z_i) \{ \epsilon_i + h_n^{\top} g(X_i) \} / 2 \sqrt{N_1}$ and $\mathbf{r}_* = \sum_{i \in \M} \psi_{\phi}(X_i, Z_i^*) \{ \epsilon_i + h_n^{\top} g(X_i) \} / 2 \sqrt{N_1}$.  Then $\mathbf{r} = O_P(1)$ and $\mathbf{r}_* = O_P(1)$.
\end{lemma}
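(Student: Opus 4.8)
The plan is to condition on the design so that the matching-induced dependence disappears, and then to bound each vector through its conditional first and second moments. Write $u_i = \epsilon_i + h_n^{\top} g(X_i)$, where $\epsilon_i = Y_i - \mu(X_i)$, and let $\F_n = \sigma(\{ (X_i, Z_i) \}_{i \leq n})$. Because $g$ is bounded and $\| h_n \| \leq C n^{-1/2}$, the perturbation satisfies $| h_n^{\top} g(X_i)| \leq C n^{-1/2} \| g \|_{\infty}$ uniformly, while Fisher's null together with unconfoundedness gives $\E( \epsilon_i \mid \F_n) = 0$ and $\Var( \epsilon_i \mid \F_n) = \sigma^2(X_i)$. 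The structural observation that drives the proof is that the second and third blocks of $\psi_{\phi}(x, z) = (z - 0.5, 1, \phi(x))$ do not depend on $z$, so the last $1 + \ell$ coordinates of $\mathbf{r}$ and $\mathbf{r}_*$ are \emph{literally identical}; only the leading coordinate differs between the two statistics.

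For $\mathbf{r}$ I would argue conditionally on $\F_n$, under which the $u_i$ for the distinct indices $i \in \M$ are independent. The conditional mean reduces to $\frac{1}{2 \sqrt{N_1}} \sum_{i \in \M} \psi_{\phi}(X_i, Z_i) h_n^{\top} g(X_i)$, whose norm is at most $\frac{C n^{-1/2} \| g \|_{\infty}}{2 \sqrt{N_1}} \sum_{i = 1}^n \| \psi_{\phi}(X_i, Z_i) \| = O_P(n^{1/2} / \sqrt{N_1}) = O_P(1)$ by the law of large numbers and $N_1 / n \xrightarrow{P} p > 0$. The conditional covariance equals $\frac{1}{4 N_1} \sum_{i \in \M} \sigma^2(X_i) \psi_{\phi}(X_i, Z_i) \psi_{\phi}(X_i, Z_i)^{\top}$, whose trace is bounded by $\frac{1}{4 N_1} \sum_{i = 1}^n (1.25 + \| \phi(X_i) \|^2) \sigma^2(X_i)$. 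A single H\"older step with exponents $q/2$ and $q/(q-2)$ shows $\E \{ \| \phi(X) \|^2 \sigma^2(X) \} < \infty$, using $\E \| \phi(X) \|^q < \infty$ for $q > 4$ and, via conditional Jensen, the fact that $2q/(q-2) < 4$ bounds the needed moment of $\epsilon$. Hence this trace is $O_P(1)$, and a conditional Chebyshev argument — exactly as in \Cref{lemma:conditional_wlln}, but retaining an $O_P(1)$ bound rather than sending the variance to zero — gives $\mathbf{r} = O_P(1)$.

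For $\mathbf{r}_*$ the last $1 + \ell$ coordinates coincide with those of $\mathbf{r}$ and are therefore already $O_P(1)$. It remains to control the leading coordinate $\frac{1}{2 \sqrt{N_1}} \sum_{i \in \M} (Z_i^* - 0.5) u_i$, for which I would condition on the full data $\mathcal{D}_n$ and use that the permutation assigns the treatment label within each matched pair independently and symmetrically. Each pair $\{ i, m(i) \}$ then contributes $\pm \frac{1}{4 \sqrt{N_1}} (u_i - u_{m(i)})$ with equal probability, so this coordinate has conditional mean zero and conditional variance $\frac{1}{16 N_1} \sum_{Z_i = 1} (u_i - u_{m(i)})^2$. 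Expanding $u_i - u_{m(i)} = (\epsilon_i - \epsilon_{m(i)}) + h_n^{\top} \{ g(X_i) - g(X_{m(i)}) \}$ and using the boundedness of $g$, this variance is at most $\tfrac{1}{8} \hat{s}_n^2 + O(\| h_n \|^2) = O_P(1)$, since $\hat{s}_n^2 \leq \frac{2}{N_1} \sum_{i = 1}^n \epsilon_i^2 = O_P(1)$. A second conditional Chebyshev step gives the leading coordinate $= O_P(1)$, and hence $\mathbf{r}_* = O_P(1)$.

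I expect the only genuine subtlety to be the moment bookkeeping that certifies finiteness of the conditional covariance of $\mathbf{r}$ — specifically, verifying $\E \{ \| \phi(X) \|^2 \sigma^2(X) \} < \infty$, which is precisely where the hypothesis $q > 4$ (rather than merely $q > 2$) is needed, as it must be played off against the more-than-four moments of $Y$ through conditional Jensen. Everything else is a routine conditioning-and-Chebyshev computation, and the reduction of $\mathbf{r}_*$ to $\mathbf{r}$ via the $z$-independence of the last blocks of $\psi_{\phi}$ is what keeps the permutation analysis short.
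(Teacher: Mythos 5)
Your proof is correct and follows essentially the same route as the paper's: condition on the design, show the conditional mean is $O_P(1)$ using $\| h_n \| = O(n^{-1/2})$ and the boundedness of $g$, bound the conditional second moment via the moment hypotheses on $\phi$ and $Y$, and conclude by Chebyshev/Markov. The only organizational difference is that the paper proves a single scalar claim for a generic function $h(X_i, Z_i, Z_i^*)$ with four moments (conditioning on the $Z_i^*$ and always drawing the randomness from the outcomes), whereas you reduce $\mathbf{r}_*$ to its leading coordinate via the $z$-independence of the last blocks of $\psi_{\phi}$ and control that coordinate through the within-pair permutation randomness conditional on $\mathcal{D}_n$; both are valid and of comparable length.
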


\begin{proof}
We will show that $\sum_{i \in \M} h(X_i, Z_i, Z_i^*) \{ \epsilon_i + h_n^{\top} g(X_i) \} /2 \sqrt{N_1} = O_P(1)$ whenever $\E \{ h(X_i, Z_i, Z_i^*)^4 \}\footnote{ To make sense of this expectation, we define $Z_i^*$ to have a $\textup{Bernoulli}(0.5)$ distribution independently of $(X_i, Y_i, Z_i)$ even for unmatched observations $i$.} < \infty$.  The conclusion of the Lemma follows by applying this result to each of the coordinates or $(Z_i - 0.5, 1, X_i^{\top})$ of $(Z_i^* - 0.5, 1, X_i^{\top})$.  To do this, we decompose the quantity of interest into the sum of two terms:
\begin{align*}
\frac{1}{\sqrt{2 N_1}} \sum_{i \in \M} h(X_i, Z_i, Z_i^*) \{ \epsilon_i + h_n^{\top} g(X_i) \} &= \underbrace{\frac{1}{2 \sqrt{N_1}} \sum_{i \in \M} h(X_i, Z_i, Z_i^*) \epsilon_i}_a + \underbrace{\frac{1}{2 \sqrt{N_1}} \sum_{i \in \M} h(X_i, Z_i, Z_i^*) h_n^{\top} g(X_i)}_b.
\end{align*}
Conditional on $\mathcal{F}_n = \sigma( \{ (X_i, Z_i) \}_{i \leq n}, \{ Z_i^* \}_{i \in \M})$, term $a$ has mean zero and conditional variance of order one:
\begin{align*}
\Var (a \mid \F_n) &= \frac{1}{4 N_1} \sum_{i \in \M} h(X_i, Z_i, Z_i^*)^2 \sigma^2(X_i) \leq \frac{1}{4 N_1} \sum_{i = 1}^n h(X_i, Z_i, Z_i^*)^2 \sigma^2(X_i) \xrightarrow{P} \frac{1}{4p} \E\{ h(X, Z, Z^*)^2 \sigma^2(X) \}
\end{align*}
Thus, by Markov's inequality, $\sum_{i \in \M} h(X_i, Z_i, Z_i^*) \epsilon_i / 2 \sqrt{N_1} = O_P(1)$.  Meanwhile, $b$ may be bounded as follows:
\begin{align*}
\| b \| &\leq \frac{\| h_n \|}{2 \sqrt{N_1}} \sum_{i = 1}^n | h(X_i, Z_i, Z_i^*)| \cdot \| g(X_i) \| \leq \underbrace{C \left( \frac{n}{N_1} \right)^{1/2}}_{= O_P(1)} \underbrace{\frac{1}{n} \sum_{i = 1}^n | h(X_i, Z_i, Z_i^*)| \cdot \| g(X_i) \|}_{= O_P(1) \text{ by Markov}} = O_P(1).
\end{align*}
Thus, $a + b = O_P(1)$.
\end{proof}

\begin{lemma}
\label{lemma:paired_ancova_critical_value}
Let $P \in H_0$ satisfy the assumptions of \Cref{proposition:ols_sufficient}.  Then for any $\alpha \in (0, 1)$, we have $\sqrt{N_1} \hat{q}_{1 - \alpha,n}^{\reg} = s z_{1 - \alpha/2} + o_P(1)$ where $s^2 = 2 \E \{ \sigma^2(X) \mid Z = 1 \}$.
\end{lemma}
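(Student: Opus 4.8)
The plan is to mirror the proof of \Cref{lemma:dm_critical_value}: linearize the randomization statistic $\sqrt{N_1}\hat{\tau}_*^{\reg}$, show it converges weakly in probability (conditional on $\mathcal{D}_n$) to $N(0, s^2)$, and then read off the quantiles. The first step is to put the randomization statistic into closed form. Writing $\psi(x,z) = (z - 0.5, 1, x)$ for the feature vector in the regression defining $\hat{\tau}^{\reg}$ and setting $\mathbf{B}_* = \sum_{i \in \M} \psi(X_i, Z_i^*) \psi(X_i, Z_i^*)^\top / 2N_1$, standard least-squares algebra gives $\sqrt{N_1} \hat{\tau}_*^{\reg} = e_1^\top \mathbf{B}_*^{-1} \{ \frac{1}{2\sqrt{N_1}} \sum_{i \in \M} \psi(X_i, Z_i^*) Y_i \}$ on the event that $\mathbf{B}_*$ is invertible, which has probability tending to one by \Cref{lemma:B_invertibility}. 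This is where correct specification enters: substituting $Y_i = \gamma + \beta^\top X_i + \epsilon_i$, the deterministic part $\gamma + \beta^\top X_i$ lies in the column space of the $(1, x)$ block of $\psi$, so it is annihilated after multiplication by $e_1^\top \mathbf{B}_*^{-1}$ (because $e_1^\top (0, \gamma, \beta)^\top = 0$). Hence the statistic reduces to $\sqrt{N_1} \hat{\tau}_*^{\reg} = e_1^\top \mathbf{B}_*^{-1} \mathbf{r}_*$ with $\mathbf{r}_* = \frac{1}{2\sqrt{N_1}} \sum_{i \in \M} \psi(X_i, Z_i^*) \epsilon_i$, which is precisely the quantity treated (with $g \equiv 0$) in \Cref{lemma:ancova_covariance_properties}.

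Next I would linearize. By \Cref{lemma:paired_design_matrix_properties}, $e_1^\top \mathbf{B}_*^{-1} = (4, \mathbf{0}) + o_P(1)$, and by \Cref{lemma:ancova_covariance_properties}, $\mathbf{r}_* = O_P(1)$; multiplying these shows $\sqrt{N_1} \hat{\tau}_*^{\reg} = 4 (\mathbf{r}_*)_1 + o_P(1)$, where $(\mathbf{r}_*)_1 = \frac{1}{2\sqrt{N_1}} \sum_{i \in \M} (Z_i^* - 0.5) \epsilon_i$. Since the pseudo-assignments flip each matched pair independently with probability one-half, introducing an i.i.d. sign $\xi_i \in \{ \pm 1 \}$ for the pair containing treated unit $i$ turns this into $4(\mathbf{r}_*)_1 = \frac{1}{\sqrt{N_1}} \sum_{Z_i = 1} \xi_i (\epsilon_i - \epsilon_{m(i)})$. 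Its conditional variance given $\mathcal{D}_n$ is exactly $\hat{s}_n^2 = \frac{1}{N_1} \sum_{Z_i = 1} (\epsilon_i - \epsilon_{m(i)})^2$, which converges to $s^2$ by \Cref{lemma:sigma_star_limit}; this is the point at which the propensity-score restriction $P \{ e(X) < 0.5 \} = 1$ (guaranteed by assumption \ref{item:no_large_propensities_ols} of \Cref{proposition:ols_sufficient}) is used.

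I would then apply the Berry-Esseen theorem (\Cref{theorem:berry_esseen}) conditionally on $\mathcal{D}_n$ to the normalized sum $b = \frac{1}{\hat{s}_n \sqrt{N_1}} \sum_{Z_i = 1} \xi_i (\epsilon_i - \epsilon_{m(i)})$, which has conditional mean zero and conditional variance one. The resulting Kolmogorov bound is of order $N_1^{-1/2} \hat{s}_n^{-3}$ times an empirical third absolute moment of the $\epsilon_i$, which is $O_P(1)$ by Markov's inequality and the four-moment assumption \ref{item:four_moments}, and $1/\hat{s}_n^3 = O_P(1)$ by \Cref{lemma:sigma_star_limit}; so the bound vanishes in probability and $b \rightsquigarrow_* N(0,1)$. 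Combining with $\hat{s}_n \xrightarrow{P} s$ and \Cref{lemma:randomization_slutsky} yields $\sqrt{N_1} \hat{\tau}_*^{\reg} = \hat{s}_n b + o_P(1) \rightsquigarrow_* N(0, s^2)$. Finally, the continuous mapping theorem gives that $|\sqrt{N_1} \hat{\tau}_*^{\reg}|$ converges weakly in probability to the law of $|H|$ with $H \sim N(0, s^2)$, and convergence of randomization quantiles under weak convergence in probability \cite[Lemma 11.2.1.(ii)]{tsh} delivers $\sqrt{N_1} \hat{q}_{1-\alpha}^{\reg} \xrightarrow{P} s z_{1 - \alpha/2}$, which is the claim.

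I expect the main obstacle to be the bookkeeping in the first step: making the cancellation of the $\gamma + \beta^\top X$ term fully rigorous and confirming that the $o_P(1)$ error from the design-matrix linearization survives multiplication by the merely stochastically bounded $\mathbf{r}_*$ in the weak-in-probability sense. Everything downstream is essentially a transcription of the difference-of-means argument in \Cref{lemma:dm_critical_value}, because after linearization the regression-adjusted randomization statistic coincides, up to $o_P(1)$, with the same signed sum of paired residual differences.
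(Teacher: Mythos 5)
Your proposal is correct and follows essentially the same route as the paper's proof: express $\sqrt{N_1}\hat{\tau}_*^{\reg} = e_1^{\top}\mathbf{B}_*^{-1}\mathbf{r}_*$ on the invertibility event from \Cref{lemma:B_invertibility}, linearize via \Cref{lemma:paired_design_matrix_properties} and \Cref{lemma:ancova_covariance_properties} to get $\frac{1}{\sqrt{N_1}}\sum_{i\in\M}(2Z_i^*-1)\epsilon_i + o_P(1)$, and then reuse the conditional Berry--Esseen and quantile-convergence argument from \Cref{lemma:dm_critical_value}. The only difference is cosmetic: you make explicit the cancellation of the $\gamma+\beta^{\top}X$ term under correct specification, which the paper subsumes under ``standard least-squares theory.''
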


\begin{proof}
Let $\epsilon_i = Y_i - \mu(X_i)$, and let $\psi(x, z) = (z - 0.5, 1, x)$.  Set $\mathbf{B}_* = \sum_{i \in \M} \psi(X_i, Z_i^*) \psi(X_i, Z_i^*)^{\top} / 2 N_1$ and $\mathbf{r}_* = \sum_{i \in \M} \epsilon_i \psi(X_i, Z_i^*) / \sqrt{N_1}$.  By \Cref{lemma:B_invertibility} (applied with $\phi(x) = x$), $\mathbf{B}_*$ is invertible with probability approaching one.  On this event, standard least-squares theory gives us the formula $\hat{\tau}_*^{\reg} = e_1^{\top} \mathbf{B}_*^{-1} \mathbf{r}_*$.  \Cref{lemma:paired_design_matrix_properties} (applied with $\phi(x) = x$) shows that $e_1^{\top} \mathbf{B}_*^{-1} = (4, \mathbf{0}) + o_P(1)$ and \Cref{lemma:ancova_covariance_properties} shows that $\mathbf{r}_* = O_P(1)$.  Thus, we may write:
\begin{align*}
\sqrt{N_1} \hat{\tau}_*^{\reg} &= e_1^{\top} \mathbf{B}_*^{-1} \mathbf{r}_* = (4, \mathbf{0})^{\top} \mathbf{r} + o_P(1) = \frac{1}{\sqrt{N_1}} \sum_{i \in \M} (2 Z_i^* - 1) \epsilon_i + o_P(1)
\end{align*}
From this point, the same arguments used in the proof of \Cref{lemma:dm_critical_value} give $\sqrt{N_1} \hat{q}_{1 - \alpha}^{\reg} \xrightarrow{P} s z_{1 - \alpha/2}$.
\end{proof}

% ========================================================
\subsection{Asymptotic distributions of test statistics}

In this section, we derive the asymptotic distributions of various test statistics.  Throughout, we assume that $g : \R^d \rightarrow \R^k$ is a bounded function and $\{ h_n \} \subset \R^k$ is a sequence of vectors satisfying $\| h_n \| \leq C n^{-1/2}$ for all $n$.  In addition, we fix a (possibly empty) set $S \subseteq [k]$ and define the following functions:
\begin{align}
\phi_S(x, z) &= \left\{ 
\begin{array}{ll}
(1, x, g_S(x)) &\text{if } S \neq \emptyset\\
(1, x) &\text{if } S = \emptyset.
\end{array}
\right. \label{eq:phi_S_definition}\\
\psi_S(x, z) &= \left\{ 
\begin{array}{ll}
(z - 0.5, 1, x, g_S(x)) &\text{if } S \neq \emptyset\\
(z - 0.5, 1, x) &\text{if } S = \emptyset.
\end{array}
\right. \label{eq:psi_S_definition}
\end{align}

\subsubsection{Unweighted regression statistic}

\begin{lemma}
\label{lemma:epsilon_difference_asymptotic_normality}
Assume $P \in H_0$ and set $\epsilon_i = Y_i - \mu(X_i)$.  Then $\sum_{Z_i = 1} (\epsilon_i - \epsilon_{m(i)}) / \sqrt{N_1} \rightsquigarrow N(0, s^2)$ where $s^2 = 2 \E \{ \sigma^2(X) \mid Z = 1 \}$.
\end{lemma}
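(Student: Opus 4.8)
The plan is to condition on the covariate--treatment information $\F_n = \sigma(\{ (X_i, Z_i) \}_{i \leq n})$ and recognize the statistic as a weighted sum of conditionally independent, mean-zero residuals, to which a conditional Berry--Esseen bound (\Cref{theorem:berry_esseen}) can be applied. First I would observe that the optimal matching $m(\cdot)$ and the matched set $\M$ are functions of the covariates and treatments alone, hence $\F_n$-measurable. Writing $W_j = +1$ when $j$ is a matched treated unit and $W_j = -1$ when $j$ is a matched control, the pair-matching-\emph{without}-replacement structure guarantees that each index appears at most once, so I can rewrite the target as $T_n := N_1^{-1/2} \sum_{Z_i = 1} (\epsilon_i - \epsilon_{m(i)}) = N_1^{-1/2} \sum_{j \in \M} W_j \epsilon_j$, a sum over \emph{distinct} indices $j$. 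Because the units are i.i.d.\ and $Y_j$ depends on unit $j$ alone, the residuals $\{ \epsilon_j \}_{j \in \M}$ are independent given $\F_n$, with $\E( \epsilon_j \mid \F_n) = 0$ and $\Var( \epsilon_j \mid \F_n) = \sigma^2(X_j)$ under $P \in H_0$.

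With this representation, $\E(T_n \mid \F_n) = 0$ and $\Var(T_n \mid \F_n) = N_1^{-1} \sum_{j \in \M} \sigma^2(X_j) = s_n^2$ in the notation of (\ref{eq:randomization_variances}). I would then apply \Cref{theorem:berry_esseen} conditionally on $\F_n$ to the independent summands $W_j \epsilon_j / \sqrt{N_1}$, obtaining
\[
\sup_{t \in \R} \Big| \P(T_n \leq t \mid \F_n) - \Phi(t / s_n) \Big| \leq \frac{C}{s_n^3 \sqrt{N_1}} \cdot \frac{1}{N_1} \sum_{j \in \M} \E( |\epsilon_j|^3 \mid \F_n).
\]
The third-moment average is $O_P(1)$: bounding the sum over $\M$ by the sum over all $n$ units and using $n / N_1 \xrightarrow{P} 1/p$ together with Markov's inequality and the assumption that $Y$ has more than four moments gives $N_1^{-1} \sum_{j \in \M} \E( |\epsilon_j|^3 \mid \F_n) = O_P( \E|\epsilon|^3) = O_P(1)$. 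Since $1/s_n^3 = O_P(1)$ by \Cref{lemma:sigma_star_limit}, the entire bound is $O_P(N_1^{-1/2}) = o_P(1)$.

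Finally I would upgrade this conditional statement to the claimed unconditional weak convergence. \Cref{lemma:sigma_star_limit} gives $s_n \xrightarrow{P} s$, and since $\Phi$ is uniformly continuous, $\sup_t | \Phi(t/s_n) - \Phi(t/s)| = o_P(1)$; combined with the Berry--Esseen bound this yields $\P(T_n \leq t \mid \F_n) \xrightarrow{P} \Phi(t/s)$ for every fixed $t$. As conditional CDFs are bounded in $[0,1]$, the bounded convergence theorem then gives $\P(T_n \leq t) = \E\{ \P(T_n \leq t \mid \F_n) \} \to \Phi(t/s)$, which is exactly convergence of $T_n$ in distribution to $N(0, s^2)$.

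The delicate points are twofold. The first is recognizing that matching \emph{without replacement} makes the summands involve distinct indices, so that plain conditional independence — rather than a more complicated dependence structure — is available; this is precisely what breaks for matching with replacement and must be treated separately. The second, and main quantitative obstacle, is controlling the normalizing variance: the contribution $N_1^{-1} \sum_{Z_i = 1} \sigma^2(X_{m(i)})$ from the matched controls converges to $\E\{ \sigma^2(X) \mid Z = 1 \}$ only because optimal matching balances the function $\sigma^2(\cdot)$, which is exactly the content of \Cref{lemma:sigma_star_limit} (via \Cref{corollary:paired_discrepancy_Lp}). Everything else reduces to routine moment bookkeeping under the stated moment assumption.
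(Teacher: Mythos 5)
Your proposal is correct and follows essentially the same route as the paper's proof: condition on $\F_n$, apply the Berry--Esseen theorem to the conditionally independent signed residuals, control the third-moment average by Markov's inequality, invoke \Cref{lemma:sigma_star_limit} for $s_n \xrightarrow{P} s$, and pass to the unconditional limit via bounded convergence. The only cosmetic difference is that you absorb the variance normalization into $\Phi(t/s_n)$ rather than standardizing the statistic and finishing with Slutsky, which changes nothing of substance.
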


\begin{proof}
Let $\F_n = \sigma(\{ (X_i, Z_i) \}_{i \leq n})$ and set $s_n^2 = \Var( \sum_{Z_i = 1} ( \epsilon_i - \epsilon_{m(i)}) / \sqrt{N_1} \mid \F_n) = \sum_{i \in \M} \sigma^2(X_i) / N_1$.  Then applying the Berry-Esseen theorem (\Cref{theorem:berry_esseen}) conditional on $\F_n$ gives:
\begin{align}
\begin{split}
\sup_{t \in \R} \left| \P \left( \frac{1}{s_n \sqrt{N_1}} \sum_{Z_i = 1} (\epsilon_i - \epsilon_{m(i)}) \leq t \, \bigg| \, \F_n \right) - \Phi(t) \right| &\leq \frac{C}{\sqrt{N_1}} \frac{1}{s_n^3} \frac{1}{N_1} \sum_{Z_i = 1} \E(| \epsilon_i - \epsilon_{m(i)}|^3 \mid \F_n)\\
&\leq \frac{4 C}{\sqrt{N_1}} \frac{1}{s_n^3} \frac{n}{N_1} \frac{1}{n} \sum_{i = 1}^n \E(| \epsilon_i|^3 \mid \F_n)\\
&\xrightarrow{P} 0. \label{eq:epsilon_difference_kolmogorov_convergence}
\end{split}
\end{align}
In particular, $\P( \sum_{Z_i = 1} ( \epsilon_i - \epsilon_{m(i)}) / s_n \sqrt{N_1} \leq t \mid \F_n) \rightarrow \Phi(t)$ for each fixed $t$.  Taking expectations on both sides with the bounded convergence theorem gives $\sum_{Z_i = 1} ( \epsilon_i - \epsilon_{m(i)}) / s_n \sqrt{N_1} \rightsquigarrow N(0, 1)$.  Since $s_n / s = 1 + o_P(1)$ by \Cref{lemma:sigma_star_limit}, Slutsky's theorem implies $\sum_{Z_i = 1} ( \epsilon_i - \epsilon_{m(i)}) / s \sqrt{N_1} \rightsquigarrow N(0, 1)$.  Multiplying both sides by $s$ gives the conclusion of the lemma.
\end{proof}

\begin{lemma} \label{lemma:paired_expansion}
Let $P \in H_0$ satisfy the assumptions of \Cref{theorem:paired_local_robustness}.  Define the vector $(\tilde \tau_n^{\reg}, \tilde \theta)$ as follows:
\begin{align}
( \tilde \tau_n^{\reg}, \tilde \theta) &= \argmin_{\tau, \theta} \sum_{i \in \M} \{ Y_i + h_n^{\top} g(X_i) - \tau(Z_i - 0.5) - \theta^{\top} \phi_S(X_i, Z_i) \}^2. \label{eq:paired_misspecified_regression}
\end{align}
Then $\sqrt{N_1} \tilde \tau_n^{\reg} = \sum_{i \in \M} (2 Z_i - 1) \epsilon_i / \sqrt{N_1} + o_P(1) \rightsquigarrow N(0, s^2)$.
\end{lemma}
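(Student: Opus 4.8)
The plan is to read an exact finite-sample expansion of $\sqrt{N_1}\,\tilde\tau_n^{\reg}$ off the least-squares normal equations, and then use matching-induced covariate balance to collapse that expansion to the paired sum of residuals already analyzed in \Cref{lemma:epsilon_difference_asymptotic_normality}. Write $\psi_S(x,z)=(z-0.5,\phi_S(x,z))$ for the full feature vector and let $\theta_0=(0,\gamma,\beta,0)$ be the coefficient that reproduces the baseline linear part $\gamma+\beta^{\top}x$ while placing zero weight on the $(z-0.5)$ slot and on the $g_S$ block. Under the baseline linear model we have $Y_i+h_n^{\top}g(X_i)=\theta_0^{\top}\psi_S(X_i,Z_i)+\epsilon_i+h_n^{\top}g(X_i)$, so regressing the left-hand side on $\psi_S$ and reading off the first coordinate (where $e_1^{\top}\theta_0=0$, so $\theta_0$ drops out) yields the identity
\begin{equation*}
\sqrt{N_1}\,\tilde\tau_n^{\reg}=e_1^{\top}\mathbf{B}^{-1}\mathbf{r},
\end{equation*}
with $\mathbf{B}=\tfrac{1}{2N_1}\sum_{i\in\M}\psi_S\psi_S^{\top}$ and $\mathbf{r}=\tfrac{1}{2\sqrt{N_1}}\sum_{i\in\M}\psi_S\{\epsilon_i+h_n^{\top}g(X_i)\}$ exactly the matrices of \Cref{lemma:paired_design_matrix_properties,lemma:ancova_covariance_properties}. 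Invertibility of $\mathbf{B}$ with probability tending to one is \Cref{lemma:B_invertibility} (applied with $\phi(x)=(x,g_S(x))$), which needs $\Var\{(X,g_S(X))\mid Z=1\}\succ\mathbf{0}$; this holds by Assumption~\ref{assumption:primitives}(e) when $S=\emptyset$ and by the nonsingularity hypothesis of the model-dependence setting otherwise.

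Next I would insert the asymptotics of the two factors. Since $P\{e(X)<0.5\}=1$, optimal matching balances every $L^2$ function of $X$, so the off-diagonal blocks of $\mathbf{B}$ vanish and $\mathbf{B}$ converges to the block-diagonal limit of \Cref{lemma:paired_design_matrix_properties} whose $(1,1)$ entry is $0.25$; by continuous mapping this gives $e_1^{\top}\mathbf{B}^{-1}=(4,\mathbf{0})+o_P(1)$. Meanwhile \Cref{lemma:ancova_covariance_properties} gives $\mathbf{r}=O_P(1)$. An $o_P(1)$ row vector times an $O_P(1)$ vector is $o_P(1)$, so
\begin{equation*}
\sqrt{N_1}\,\tilde\tau_n^{\reg}=(4,\mathbf{0})\,\mathbf{r}+o_P(1)=\frac{1}{\sqrt{N_1}}\sum_{i\in\M}(2Z_i-1)\{\epsilon_i+h_n^{\top}g(X_i)\}+o_P(1),
\end{equation*}
where I use $2(Z_i-0.5)=2Z_i-1$ to identify $4$ times the first coordinate of $\mathbf{r}$.

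The crux is discarding the nonlinearity. Because each matched pair contributes one treated and one control unit, $\tfrac{1}{\sqrt{N_1}}\sum_{i\in\M}(2Z_i-1)h_n^{\top}g(X_i)=\sqrt{N_1}\,h_n^{\top}\cdot\tfrac{1}{N_1}\sum_{Z_i=1}\{g(X_i)-g(X_{m(i)})\}$, which is $O_P(1)\cdot o_P(1)=o_P(1)$: the prefactor is $O_P(1)$ since $\|h_n\|\le Cn^{-1/2}$ and $N_1/n\to p$, and the matched average is $o_P(1)$ by \Cref{corollary:paired_discrepancy_Lp}. This is precisely where the $n^{-1/2}$ local-misspecification scaling meets the nonparametric consistency of matching. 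What remains, $\tfrac{1}{\sqrt{N_1}}\sum_{i\in\M}(2Z_i-1)\epsilon_i=\tfrac{1}{\sqrt{N_1}}\sum_{Z_i=1}(\epsilon_i-\epsilon_{m(i)})$, is exactly the statistic shown to converge to $N(0,s^2)$ in \Cref{lemma:epsilon_difference_asymptotic_normality}. Slutsky's theorem then delivers both the claimed linear expansion and the weak limit.

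Because the heavy machinery---the block-diagonal limit of $\mathbf{B}$, the boundedness of $\mathbf{r}$, and the conditional Berry--Esseen CLT for the paired residuals---already lives in the cited lemmas, the real content here is the assembly: fixing the normal-equation identity and its scaling, and verifying the balance cancellation. I expect that cancellation to be the conceptual heart of the argument, with the product bound $o_P(1)\cdot O_P(1)$ the single place demanding a little care rather than pure mechanics.
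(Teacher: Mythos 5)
Your proposal is correct and follows essentially the same route as the paper's proof: the normal-equation identity $\sqrt{N_1}\tilde\tau_n^{\reg}=e_1^{\top}\mathbf{B}^{-1}\mathbf{r}$, the limits $e_1^{\top}\mathbf{B}^{-1}=(4,\mathbf{0})+o_P(1)$ and $\mathbf{r}=O_P(1)$ from the cited lemmas, the cancellation $\sqrt{N_1}h_n^{\top}\cdot\frac{1}{N_1}\sum_{Z_i=1}\{g(X_i)-g(X_{m(i)})\}=O_P(1)\cdot o_P(1)$ via \Cref{corollary:paired_discrepancy_Lp}, and the CLT from \Cref{lemma:epsilon_difference_asymptotic_normality}. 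The only (immaterial) difference is your choice of centering coefficient $\theta_0=(0,\gamma,\beta,0)$, which leaves the full nonlinearity $h_n^{\top}g(X_i)$ in the residual, whereas the paper centers at $(0,\gamma,\beta,h_{n,S})$ so only the omitted part $h_{n,S^c}^{\top}g_{S^c}(X_i)$ remains; both yield exact identities and the same asymptotics.
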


\begin{proof}
Let $\mathbf{B} = \sum_{i \in \M} \psi_S(X_i, Z_i) \psi_S(X_i, Z_i)^{\top} / 2 N_1$ and $\mathbf{r} = \sum_{ \in \M} \{ \epsilon_i + h_{n,S^c}^{\top} g_{S^c}(X_i) \} \psi_S(X_i, Z_i) / 2 \sqrt{N_1}$.

By \Cref{lemma:B_invertibility} applied with $\phi(x) = ( x, g_S(x) )$, $\mathbf{B}$ is invertible with probability approaching one.  On this event, standard least-squares theory gives the formula $\sqrt{N_1} \tilde \tau_n^{\reg} = e_1^{\top} \mathbf{B}^{-1} \mathbf{r}$.  By \Cref{lemma:paired_design_matrix_properties} applied with $\phi(x) = (x, g_S(x))$, we have $e_1^{\top} \mathbf{B}^{-1} = (4, \mathbf{0}) + o_P(1)$.  By \Cref{lemma:ancova_covariance_properties}, $\mathbf{r} = O_P(1)$.  Putting these together gives:
\begin{align*}
\sqrt{N_1} \tilde \tau_n^{\reg} &= e_1^{\top} \mathbf{B}^{-1} \mathbf{r} + o_P(1)\\
&= (4, \mathbf{0})^{\top} \mathbf{r} + o_P(1) \mathbf{r} + o_P(1)\\
&= (4, \mathbf{0})^{\top} \frac{1}{2 \sqrt{N_1}} \sum_{i \in \M} \{ \epsilon_i + h_{n,S^c}^{\top} g_{S^c}(X_i) \} (Z_i - 0.5, 1, g_S(X_i)) + o_P(1) O_P(1)\\
&= \frac{1}{\sqrt{N_1}} \sum_{i \in \M} (2 Z_i - 1) \epsilon_i + \frac{1}{2 \sqrt{N_1}} \sum_{i \in \M} (2 Z_i - 1) h_{n,S^c}^{\top} g_{S^c}(X_i) + o_P(1)\\
&= \frac{1}{\sqrt{N_1}} \sum_{i \in \M} (2 Z_i - 1) \epsilon_i + \underbrace{\frac{1}{2} \sqrt{N_1} h_{n,S^c}^{\top}}_{= O_P(1)} \underbrace{\frac{1}{N_1} \sum_{Z_i = 1} \{ g_{S^c}(X_i) - g_{S^c}(X_{m(i)}) \}}_{= o_P(1) \text{ by \Cref{corollary:paired_discrepancy_Lp}}} + o_P(1)\\
&= \frac{1}{\sqrt{N_1}} \sum_{i \in \M} (2 Z_i - 1) \epsilon_i + o_P(1).
\end{align*}
The final expression in the preceding display converges in distribution to $N(0, s^2)$ by \Cref{lemma:epsilon_difference_asymptotic_normality} and Slutsky's theorem.
\end{proof}

\subsubsection{Weighted regression statistic}

\begin{lemma}
\label{lemma:replaced_design_matrix}
Let $\phi : \R^d \rightarrow \R^{\ell}$ satisfy $\Var \{ \phi(X) \mid Z = 1 \} \succ \mathbf{0}$ and $\E \{ \| \phi(X) \|^q \} < \infty$ for some $q > 4$, and define $\psi_{\phi}(x, z) = (z - 0.5, 1, \phi(x))$.  Set $W_i = Z_i + \sum_{j = 1}^n Z_j \mathbf{1} \{ m_r(j) = i \}$.  Then we have:
\begin{align*}
\mathbf{B}_r := \frac{1}{2 N_1} \sum_{i \in \M_r} W_i \psi_{\phi}(X_i, Z_i) \psi_{\phi}(X_i, Z_i)^{\top} \xrightarrow{P} 
\left[ 
\begin{array}{cc}
0.25 &\mathbf{0}^{\top}\\ 
\mathbf{0} &\E \{ (1, \phi(X))(1, \phi(X))^{\top} \mid Z = 1 \}
\end{array} 
\right] \succ \mathbf{0}.
\end{align*}
\end{lemma}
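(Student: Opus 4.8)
The plan is to exploit the weight structure to collapse the weighted sum over $\mathcal{M}_r$ into a sum indexed by the treated units, and then read off the limit block by block. First I would observe that a treated unit receives weight $W_i = 1$, while an untreated unit $i$ receives weight $W_i = K_{i,n} = \sum_{j} Z_j \mathbf{1}\{ m_r(j) = i \}$, the number of treated units it serves (recall $m_r(j)$ is always an untreated unit, so the two summands in $W_i = Z_i + \sum_j Z_j \mathbf{1}\{m_r(j)=i\}$ never overlap). Consequently, for any matrix-valued function $f$,
$$\sum_{i \in \mathcal{M}_r} W_i f(X_i, Z_i) = \sum_{Z_i = 1} f(X_i, 1) + \sum_{Z_i = 0} K_{i,n} f(X_i, 0) = \sum_{Z_i = 1} \{ f(X_i, 1) + f(X_{m_r(i)}, 0) \},$$
where the last step re-indexes each control's contribution by the treated units it matches. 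Applying this with $f(x,z) = \psi_\phi(x,z)\psi_\phi(x,z)^\top$ turns $\mathbf{B}_r$ into a genuine \emph{pair-sum} with no multiplicities left to track:
$$\mathbf{B}_r = \frac{1}{2 N_1}\sum_{Z_i=1}\{\psi_\phi(X_i,1)\psi_\phi(X_i,1)^\top + \psi_\phi(X_{m_r(i)},0)\psi_\phi(X_{m_r(i)},0)^\top\}.$$
This identity is the crux of the argument, since it eliminates the weights $K_{i,n}$ entirely and reduces the weighted problem to two ordinary averages.

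Next I would expand the rank-one terms, writing $\psi_\phi(x,z) = (z - 0.5,\, u(x))$ with $u(x) = (1, \phi(x))$, and read off the $2\times 2$ block structure. The $(1,1)$ entry is $(2N_1)^{-1}\sum_{Z_i=1}\{(0.5)^2 + (-0.5)^2\} = 0.25$ exactly; the off-diagonal block is $(4N_1)^{-1}\sum_{Z_i=1}\{u(X_i) - u(X_{m_r(i)})\}$; and the bottom-right block is $(2N_1)^{-1}\sum_{Z_i=1}\{u(X_i)u(X_i)^\top + u(X_{m_r(i)})u(X_{m_r(i)})^\top\}$. I would take limits entrywise using \Cref{lemma:replaced_wlln}, applied to the scalar functions $g = \phi_j$ and $g = \phi_j\phi_k$; these satisfy the hypothesis of that lemma because $\E\{\|\phi(X)\|^q\} < \infty$ for some $q > 4$ gives $\E\{|\phi_j\phi_k|^{q/2}\} \leq \E\{\|\phi(X)\|^q\} < \infty$ with $q/2 > 1$. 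Since \Cref{lemma:replaced_wlln} shows that both the treated average $N_1^{-1}\sum_{Z_i=1} g(X_i)$ and the matched-control average $N_1^{-1}\sum_{Z_i=1} g(X_{m_r(i)})$ converge to the same limit $\E\{g(X) \mid Z = 1\}$, the off-diagonal block tends to $\mathbf{0}$ (the constant coordinate of $u$ cancels exactly), and the bottom-right block tends to $\E\{(1,\phi(X))(1,\phi(X))^\top \mid Z = 1\}$. This yields the claimed limit matrix.

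Finally I would check positive-definiteness of the limit. Because its off-diagonal block vanishes, it suffices that both diagonal blocks are positive definite: $0.25 > 0$ is trivial, and a Schur-complement computation shows $\E\{(1,\phi(X))(1,\phi(X))^\top \mid Z=1\} \succ \mathbf{0}$ is equivalent to $\Var\{\phi(X) \mid Z = 1\} \succ \mathbf{0}$, which is assumed. I expect the only genuine obstacle to be the convergence of the matched-control second-moment average $N_1^{-1}\sum_{Z_i=1} u(X_{m_r(i)})u(X_{m_r(i)})^\top$: unlike the treated average, this is not an i.i.d. average, and its control rests on the uniform moment bounds for the multiplicities $K_{i,n}$ (\Cref{lemma:Kn_moments}) and the match-quality estimate (\Cref{lemma:match_quality}) that together underlie \Cref{lemma:replaced_wlln}. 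Once the weight-collapsing identity and \Cref{lemma:replaced_wlln} are in hand, everything else is routine bookkeeping.
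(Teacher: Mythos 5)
Your proposal is correct and follows essentially the same route as the paper, whose entire proof is ``apply \Cref{lemma:replaced_wlln} to each entry of $\mathbf{B}_r$''; your weight-collapsing identity and the moment check for $g = \phi_j\phi_k$ simply make explicit the bookkeeping that the paper leaves implicit in that one-line invocation.
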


\begin{proof}
Apply \Cref{lemma:replaced_wlln} to each entry of $\mathbf{B}_r$.
\end{proof}

\begin{lemma}
\label{lemma:replaced_covariance_properties}
Let $\phi : \R^d \rightarrow \R^{\ell}$ satisfy $\Var \{ \phi(X) \mid Z = 1 \} \succ \mathbf{0}$ and $\E \{ \| \phi(X) \|^q \} < \infty$ for some $q > 4$, and define $\psi_{\phi}(x, z) = (z - 0.5, 1, \phi(x))$.  Define $\mathbf{r}_{r,n}$ as follows:
\begin{align*}
    \mathbf{r}_{r,n} &= \frac{1}{2 \sqrt{N_1}} \sum_{i \in \mathcal{M}_r} W_i \{ \epsilon_i + h_n^{\top} g(X_i) \} \psi_{\phi}(X_i, Z_i)
\end{align*}
where $W_i = Z_i + (1 - Z_i) \sum_{j = 1}^n Z_j \mathbf{1} \{ m_r(j) = i \}$.  Then $\mathbf{r}_{r,n} = O_P(1)$.
\end{lemma}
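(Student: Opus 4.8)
The plan is to follow the template of \Cref{lemma:ancova_covariance_properties}, the pair-matching analogue, adapting it to handle the random multiplicity weights $W_i$. As there, it suffices to prove the scalar statement that $\tfrac{1}{2\sqrt{N_1}} \sum_{i \in \M_r} W_i \{ \epsilon_i + h_n^\top g(X_i) \} \, \omega(X_i, Z_i) = O_P(1)$ whenever $\omega$ is one of the coordinate maps $z-0.5$, $1$, or a component $\phi_\ell$ of $\phi$; the vector conclusion then follows coordinatewise. I would write this scalar quantity as $a + b$, where $a$ collects the residual contribution $\tfrac{1}{2\sqrt{N_1}}\sum_{i \in \M_r} W_i \epsilon_i \omega(X_i,Z_i)$ and $b$ the nonlinearity contribution $\tfrac{1}{2\sqrt{N_1}}\sum_{i \in \M_r} W_i h_n^\top g(X_i)\omega(X_i,Z_i)$. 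The crucial structural observation is that the matching $m_r$, the matched set $\M_r$, the counts $K_{i,n}$, and hence the weights $W_i$ are all measurable with respect to $\F_n = \sigma(\{(X_i,Z_i)\}_{i\le n})$, so conditioning on $\F_n$ freezes every weight.

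For term $a$, conditioning on $\F_n$ gives $\E(a \mid \F_n) = 0$ because the $\epsilon_i$ are conditionally independent and mean-zero given $\F_n$ (unconfoundedness plus i.i.d.\ sampling). Its conditional variance is $\Var(a\mid \F_n) = \tfrac{1}{4N_1}\sum_{i\in\M_r} W_i^2 \omega(X_i,Z_i)^2 \sigma^2(X_i)$, and since unmatched controls carry $W_i = 0$ this equals $\tfrac{1}{4N_1}\sum_{i=1}^n W_i^2 \omega(X_i,Z_i)^2\sigma^2(X_i)$. I would bound it by $\tfrac{n}{4N_1}\cdot\tfrac1n\sum_{i=1}^n W_i^2\omega^2\sigma^2$ and control the average via exchangeability: $\E[\tfrac1n\sum_i W_i^2\omega^2\sigma^2] = \E[W_1^2\omega(X_1,Z_1)^2\sigma^2(X_1)]$, and because $W_1^2 = Z_1 + (1-Z_1)K_{1,n}^2$ this splits into a treated term (plainly finite) and $\E[(1-Z_1)K_{1,n}^2\omega(X_1,0)^2\sigma^2(X_1)]$. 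H\"older's inequality bounds the latter by $\E[K_{1,n}^{2r'}]^{1/r'}\E[(\omega^2\sigma^2)^r]^{1/r}$; the moment conditions (more than four moments for $Y$, hence for $\sigma^2$, and $\E\|\phi(X)\|^q<\infty$ with $q>4$) ensure $\omega^2\sigma^2 \in L^r$ for some $r>1$, while \Cref{lemma:Kn_moments} makes $\E[K_{1,n}^{2r'}]$ finite and bounded uniformly in $n$. Thus $\Var(a\mid\F_n) = O_P(1)$, and combined with $\E(a\mid\F_n)=0$ this yields $a = O_P(1)$ by a tightness argument like the one underlying \Cref{lemma:conditional_wlln}.

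For term $b$, I would use the crude bound $|b| \le \tfrac{\|h_n\|}{2\sqrt{N_1}}\sum_{i=1}^n W_i\|g(X_i)\|\,|\omega(X_i,Z_i)| = \tfrac{\|h_n\|\, n}{2\sqrt{N_1}}\cdot\tfrac1n\sum_{i=1}^n W_i\|g(X_i)\|\,|\omega(X_i,Z_i)|$. Here $\tfrac{\|h_n\| n}{\sqrt{N_1}} \le C\sqrt{n/N_1} = O_P(1)$ since $\|h_n\|\le Cn^{-1/2}$ and $n/N_1 = O_P(1)$, while the remaining average is $O_P(1)$ by the same exchangeability-plus-H\"older reasoning as for $a$ (now with $W_1$ rather than $W_1^2$, and with $g$ bounded so that only a first moment of $\omega$ is needed). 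Hence $b = O_P(1)$, and $\mathbf{r}_{r,n} = a + b = O_P(1)$.

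The main obstacle, and the only real departure from \Cref{lemma:ancova_covariance_properties}, is that the weights $W_i$ for control units equal the match counts $K_{i,n}$, which are random and statistically dependent on the entire sample, including $(X_i, Z_i)$ itself. In the pair-matching proof every unit appeared with weight at most one, so the weighted averages were genuine i.i.d.\ averages; here the weights are unbounded and correlated with the summands. The resolution is entirely through \Cref{lemma:Kn_moments}: because $K_{i,n}$ has moments of every order bounded uniformly in $n$, exchangeability reduces each weighted average to a single expectation of the form $\E[K_{1,n}^p \cdot (\text{integrable function of } X_1)]$, and H\"older's inequality closes the estimate as long as the accompanying function has slightly more than one moment --- which is exactly what Assumption \ref{assumption:primitives} and the hypothesis $q>4$ guarantee. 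One should also note that the whole argument lives on the event $N_0\ge 1$ (so that $\M_r$ and the weights are well defined), which has probability tending to one and therefore does not affect the $O_P(1)$ conclusion.
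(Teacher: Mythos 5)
Your proposal is correct and follows essentially the same route as the paper's proof: reduce to scalar coordinates, split into the residual term and the $h_n^{\top}g$ term, bound the conditional second moment of the former given $\F_n$ by $\tfrac{n}{4N_1}\cdot\tfrac1n\sum_i W_i^2 f_i^2\sigma^2(X_i)$, and control both sample averages via exchangeability, H\"older, and the uniform moment bounds on $K_{i,n}$ from \Cref{lemma:Kn_moments}. The only cosmetic difference is that you phrase the first step as conditional-mean-zero plus conditional-variance tightness, while the paper bounds the conditional second moment directly and invokes Markov's inequality; these are the same argument.
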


\begin{proof}
As in the proof of \Cref{lemma:ancova_covariance_properties}, we will actually prove a more general result:  for any function $f(x, z)$ with $\E \{ f(X, Z)^q \} < \infty$ for some $q > 4$, we will have $\sum_{i \in \mathcal{M}_r} W_i f(X_i, Z_i) \{ \epsilon_i + h_n^{\top} g(X_i) \} = O_P(1)$.  The conclusion of the lemma follows from this result by choosing $f$ to be the coordinates of the vector $\psi_{\phi}(x, z)$.

To prove the more general result, we start by considering the term $\sum_{i \in \mathcal{M}_r} W_i f(X_i, Z_i) \epsilon_i / 2 \sqrt{N_1}$:
\begin{align*}
\E \left( \left\{ \frac{1}{2 \sqrt{N_1}} \sum_{i \in \mathcal{M}_r} W_i f(X_i, Z_i) \epsilon_i \right\}^2 \, \bigg| \, \F_n \right) &= \frac{1}{4 N_1} \sum_{i \in \mathcal{M}_r} W_i^2 f(X_i, Z_i)^2 \sigma^2(X_i)\\
&\leq \frac{n}{4 N_1} \frac{1}{n} \sum_{i = 1}^n W_i^2 f(X_i, Z_i)^2 \sigma^2(X_i)
\end{align*}
Since $W_i \leq (1 + K_{i,n})$, $W_i^2$ has infinitely many moments that are uniformly bounded in $n$.  Moreover, $f(X, Z)^2$ has more than two moments, and $\sigma^2(X)$ has more than two moments.  Thus, their product $W_i^2 f(X_i, Z_i)^2 \sigma^2(X_i)$ has at least one bounded moment, and $\sum_{i = 1}^n W_i^2 f(X_i, Z_i)^2 \sigma^2(X_i) / n = O_P(1)$ by Markov's inequality.

Next, we consider the term $\sum_{i \in \mathcal{M}_r} W_i f(X_i, Z_i) h_n^{\top} g(X_i) / 2 \sqrt{N_1}$.  This term can be bounded as follows:
\begin{align*}
\left| \frac{1}{2 \sqrt{N_1}} \sum_{i \in \mathcal{M}_r} W_i f(X_i, Z_i) h_n^{\top} g(X_i) \right| &\leq \frac{\| h_n \| n}{2 \sqrt{N_1}} \frac{1}{n} \sum_{i = 1}^n W_i f(X_i, Z_i) \| g(X_i) \|\\
&\leq \frac{C}{2} \left( \frac{n}{N_1} \right)^{1/2} \frac{1}{n} \sum_{i = 1}^n W_i |f(X_i, Z_i)| \cdot \| g(X_i) \|.
\end{align*}
Since $\| g(X_i) \|$ has two moments and $W_i |f(X_i, Z_i)| \leq (1 + K_{i,n}) | f(X_i, Z_i)|$ has at least two moments, their product has at least one moment.  Thus, $\sum_{i = 1}^n W_i | f(X_i, Z_i)| \cdot \| g(X_i) \| /n = O_P(1)$ by Markov's inequality.

Combining the two terms proves our more general claim.
\end{proof}

\begin{lemma} \label{lemma:replaced_expansion}
Let $P$ satisfy the assumptions of \Cref{theorem:replaced_robustness}.  Consider the following weighted least-squares problem:
\begin{align}
(\tilde \tau_n^{\reg}, \tilde \theta) &= \argmin_{\tau, \theta} \sum_{i \in \M_r} W_i \{ Y_i + h_n^{\top} g(X_i) - \tau (Z_i - 0.5) - \theta^{\top} \phi_S(X_i) \}^2 \label{eq:replaced_misspecified_regression}
\end{align}
where $W_i = Z_i + (1 - Z_i) \sum_{j = 1}^n Z_j \mathbf{1} \{ m_r(j) = i \}$.  Then $\sqrt{N_1} \tilde \tau_n^{\reg} = \sum_{i \in \M_r} W_i (2Z_i - 1) \epsilon_i / \sqrt{N_1} + o_P(1)$.  Moreover, $\sqrt{N_1} \tilde \tau_n^{\reg} / \tilde s_n \rightsquigarrow N(0, 1)$ where $\tilde s_n^2 = \sum_{i \in \M_r} W_i^2 \sigma^2(X_i) / N_1$.
\end{lemma}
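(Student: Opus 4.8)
The plan is to follow the template established in the proof of \Cref{lemma:paired_expansion}, substituting the matching-with-replacement analogues of each supporting result. Writing $\psi_S(x,z) = (z - 0.5, 1, x, g_S(x))$ (with the $g_S$ block omitted when $S = \emptyset$) and $\epsilon_i = Y_i - \mu(X_i)$, I first set
\[
\mathbf{B}_r = \frac{1}{2N_1}\sum_{i \in \M_r} W_i \psi_S(X_i, Z_i)\psi_S(X_i,Z_i)^{\top}, \qquad \mathbf{r}_{r,n} = \frac{1}{2\sqrt{N_1}}\sum_{i \in \M_r} W_i\{\epsilon_i + h_{n,S^c}^{\top}g_{S^c}(X_i)\}\psi_S(X_i,Z_i),
\]
where the linear and $g_S$ parts of $Y_i + h_n^{\top}g(X_i)$ have been absorbed into the regressors, leaving only the $S^c$-component of the nonlinearity as ``noise'' for the $\tau$-coefficient. \Cref{lemma:replaced_design_matrix} (applied with $\phi(x) = (x, g_S(x))$) shows $\mathbf{B}_r$ converges in probability to an invertible, block-diagonal matrix whose $(1,1)$ entry is $0.25$ with vanishing off-diagonal block; hence $\mathbf{B}_r$ is invertible with probability tending to one and, by the continuous mapping theorem, $e_1^{\top}\mathbf{B}_r^{-1} = (4, \mathbf{0}) + o_P(1)$. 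Note that, unlike the paired case, this step requires no assumption on the propensity scores, which is exactly what makes matching with replacement work globally. On the invertibility event, weighted least-squares theory gives $\sqrt{N_1}\tilde\tau_n^{\reg} = e_1^{\top}\mathbf{B}_r^{-1}\mathbf{r}_{r,n}$, and \Cref{lemma:replaced_covariance_properties} (applied with $g_{S^c}$ in place of $g$) gives $\mathbf{r}_{r,n} = O_P(1)$.

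Combining these yields $\sqrt{N_1}\tilde\tau_n^{\reg} = (4,\mathbf{0})^{\top}\mathbf{r}_{r,n} + o_P(1)$, which expands to
\[
\sqrt{N_1}\tilde\tau_n^{\reg} = \frac{1}{\sqrt{N_1}}\sum_{i \in \M_r} W_i(2Z_i - 1)\epsilon_i + \frac{1}{\sqrt{N_1}}\sum_{i \in \M_r} W_i(2Z_i - 1)h_{n,S^c}^{\top}g_{S^c}(X_i) + o_P(1).
\]
The crux of the expansion is showing the middle term is negligible. Here I would use the bookkeeping identity that, because each treated unit has $W_i = 1$ while each control $i$ carries weight $W_i = \sum_j Z_j \mathbf{1}\{m_r(j) = i\}$, the weighted sum telescopes into a matched difference:
\[
\frac{1}{N_1}\sum_{i \in \M_r} W_i(2Z_i - 1)g_{S^c}(X_i) = \frac{1}{N_1}\sum_{Z_i = 1}\{g_{S^c}(X_i) - g_{S^c}(X_{m_r(i)})\} = o_P(1)
\]
by \Cref{corollary:empirical_Lp_convergence}. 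Since $\sqrt{N_1}\,h_{n,S^c}^{\top} = O_P(1)$, the middle term is $O_P(1)\cdot o_P(1) = o_P(1)$, which establishes the first claim $\sqrt{N_1}\tilde\tau_n^{\reg} = \sum_{i \in \M_r}W_i(2Z_i-1)\epsilon_i/\sqrt{N_1} + o_P(1)$.

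For the asymptotic normality I would condition on $\F_n = \sigma(\{(X_i,Z_i)\}_{i \le n})$, under which the $\epsilon_i$ are independent and mean-zero with $\Var(\epsilon_i \mid \F_n) = \sigma^2(X_i)$, so the leading term has conditional mean zero and conditional variance exactly $\tilde s_n^2 = \sum_{i \in \M_r}W_i^2\sigma^2(X_i)/N_1$ (using $(2Z_i-1)^2 = 1$). Applying the Berry--Esseen theorem (\Cref{theorem:berry_esseen}) conditionally, it remains to verify that the Lyapunov ratio
\[
\frac{1}{\tilde s_n^3 \sqrt{N_1}}\cdot\frac{1}{N_1}\sum_{i \in \M_r} W_i^3\,\E(|\epsilon_i|^3 \mid \F_n)
\]
vanishes in probability, after which the bounded convergence theorem upgrades the conditional CDF convergence to the unconditional $\sum_{i\in\M_r}W_i(2Z_i-1)\epsilon_i/(\tilde s_n\sqrt{N_1}) \rightsquigarrow N(0,1)$, and Slutsky (using $1/\tilde s_n = O_P(1)$ to absorb the $o_P(1)$ remainder) finishes the proof. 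I expect this last step to be the main obstacle, since controlling the ratio requires two ingredients specific to matching with replacement: a lower bound $\tilde s_n^2 \geq N_1^{-1}\sum_{Z_i=1}\sigma^2(X_i) \xrightarrow{P} \E\{\sigma^2(X) \mid Z = 1\} > 0$, which gives $1/\tilde s_n = O_P(1)$ via the nonsingularity assumption, and a uniform bound on the weighted third-moment sum. The latter is where \Cref{lemma:Kn_moments} is essential: bounding $W_i^3 \leq (1 + K_{i,n})^3$ and combining the uniformly bounded moments of $K_{i,n}$ with the four moments of $Y$ through H\"older's inequality shows the numerator is $O_P(1)$, so division by $\sqrt{N_1}$ makes the ratio $o_P(1)$. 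The remaining manipulations are a routine transcription of the arguments in \Cref{lemma:paired_expansion} and \Cref{lemma:dm_critical_value}.
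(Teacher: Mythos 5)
Your proposal is correct and follows essentially the same route as the paper's proof: the same $\mathbf{B}_r$/$\mathbf{r}_{r,n}$ decomposition via Lemmas \ref{lemma:replaced_design_matrix} and \ref{lemma:replaced_covariance_properties}, the same telescoping of the weighted sum into a matched difference handled by \Cref{corollary:empirical_Lp_convergence}, and the same conditional Berry--Esseen argument with the Lyapunov ratio controlled by the lower bound on $\tilde s_n$ together with \Cref{lemma:Kn_moments} and H\"older. No gaps; the steps you flag as the main obstacles are exactly the ones the paper addresses, and in the same way.
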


\begin{proof}
Let $\mathbf{B}_r = \sum_{i \in \M_r} W_i \psi_S(X_i, Z_i) \psi_S(X_i, Z_i)^{\top} / 2 N_1$ and $\mathbf{r}_{r,n} = \sum_{i \in \M_r} W_i \{ \epsilon_i + h_{n,S^c}^{\top} g_{S^c}(X_i) \} \psi_S(X_i, Z_i) / 2 \sqrt{N_1}$.

\Cref{lemma:replaced_design_matrix} (applied with $\phi(x) = (x, g_S(x))$) implies that $\mathbf{B}_r$ is invertible with probability tending to one, in which case standard least-squares formulae give the identity $\sqrt{N_1} \tilde \tau_n^{\reg} = e_1^{\top} \mathbf{B}_r^{-1} \mathbf{r}_{r,n}$.  Again applying \Cref{lemma:replaced_design_matrix} with $\phi(x) = (x, g_S(x))$ gives $e_1^{\top} \mathbf{B}^{-1} = (4, \mathbf{0}) + o_P(1)$.  \Cref{lemma:replaced_covariance_properties} gives $\mathbf{r}_{r,n} = O_P(1)$.  Thus, Slutsky's theorem gives:
\begin{align*}
\sqrt{N_1} \tilde \tau_n^{\reg} &= e_1^{\top} \mathbf{B}^{-1} \mathbf{r}_{r,n}\\
&= (4, \mathbf{0})^{\top} \mathbf{r}_{r,n} + o_P(1)\\
&= \frac{1}{\sqrt{N_1}} \sum_{i \in \M_r} W_i (2 Z_i - 1) \epsilon_i + \frac{1}{\sqrt{N_1}} \sum_{i \in \M_r} W_i(2 Z_i - 1) h_{n,S^c}^{\top} g_{S^c}(X_i) + o_P(1)\\
&= \frac{1}{\sqrt{N_1}} \sum_{i \in \M_r} W_i(2 Z_i - 1) \epsilon_i + \underbrace{\sqrt{N_1} h_{n,S^c}^{\top}}_{= O_P(1)} \underbrace{\frac{1}{N_1} \sum_{Z_i = 1} \{ g_{S^c(X_i)} - g_{S^c}(X_{m_r(i)}) \}}_{= o_P(1) \text{ by \Cref{corollary:empirical_Lp_convergence}}} + o_P(1)\\
&= \frac{1}{\sqrt{N_1}} \sum_{i \in \M_r} W_i(2Z_i - 1) \epsilon_i + o_P(1).
\end{align*}
Finally, we prove asymptotic normality.  Applying the Berry-Esseen theorem conditional on $\F_n = \sigma( \{ (X_i, Z_i) \}_{i \leq n})$ gives:
\begin{align*}
\sup_{t \in \R} \left| \P \left( \frac{1}{\tilde s_n \sqrt{N_1}} \sum_{i \in \M_r} W_i (2 Z_i - 1) \epsilon_i \leq t \, \bigg| \, \F_n \right) - \Phi(t) \right| &\leq \frac{C}{\sqrt{N_1}} \frac{1}{\tilde s_n^3} \frac{1}{N_1} \sum_{i \in \M_r} W_i^3 \E(| \epsilon_i|^3 \mid \F_n)\\
&\leq \frac{C}{\sqrt{N_1}} \frac{1}{\tilde s_n^3} \frac{1}{N_1} \sum_{i = 1}^n W_i^3 \E(| \epsilon_i|^3 \mid X_i, Z_i).
\end{align*}
Since $\tilde s_n^2 \geq \sum_{Z_i = 1} \sigma^2(X_i) / N_1 \xrightarrow{P} \E \{ \sigma^2(X) \mid Z = 1 \} > 0$, $1/ \tilde s_n^3 = O_P(1)$.  Since $W_i$ has infinitely many moments (\Cref{lemma:Kn_moments}) and $\E(| \epsilon_i|^3 \mid X_i, Z_i)$ has more than one moment (\Cref{assumption:primitives}.\ref{item:four_moments}), their product has at least one moment so $\sum_{i = 1}^n W_i^3 \E(| \epsilon_i|^3 \mid X_i, Z_i) / n = O_P(1)$ by Markov's inequality.  Thus, the upper bound in the preceding display tends to zero in probability.  Taking expectations on both sides with the bounded convergence theorem gives $\sum_{i \in \M_r} W_i (2 Z_i - 1) \epsilon_i / \tilde s_n \sqrt{N_1} \rightsquigarrow N(0, 1)$.  By Slutsky's theorem, this proves $\sqrt{N_1} \tilde \tau_n^{\reg} / \tilde s_n \rightsquigarrow N(0, 1)$ as well.
\end{proof}

% ========================================================
\subsection{Asymptotics of robust standard errors} 

\subsubsection{Unweighted regressions} \label{section:paired_robust_standard_errors}

Throughout this section, we assume the following:  $P \in H_0$ satisfies the assumptions of \Cref{theorem:paired_local_robustness}, $g : \R^d \rightarrow \R^k$ is a bounded function, and $\{ h_n \} \subset \R^k$ is a sequence with $\| h_n \| \leq C n^{-1/2}$.  We also fix a (possibly empty) set $S \subseteq [k]$ and use the notation $\phi_S, \psi_S$ from Equations (\ref{eq:phi_S_definition}) and (\ref{eq:psi_S_definition}).  We also define pseudo-residuals $\tilde \epsilon_i$ as follows:
\begin{align}
\tilde \epsilon_i = Y_i + h_n^{\top} g(X_i) - \tilde \tau_n^{\reg} Z_i - \tilde \theta^{\top} \phi_S(X_i) \label{eq:epsilon_tilde_definition}
\end{align}
where $(\tilde \tau_n^{\reg}, \tilde \theta)$ are defined in \Cref{eq:paired_misspecified_regression}.  For all results apart from \Cref{theorem:model_dependence}, we will only need these quantities when $S = \emptyset$.

\begin{lemma} \label{lemma:paired_regression_consistency}
Let $(\tilde \tau_n^{\reg}, \tilde \theta)$ be as in \Cref{eq:paired_misspecified_regression}. 
\begin{itemize}[itemsep=-1ex]
    \item If $S \neq \emptyset$ and $\Var \{ (X, g(X)) \mid Z = 1 \} \succ \mathbf{0}$, then $\tilde \tau_n^{\reg} = o_P(1)$ and $\| \tilde \theta - (\gamma, \beta, \mathbf{0}) \| = o_P(1)$.
    \item If $S = \emptyset$, then $\tilde \tau_n^{\reg} = o_P(1)$ and $\| \tilde \theta - (\gamma, \beta) \| = o_P(1)$.
\end{itemize}
\end{lemma}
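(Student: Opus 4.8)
The plan is to solve the normal equations for the regression (\ref{eq:paired_misspecified_regression}) in closed form and then use the exact linearity of the baseline model to read off the limit of the coefficient vector, rather than computing any probability limit of the cross-moment vector directly. Write $\psi_S(X_i, Z_i) = (Z_i - 0.5, \phi_S(X_i, Z_i))$ for the stacked regressor, so that the minimizer is $(\tilde\tau_n^{\reg}, \tilde\theta) = \mathbf{B}^{-1}\mathbf{c}$ on the event that $\mathbf{B}$ is invertible, where $\mathbf{B} = \frac{1}{2N_1}\sum_{i\in\M}\psi_S\psi_S^{\top}$ and $\mathbf{c} = \frac{1}{2N_1}\sum_{i\in\M}\psi_S\{Y_i + h_n^{\top}g(X_i)\}$. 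First I would invoke \Cref{lemma:B_invertibility} with $\phi(x) = (x, g_S(x))$ (and $\phi(x) = x$ when $S = \emptyset$) to conclude $\mathbf{B}$ is invertible with probability tending to one: its moment hypothesis holds since $X$ has more than four moments and $g$ is bounded, and its nonsingularity hypothesis $\Var\{(X, g_S(X)) \mid Z = 1\} \succ \mathbf{0}$ follows from the assumed $\Var\{(X, g(X)) \mid Z = 1\} \succ \mathbf{0}$ (resp. the nonsingularity condition in \Cref{assumption:primitives} when $S = \emptyset$) by restricting to the relevant principal submatrix. Set $\theta^{*} = (\gamma, \beta, \mathbf{0})$ when $S \neq \emptyset$ and $\theta^{*} = (\gamma, \beta)$ when $S = \emptyset$, and let $\zeta^{*} = (0, \theta^{*})$ be the coefficient vector that places a zero in the treatment slot.

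The crux of the argument is a single algebraic identity. Under the linear outcome model $Y_i = \gamma + \beta^{\top}X_i + \epsilon_i = \theta^{*\top}\phi_S(X_i, Z_i) + \epsilon_i$, so that
\[
\mathbf{c} = \frac{1}{2N_1}\sum_{i\in\M}\psi_S\,\theta^{*\top}\phi_S + \frac{1}{2N_1}\sum_{i\in\M}\psi_S\{\epsilon_i + h_n^{\top}g(X_i)\}.
\]
Because $\zeta^{*}$ has a zero in its treatment coordinate, $\psi_S^{\top}\zeta^{*} = \phi_S^{\top}\theta^{*}$ holds for every $i$, and hence the first sum collapses to $\frac{1}{2N_1}\sum_{i\in\M}\psi_S\psi_S^{\top}\zeta^{*} = \mathbf{B}\zeta^{*}$. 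Left-multiplying by $\mathbf{B}^{-1}$ therefore reproduces $\zeta^{*}$ exactly, with no residual from the signal part; this is what lets me avoid computing and inverting a joint limit for $\mathbf{c}$ by hand.

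It then remains only to control the noise-plus-misspecification contribution. The second sum equals $N_1^{-1/2}\mathbf{r}$, where $\mathbf{r}$ is precisely the vector shown to be $O_P(1)$ in \Cref{lemma:ancova_covariance_properties} (applied with the same $\phi$), which requires only the moment bound and not nonsingularity. Thus, on the invertibility event, $(\tilde\tau_n^{\reg}, \tilde\theta) = \zeta^{*} + N_1^{-1/2}\mathbf{B}^{-1}\mathbf{r}$. Since \Cref{lemma:paired_design_matrix_properties} shows $\mathbf{B}$ converges to a block-diagonal invertible limit under $P\{e(X) < 0.5\} = 1$, we have $\mathbf{B}^{-1} = O_P(1)$, and with $\mathbf{r} = O_P(1)$ the remainder is $O_P(N_1^{-1/2}) = o_P(1)$. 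Reading off the coordinates gives $\tilde\tau_n^{\reg} = o_P(1)$ and $\|\tilde\theta - \theta^{*}\| = o_P(1)$ in both cases. I do not expect a serious analytic obstacle: the only steps needing care are spotting the identity that the signal part of $\mathbf{c}$ equals $\mathbf{B}\zeta^{*}$, and verifying that the moment and positive-definiteness hypotheses of the two design-matrix lemmas genuinely transfer to the augmented regressor $\phi(x) = (x, g_S(x))$ — in particular that passing to the subvector indexed by $S$ preserves positive-definiteness.
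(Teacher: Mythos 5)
Your proposal is correct and follows essentially the same route as the paper: both solve the normal equations, write the estimator minus a centering vector as $\mathbf{B}^{-1}\mathbf{r}/\sqrt{N_1}$, and conclude via $\mathbf{B}^{-1} = O_P(1)$ (Lemma~\ref{lemma:paired_design_matrix_properties}) and $\mathbf{r} = O_P(1)$ (Lemma~\ref{lemma:ancova_covariance_properties}). The only cosmetic difference is that the paper centers at $(0,\gamma,\beta,h_{n,S})$, keeping only $h_{n,S^c}^{\top}g_{S^c}$ in $\mathbf{r}$ and then noting $h_{n,S}\to 0$, whereas you center at $(0,\gamma,\beta,\mathbf{0})$ and absorb all of $h_n^{\top}g$ into $\mathbf{r}$ --- both are covered by the cited lemmas.
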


\begin{proof}
Let $\mathbf{B} = \sum_{i \in \M} \psi_S(X_i, Z_i) \psi_S(X_i, Z_i)^{\top} / 2 N_1$ and $\mathbf{r} = \sum_{i \in \M} \{ \epsilon_i + h_{n,S^c}^{\top} g_{S^c}(X_i) \} \psi_S(X_i)$.

Suppose first that $S \neq \emptyset$.  Then by \Cref{lemma:paired_design_matrix_properties} applied with $\phi(x) = (x, g_S(x))$, the matrix $\mathbf{B}$ is invertible with probability tending to one.  On this event, standard least-squares theory gives the identity $(\tilde \tau_n^{\reg}, \tilde \theta) - (0, \gamma, \beta, h_{n,S}) = \mathbf{B}^{-1} \mathbf{r} / \sqrt{N_1}$.  Since $\mathbf{B}^{-1} = O_P(1)$ by \Cref{lemma:paired_design_matrix_properties} and $\mathbf{r} = O_P(1)$ by \Cref{lemma:ancova_covariance_properties}, we may conclude that $\| ( \tilde \tau_n^{\reg}, \tilde \theta) - (0, \gamma, \beta, h_{n,S}) \| = O_P(1/\sqrt{N_1}) = o_P(1)$.  Since $h_{n,S} \rightarrow 0$, this also implies $\| (\tilde \tau_n^{\reg}, \tilde \theta) - (0, \gamma, \beta, \mathbf{0}) \| = o_P(1)$.

With suitable changes of notation, the same proof goes through when $S = \emptyset$.
\end{proof}

\begin{lemma} \label{lemma:paired_sandwich_filling}
Let $\mathbf{A} = \sum_{i \in \M} \tilde \epsilon_i^2 \psi_S(X_i, Z_i) \psi_S(X_i, Z_i)^{\top}$.  Then $\mathbf{A}_{11} = 0.25 \E \{ \sigma^2(X) \mid Z = 1 \} + o_P(1)$ and $\mathbf{A} = O_P(1)$.
\end{lemma}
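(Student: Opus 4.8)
The plan is to reduce the pseudo-residuals $\tilde\epsilon_i$ to the true residuals $\epsilon_i = Y_i - \mu(X_i)$ and then apply a conditional law of large numbers. The starting point is \Cref{lemma:paired_regression_consistency}, which gives $\tilde\tau_n^{\reg} = o_P(1)$ together with $\tilde\theta - (\gamma, \beta, \mathbf{0}) = o_P(1)$ when $S \neq \emptyset$ (and $\tilde\theta - (\gamma, \beta) = o_P(1)$ when $S = \emptyset$). Since the baseline model is linear and $P \in H_0$, we have $\mu(x) = \gamma + \beta^\top x$, so $Y_i = \gamma + \beta^\top X_i + \epsilon_i$. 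Substituting this into the definition (\ref{eq:epsilon_tilde_definition}) and writing $\tilde\theta = (\tilde\theta_1, \tilde\theta_X, \tilde\theta_g)$ for the blocks conforming to $\phi_S = (1, x, g_S(x))$ yields the decomposition $\tilde\epsilon_i = \epsilon_i + r_i$ with $r_i = (\gamma - \tilde\theta_1) + (\beta - \tilde\theta_X)^\top X_i + h_n^\top g(X_i) - \tilde\tau_n^{\reg} Z_i - \tilde\theta_g^\top g_S(X_i)$. The key structural point is that every coefficient appearing in $r_i$ is $o_P(1)$, while it multiplies a quantity ($1$, $X_i$, $g(X_i)$, $g_S(X_i)$, or $Z_i$) with well-controlled moments.

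For the claim about $\mathbf{A}_{11}$, I would use that the first coordinate of $\psi_S$ is $Z_i - 0.5$, so $(\psi_S \psi_S^\top)_{11} = (Z_i - 0.5)^2 = 0.25$ and hence $\mathbf{A}_{11} = \tfrac{0.25}{2N_1}\sum_{i \in \M}\tilde\epsilon_i^2$. Expanding $\tilde\epsilon_i^2 = \epsilon_i^2 + 2\epsilon_i r_i + r_i^2$ splits this into three averages. The main average $\tfrac{1}{2N_1}\sum_{i \in \M}\epsilon_i^2$ has conditional mean $\tfrac{1}{2N_1}\sum_{i\in\M}\sigma^2(X_i) = \tfrac{1}{2}s_n^2 \to \E\{\sigma^2(X) \mid Z = 1\}$, where I use $s^2 = 2\E\{\sigma^2(X) \mid Z = 1\}$ and $s_n \to s$ from \Cref{lemma:sigma_star_limit}; its conditional variance is $O_P(1/N_1)$ by the four-moments assumption, exactly as in the proof of \Cref{lemma:sigma_star_limit}, so \Cref{lemma:conditional_wlln} upgrades this to convergence in probability. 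The cross and quadratic terms are $o_P(1)$: after factoring out the $o_P(1)$ coefficients of $r_i$, each reduces to an average such as $\tfrac{1}{2N_1}\sum_{i\in\M}\epsilon_i X_i$ or $\tfrac{1}{2N_1}\sum_{i\in\M}\|X_i\|^2$ that is $O_P(1)$ (using $n/N_1 = O_P(1)$ from overlap, boundedness of $g$, and the moment bounds on $X$ and $\epsilon$), hence the products vanish. This gives $\mathbf{A}_{11} = 0.25\,\E\{\sigma^2(X) \mid Z = 1\} + o_P(1)$.

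For $\mathbf{A} = O_P(1)$, I would control $\mathbf{A}$ entrywise. Applying Cauchy--Schwarz to the sum defining each off-diagonal entry reduces matters to bounding the diagonal entries $\tfrac{1}{2N_1}\sum_{i\in\M}\tilde\epsilon_i^2\,\psi_{S,a}^2$ for each coordinate $\psi_{S,a} \in \{Z_i - 0.5,\, 1,\, (X_i)_j,\, g_S(X_i)_\ell\}$. Using $\tilde\epsilon_i^2 \le 2\epsilon_i^2 + 2r_i^2$, the $\epsilon_i^2$-part is $O_P(1)$ by Markov's inequality once I verify $\E\{\sigma^2(X)\psi_{S,a}^2\} < \infty$; for the $(X_i)_j$ coordinate this follows from $\E(\epsilon^2 X_j^2) \le (\E\epsilon^4\,\E X^4)^{1/2} < \infty$ under the more-than-four-moments assumption, and for the $g_S$ coordinate from the boundedness of $g$. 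The $r_i^2$-part is again $o_P(1)$ thanks to the vanishing coefficients. The $S = \emptyset$ case is identical after deleting the $g_S$ block.

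The only delicate points, and thus the main obstacle, are (i) confirming that the averages arising from the cross and quadratic terms are genuinely $O_P(1)$ rather than growing with $n$ — this is precisely where $n/N_1 = O_P(1)$ and the moment assumptions on $X$ and $Y$ are indispensable — and (ii) keeping the coordinate bookkeeping consistent across the blocks of $\psi_S$ and across the two cases for $S$. No new probabilistic machinery is required: the conditional WLLN of \Cref{lemma:conditional_wlln} and the fourth-moment control already deployed in \Cref{lemma:sigma_star_limit} supply everything needed.
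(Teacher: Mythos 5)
Your proposal is correct and follows essentially the same route as the paper's proof: both decompose the pseudo-residuals around the residuals at the pseudo-true parameter, invoke \Cref{lemma:paired_regression_consistency} to kill the cross and quadratic remainder terms, and handle the leading term $\tfrac{1}{2N_1}\sum_{i\in\M}\epsilon_i^2$ via the conditional WLLN and the convergence $s_n\to s$. The only cosmetic difference is that the paper organizes the expansion as $u^2=v^2+(u+v)(u-v)$ with $v=\epsilon_i+h_n^{\top}g(X_i)$ and treats the $h_n^{\top}g(X_i)$ contribution separately, whereas you fold it directly into the $o_P(1)$-coefficient remainder $r_i$; both are equivalent.
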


\begin{proof}
Let $\tilde \vartheta = (\tilde \tau_n^{\reg}, \tilde \theta)$.  Let $\vartheta = (0, \gamma, \beta, \mathbf{0})$ if $S \neq \emptyset$ and otherwise set $\vartheta = (0, \gamma, \beta)$.  Then we may write the following:
\begin{align*}
\mathbf{A}_{11} &= \frac{1}{8 N_1} \sum_{i \in \M} \tilde \epsilon_i^2\\
&= \frac{1}{8 N_1} \sum_{i \in \M} \{ Y_i + h_n^{\top} g(X_i) - \tilde \vartheta^{\top} \psi_S(X_i, Z_i) \}^2\\
&= \underbrace{\frac{1}{8 N_1} \sum_{i \in \M} \{ Y_i + h_n^{\top} g(X_i) - \vartheta^{\top} \psi_S(X_i, Z_i) \}^2}_a + \underbrace{\frac{1}{8 N_1} \sum_{i \in \M} \{ 2 Y_i + 2 h_n^{\top} g(X_i) - (\vartheta + \tilde \vartheta)^{\top} \psi_S(X_i) \} (\vartheta - \tilde \vartheta)^{\top} \psi_S(X_i, Z_i)}_b
\end{align*}
Term $a$ converges to $0.25 \E \{ \sigma^2(X) \mid Z = 1 \}$ by the following calculation:
\begin{align*}
a &= \frac{1}{8 N_1} \sum_{i \in \mathcal{M}} \{ \epsilon_i + h_n^{\top} g(X_i) \}^2\\
&= \frac{1}{8 N_1} \sum_{i \in \M} \epsilon_i^2 + \underbrace{\frac{1}{8 N_1} \sum_{i \in \M} 2 \epsilon_i h_n^{\top} g(X_i)}_{= o_P(1) \text{ by \Cref{lemma:conditional_wlln} conditional on } \{ (X_i, Z_i) \}_{i \leq n}} + \underbrace{\frac{1}{8 N_1} \sum_{i \in \M} \{ h_n^{\top} g(X_i) \}^2}_{\leq \| h_n \|^2 \sum_{i = 1}^n \| g(X_i) \|^2 / 2 N_1 = O_P(1/n)}\\
&= \frac{1}{8 N_1} \sum_{i \in \M} \epsilon_i^2 + o_P(1)\\
&= \frac{1}{8 N_1} \sum_{i \in \M} \sigma^2(X_i) + \underbrace{\frac{1}{8 N_1} \sum_{i \in \M} \{ \epsilon_i^2 - \sigma^2(X_i) \}}_{= o_P(1) \text{ by \Cref{lemma:conditional_wlln} conditional on $\{ (X_i, Z_i) \}_{i \leq n}$}} + o_P(1)\\
&= \frac{1}{8 N_1} \sum_{Z_i = 1} 2 \sigma^2(X_i) + \underbrace{\frac{1}{8 N_1} \sum_{Z_i = 1} \{ \sigma^2(X_{m(i)}) - \sigma^2(X_i) \}}_{= o_P(1) \text{ by \Cref{corollary:empirical_Lp_convergence}}} + o_P(1)\\
&\xrightarrow{P} 0.25 \E \{ \sigma^2(X) \mid Z = 1 \}.
\end{align*}
Meanwhile, term $b$ tends to zero by the Cauchy-Schwarz inequality and the fact that $\| \vartheta - \tilde \vartheta \| \xrightarrow{P} 0$ (\Cref{lemma:paired_regression_consistency}).  Thus, $\mathbf{A}_{11} = 0.25 \E \{ \sigma^2(X) \mid Z = 1 \} + o_P(1)$.

The proof that $\mathbf{A} = O_P(1)$ follows by applying similar arguments to each of the entries of $\mathbf{A}$.  The details are omitted.
\end{proof}

\begin{lemma} \label{lemma:paired_hc_standard_error}
Define the following quantity:
\begin{align*}
\tilde \sigma^2_{\hc} &= \left( \sum_{i \in \M} \psi_S(X_i, Z_i) \psi_S(X_i, Z_i)^{\top} \right)^{-1} \left( \sum_{i \in \M} \tilde \epsilon_i^2 \psi_S(X_i, Z_i) \psi_S(X_i, Z_i)^{\top} \right) \left( \sum_{i \in \M} \psi_S(X_i, Z_i) \psi_S(X_i, Z_i)^{\top} \right)^{-1}.
\end{align*}
Then $N_1 \tilde \sigma_{\hc}^2 \xrightarrow{P} 2 \E \{ \sigma^2(X) \mid Z = 1 \}$.
\end{lemma}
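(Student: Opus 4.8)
The plan is to reduce everything to the two facts already established about the normalized design matrix and the robust ``filling'' matrix, and then to read off the limit from the block-diagonal structure of the limiting design matrix. Write $\mathbf{B} = \frac{1}{2N_1}\sum_{i\in\M}\psi_S(X_i,Z_i)\psi_S(X_i,Z_i)^{\top}$ for the normalized design matrix and $\mathbf{A} = \frac{1}{2N_1}\sum_{i\in\M}\tilde\epsilon_i^2\psi_S(X_i,Z_i)\psi_S(X_i,Z_i)^{\top}$ for the normalized filling, matching the conventions of Lemmas \ref{lemma:paired_design_matrix_properties} and \ref{lemma:paired_sandwich_filling}. Then $\sum_{i\in\M}\psi_S\psi_S^{\top} = 2N_1\mathbf{B}$ and $\sum_{i\in\M}\tilde\epsilon_i^2\psi_S\psi_S^{\top} = 2N_1\mathbf{A}$, so the sandwich collapses to
\[
\tilde\sigma^2_{\hc} = (2N_1\mathbf{B})^{-1}(2N_1\mathbf{A})(2N_1\mathbf{B})^{-1} = \frac{1}{2N_1}\mathbf{B}^{-1}\mathbf{A}\mathbf{B}^{-1}.
\]
The robust standard error for $\tilde\tau_n^{\reg}$ is the $(1,1)$ entry of this matrix (equivalently $e_1^{\top}(\cdots)e_1$), since $\tilde\tau_n^{\reg}$ is the coefficient on the first regressor $Z-0.5$, so the target becomes $N_1 e_1^{\top}\tilde\sigma^2_{\hc}e_1 = \tfrac12\, e_1^{\top}\mathbf{B}^{-1}\mathbf{A}\mathbf{B}^{-1}e_1$.

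First I would invoke \Cref{lemma:paired_design_matrix_properties}, which (using $P\{e(X)<0.5\}=1$ from the standing assumptions of \Cref{theorem:paired_local_robustness}) gives $\mathbf{B}\xrightarrow{P}\mathbf{B}_\infty$, where $\mathbf{B}_\infty$ is block-diagonal with upper-left entry $0.25$ and an invertible lower-right block. Since $\mathbf{B}$ is invertible with probability tending to one, the continuous mapping theorem yields $\mathbf{B}^{-1}\xrightarrow{P}\mathbf{B}_\infty^{-1}$; and because $\mathbf{B}_\infty^{-1}$ inherits the block-diagonal structure with upper-left entry $4$, this gives $e_1^{\top}\mathbf{B}^{-1}\xrightarrow{P}(4,\mathbf{0})$. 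Next I would quote \Cref{lemma:paired_sandwich_filling}, which supplies $\mathbf{A}_{11} = 0.25\,\E\{\sigma^2(X)\mid Z=1\}+o_P(1)$ together with the stochastic boundedness $\mathbf{A}=O_P(1)$.

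To combine these, I would expand the quadratic form as $e_1^{\top}\mathbf{B}^{-1}\mathbf{A}\mathbf{B}^{-1}e_1 = \sum_{j,k}(\mathbf{B}^{-1}e_1)_j\,(\mathbf{A})_{jk}\,(\mathbf{B}^{-1}e_1)_k$ and observe that every term with $j\neq 1$ or $k\neq 1$ carries a factor $(\mathbf{B}^{-1}e_1)_j\xrightarrow{P}0$ or $(\mathbf{B}^{-1}e_1)_k\xrightarrow{P}0$ multiplied by a stochastically bounded entry of $\mathbf{A}$, hence is $o_P(1)$. Only the $j=k=1$ term survives, so by Slutsky's theorem
\[
N_1\, e_1^{\top}\tilde\sigma^2_{\hc}e_1 = \tfrac12\,(\mathbf{B}^{-1}e_1)_1^2\,\mathbf{A}_{11} + o_P(1) \xrightarrow{P} \tfrac12\cdot 4^2\cdot 0.25\,\E\{\sigma^2(X)\mid Z=1\} = 2\,\E\{\sigma^2(X)\mid Z=1\}.
\]

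The only genuine subtlety — and the step I would be most careful about — is the bookkeeping of normalizations (the $2N_1$ factors and the reduction to the $(1,1)$ entry), together with the argument that the off-diagonal and lower-block behavior of $\mathbf{A}$ is irrelevant in the limit. This is precisely why $\mathbf{A}=O_P(1)$ suffices and full convergence of $\mathbf{A}$ is not needed: those entries are annihilated by the vanishing coordinates of $\mathbf{B}^{-1}e_1$. Everything else is a direct application of the two cited lemmas and the continuous mapping theorem, and the resulting limit $2\,\E\{\sigma^2(X)\mid Z=1\}=s^2$ correctly matches the asymptotic variance of $\sqrt{N_1}\,\tilde\tau_n^{\reg}$ from \Cref{lemma:paired_expansion}, as required for the $t$-statistic argument in \Cref{theorem:paired_local_robustness}.
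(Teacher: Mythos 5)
Your proposal is correct and follows essentially the same route as the paper's proof: write the sandwich as $N_1\tilde\sigma^2_{\hc} = 0.5\,e_1^{\top}\mathbf{B}^{-1}\mathbf{A}\mathbf{B}^{-1}e_1$, apply \Cref{lemma:paired_design_matrix_properties} to get $\mathbf{B}^{-1}e_1 = (4,\mathbf{0}) + o_P(1)$, and apply \Cref{lemma:paired_sandwich_filling} for $\mathbf{A} = O_P(1)$ and $\mathbf{A}_{11} \to 0.25\,\E\{\sigma^2(X)\mid Z=1\}$, so that $N_1\tilde\sigma^2_{\hc} = 8\mathbf{A}_{11} + o_P(1) \to 2\,\E\{\sigma^2(X)\mid Z=1\}$. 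Your explicit expansion of the quadratic form to show why only the $(1,1)$ entry of $\mathbf{A}$ needs a limit is exactly the bookkeeping the paper leaves implicit.
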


\begin{proof}
Let $\mathbf{B} = \sum_{i \in \M} \psi_S(X_i, Z_i) \psi_S(X_i, Z_i)^{\top} / 2 N_1$ and $\mathbf{A} = \sum_{i \in \M} \tilde \epsilon_i^2 \psi_S(X_i, Z_i) \psi_S(X_i, Z_i)^{\top} / 2 N_1$.  Then we have the identity $N_1 \tilde \sigma^2_{\hc} = 0.5 e_1^{\top} \mathbf{B}^{-1} \mathbf{A} \mathbf{B}^{-1} e_1$.  \Cref{lemma:paired_design_matrix_properties} shows that $\mathbf{B}^{-1} e_1 = (4, \mathbf{0}) + o_P(1)$. \Cref{lemma:paired_sandwich_filling} shows that $\mathbf{A} = O_P(1)$.  Thus, $N_1 \tilde \sigma^2_{\hc} = 8 \mathbf{A}_{11} + o_P(1)$.  Finally, \Cref{lemma:paired_sandwich_filling} shows that $8 \mathbf{A}_{11} = 2 \E \{ \sigma^2(X) \mid Z = 1 \} + o_P(1)$.
\end{proof}

\subsubsection{Weighted regressions}

Throughout this section, we make similar assumptions as in \Cref{section:paired_robust_standard_errors}.  We assume $P \in H_0$ satisfies the assumptions of \Cref{theorem:replaced_robustness}, $g : \R^d \rightarrow \R^k$ is a bounded function, and $\{ h_n \} \subset \R^k$ is a sequence with $\| h_n \| \leq C n^{-1/2}$.  We also fix a (possibly empty) set $S \subseteq [k]$ and define $\phi_S, \psi_S$ as in Equations (\ref{eq:phi_S_definition}), (\ref{eq:psi_S_definition}) and (\ref{eq:epsilon_tilde_definition}), respectively.  Finally, we define $\tilde \epsilon_i$ as in \Cref{eq:epsilon_tilde_definition}, except the regression coefficients $(\tilde \tau_n^{\reg}, \tilde \theta)$ in that display now come from the weighted regression \Cref{eq:replaced_misspecified_regression} instead of the ordinary regression from \Cref{eq:paired_misspecified_regression}.

\begin{lemma} \label{lemma:replaced_regression_consistency}
Let $(\tilde \tau_n^{\reg}, \tilde \theta)$ be as in \Cref{eq:replaced_misspecified_regression}.
\begin{itemize}[itemsep=-1ex]
    \item If $S \neq \emptyset$ and $\Var \{ (X, g(X)) \mid Z = 1 \} \succ \mathbf{0}$, then $\tilde \tau_n^{\reg} = o_P(1)$ and $\| \tilde \theta - (\gamma, \beta, \mathbf{0}) \| = o_P(1)$.
    \item If $S = \emptyset$, then $\tilde \tau_n^{\reg} = o_P(1)$ and $\| \tilde \theta - (\gamma, \beta) \| = o_P(1)$.
\end{itemize}
\end{lemma}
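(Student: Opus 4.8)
The plan is to mirror the proof of \Cref{lemma:paired_regression_consistency} almost verbatim, substituting the weighted-regression analogues of the design-matrix and cross-product lemmas. First I would set $\mathbf{B}_r = \sum_{i \in \M_r} W_i \psi_S(X_i, Z_i) \psi_S(X_i, Z_i)^{\top} / 2 N_1$ and $\mathbf{r}_{r,n} = \sum_{i \in \M_r} W_i \{ \epsilon_i + h_{n,S^c}^{\top} g_{S^c}(X_i) \} \psi_S(X_i, Z_i) / 2 \sqrt{N_1}$, which are exactly the quantities appearing in \Cref{lemma:replaced_design_matrix} and \Cref{lemma:replaced_covariance_properties}.

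The key algebraic step is to identify the correct ``plug-in'' coefficient vector. Under the baseline linear model $\mu(X_i) = \gamma + \beta^{\top} X_i$, one may write $Y_i + h_n^{\top} g(X_i) = (0, \gamma, \beta, h_{n,S})^{\top} \psi_S(X_i, Z_i) + \{ \epsilon_i + h_{n,S^c}^{\top} g_{S^c}(X_i) \}$, since dotting the coefficient $(0, \gamma, \beta, h_{n,S})$ against $\psi_S(X_i, Z_i) = (Z_i - 0.5, 1, X_i, g_S(X_i))$ recovers $\gamma + \beta^{\top} X_i + h_{n,S}^{\top} g_S(X_i)$, and the leftover nonlinearity is precisely $h_{n,S^c}^{\top} g_{S^c}(X_i)$. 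Substituting this decomposition into the weighted normal equations and dividing by $2 N_1$ yields, on the event that $\mathbf{B}_r$ is invertible, the identity $(\tilde \tau_n^{\reg}, \tilde \theta) - (0, \gamma, \beta, h_{n,S}) = \mathbf{B}_r^{-1} \mathbf{r}_{r,n} / \sqrt{N_1}$.

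From here I would invoke the two supporting lemmas. \Cref{lemma:replaced_design_matrix} (applied with $\phi(x) = (x, g_S(x))$) shows that $\mathbf{B}_r$ converges in probability to an invertible matrix, so $\mathbf{B}_r$ is invertible with probability tending to one and $\mathbf{B}_r^{-1} = O_P(1)$; \Cref{lemma:replaced_covariance_properties} shows $\mathbf{r}_{r,n} = O_P(1)$. Hence $\| (\tilde \tau_n^{\reg}, \tilde \theta) - (0, \gamma, \beta, h_{n,S}) \| = O_P(N_1^{-1/2}) = o_P(1)$, and since $\| h_{n,S} \| \leq \| h_n \| \to 0$ this gives $\tilde \tau_n^{\reg} = o_P(1)$ and $\| \tilde \theta - (\gamma, \beta, \mathbf{0}) \| = o_P(1)$. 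The $S = \emptyset$ case is identical after replacing $\phi_S, \psi_S$ by $\phi_\emptyset, \psi_\emptyset$, taking the target coefficient to be $(0, \gamma, \beta)$, and noting that $h_{n,S^c} = h_n$.

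I do not expect a genuine analytic obstacle here: all the probabilistic content — invertibility of the weighted design matrix and the $O_P(1)$ bound on the weighted cross-product, both of which rest on the uniform moment control for the multiplicity weights $W_i$ from \Cref{lemma:Kn_moments} — has already been packaged into \Cref{lemma:replaced_design_matrix} and \Cref{lemma:replaced_covariance_properties}. The only point requiring care is verifying that the plug-in coefficient $(0, \gamma, \beta, h_{n,S})$ produces exactly the residual $\epsilon_i + h_{n,S^c}^{\top} g_{S^c}(X_i)$ featured in the definition of $\mathbf{r}_{r,n}$, so that the normal-equation manipulation aligns cleanly with the supporting lemmas as stated.
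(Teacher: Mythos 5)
Your proposal is correct and is essentially the paper's own argument: the paper proves this lemma by stating that the proof of \Cref{lemma:paired_regression_consistency} goes through verbatim with \Cref{lemma:replaced_design_matrix} and \Cref{lemma:replaced_covariance_properties} substituted for their unweighted counterparts, which is exactly the route you take. Your explicit verification of the plug-in decomposition $Y_i + h_n^{\top} g(X_i) = (0, \gamma, \beta, h_{n,S})^{\top} \psi_S(X_i, Z_i) + \epsilon_i + h_{n,S^c}^{\top} g_{S^c}(X_i)$ is a detail the paper leaves implicit, and it checks out.
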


\begin{proof}
The proof is identical to that of \Cref{lemma:paired_regression_consistency}, except \Cref{lemma:replaced_design_matrix} and \Cref{lemma:replaced_covariance_properties} are used  in place of \Cref{lemma:paired_design_matrix_properties} and \Cref{lemma:ancova_covariance_properties}, respectively.
\end{proof}

\begin{lemma} \label{lemma:replaced_sandwich_filling}
Let $\mathbf{A}_r = \sum_{i \in \M_r} W_i^2 \tilde \epsilon_i^2 \psi_S(X_i, Z_i) \psi_S(X_i, Z_i)^{\top} / 2 N_1$.  Then $\mathbf{A}_{r,11} = \sum_{i \in \M_r} W_i^2 \sigma^2(X_i) / 8 N_1 + o_P(1)$ and $\mathbf{A}_r = O_P(1)$.
\end{lemma}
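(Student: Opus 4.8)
The plan is to mirror the proof of \Cref{lemma:paired_sandwich_filling}, carrying the multiplicity weights $W_i^2$ through each step and invoking the weighted analogues of the supporting lemmas. The one structural difference is that the target is deliberately left in the unreduced form $\sum_{i \in \M_r} W_i^2 \sigma^2(X_i) / 8 N_1$ rather than a deterministic limit: the weights do not average against $\sigma^2$ to a clean constant, and this sum is precisely the quantity $\tilde s_n^2 / 8$ appearing in \Cref{lemma:replaced_expansion}, so no further simplification is wanted. I would begin with the algebraic observation that the first coordinate of $\psi_S$ equals $Z_i - 0.5 \in \{ -0.5, 0.5 \}$, whose square is identically $0.25$. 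Hence $\mathbf{A}_{r,11} = \frac{1}{8 N_1} \sum_{i \in \M_r} W_i^2 \tilde \epsilon_i^2$, and since $W_i = 0$ for every unmatched control, this coincides with $\frac{1}{8 N_1} \sum_{i = 1}^n W_i^2 \tilde \epsilon_i^2$, a form better suited to the law of large numbers.

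Next I would decompose the pseudo-residual. Writing $\tilde \vartheta = (\tilde \tau_n^{\reg}, \tilde \theta)$ and $\vartheta = (0, \gamma, \beta, \mathbf{0})$ (or $(0, \gamma, \beta)$ when $S = \emptyset$), the linear null model gives $Y_i + h_n^\top g(X_i) - \vartheta^\top \psi_S(X_i, Z_i) = \epsilon_i + h_n^\top g(X_i)$, so that $\tilde \epsilon_i = \{ \epsilon_i + h_n^\top g(X_i) \} - (\tilde \vartheta - \vartheta)^\top \psi_S(X_i, Z_i)$. Squaring and summing splits $\mathbf{A}_{r,11}$ into a main term $\frac{1}{8 N_1} \sum W_i^2 \{ \epsilon_i + h_n^\top g(X_i) \}^2$ plus cross and quadratic terms carrying a factor $\tilde \vartheta - \vartheta = o_P(1)$ by \Cref{lemma:replaced_regression_consistency}. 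These remainder terms I would dispatch via Cauchy--Schwarz, bounding them by $\| \tilde \vartheta - \vartheta \|$ (resp. $\| \tilde \vartheta - \vartheta \|^2$) times weighted averages such as $\frac{1}{N_1} \sum W_i^2 | \epsilon_i + h_n^\top g(X_i) | \, \| \psi_S(X_i, Z_i) \|$ and $\frac{1}{N_1} \sum W_i^2 \| \psi_S(X_i, Z_i) \|^2$, each of which is $O_P(1)$ by Markov's inequality.

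For the main term, expand $\{ \epsilon_i + h_n^\top g(X_i) \}^2 = \epsilon_i^2 + 2 \epsilon_i h_n^\top g(X_i) + \{ h_n^\top g(X_i) \}^2$. The cross and quadratic pieces are $o_P(1)$ because $\| h_n \| \le C n^{-1/2}$ and $g$ is bounded. The key piece $\frac{1}{8 N_1} \sum W_i^2 \epsilon_i^2$ I would compare to $\frac{1}{8 N_1} \sum W_i^2 \sigma^2(X_i)$ by applying \Cref{lemma:conditional_wlln} to the difference $\frac{1}{8 N_1} \sum_{i=1}^n W_i^2 \{ \epsilon_i^2 - \sigma^2(X_i) \}$: conditionally on $\F_n = \sigma(\{ (X_i, Z_i) \}_{i \le n})$ this has mean zero (as $W_i$ is $\F_n$-measurable and $\E(\epsilon_i^2 \mid X_i, Z_i) = \sigma^2(X_i)$) and conditional variance at most $\frac{1}{64 N_1^2} \sum W_i^4 \E(\epsilon_i^4 \mid X_i, Z_i) = O_P(n / N_1^2) = o_P(1)$. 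Combining the pieces establishes $\mathbf{A}_{r,11} = \frac{1}{8 N_1} \sum_{i \in \M_r} W_i^2 \sigma^2(X_i) + o_P(1)$.

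The main obstacle---indeed the only genuinely new ingredient relative to the unweighted \Cref{lemma:paired_sandwich_filling}---is the moment bookkeeping created by the weights. Every average above picks up powers of $W_i$ (up to $W_i^4$), and to invoke Markov I must verify that products such as $W_i^4 \epsilon_i^4$, $W_i^2 | \epsilon_i | \, \| \psi_S \|$, and $W_i^2 \| \psi_S \|^2$ have a finite, $n$-uniform first moment. This is where \Cref{lemma:Kn_moments} is essential: since $W_i \le 1 + K_{i,n}$ has uniformly bounded moments of every order, H\"older's inequality together with the assumption that $\| X \|$ and $Y$ have more than four moments (\Cref{assumption:primitives}) makes each such product integrable. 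The bound $\mathbf{A}_r = O_P(1)$ then follows by applying the same decomposition and Markov estimates entrywise---it suffices to control $\frac{1}{N_1} \sum W_i^2 \tilde \epsilon_i^2 \| \psi_S(X_i, Z_i) \|^2$---so I would flag these final verifications as routine and omit the identical details, exactly as the paired proof does.
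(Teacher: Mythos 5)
Your proposal is correct and follows essentially the same route as the paper's proof: the same decomposition of $\tilde\epsilon_i$ around the true coefficient vector $\vartheta$, the same use of \Cref{lemma:replaced_regression_consistency} and Cauchy--Schwarz for the remainder, \Cref{lemma:conditional_wlln} for replacing $\epsilon_i^2$ by $\sigma^2(X_i)$, and \Cref{lemma:Kn_moments} plus H\"older for the moment bookkeeping on the weights. The paper likewise leaves the limit in the unreduced form $\sum_{i \in \M_r} W_i^2 \sigma^2(X_i)/8N_1$ and omits the entrywise verification of $\mathbf{A}_r = O_P(1)$ as routine.
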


\begin{proof}
The proof is similar to that of \Cref{lemma:paired_sandwich_filling}.  Let $\tilde \vartheta = ( \tilde \tau_n^{\reg}, \tilde \theta)$.  Let $\vartheta = (0, \gamma, \beta, \mathbf{0})$ if $S \neq \emptyset$ and otherwise set $\vartheta = (0, \gamma, \beta)$.  Then the following holds:
\begin{align*}
\mathbf{A}_{r,11} &= \frac{1}{8 N_1} \sum_{i \in \M_r} W_i^2 \tilde \epsilon_i^2\\
&= \frac{1}{8 N_1} \sum_{i \in \M_r} W_i^2 \{ Y_i + h_n^{\top} g(X_i) - \tilde \vartheta^{\top} \psi_S(X_i, Z_i) \}^2\\
&= \underbrace{\frac{1}{8 N_1} \sum_{i \in \M_r} W_i^2 \{ Y_i + h_n^{\top} g(X_i) - \vartheta^{\top} \psi_S(X_i, Z_i) \}^2}_a\\
&+ \underbrace{\frac{1}{8 N_1} \sum_{i \in \M_r} W_i^2 \{ 2 Y_i + 2 h_n^{\top} g(X_i) - (\vartheta + \tilde \vartheta)^{\top} \psi_S(X_i, Z_i) \} (\vartheta - \tilde \vartheta)^{\top} \psi_S(X_i, Z_i)}_b
\end{align*}
The calculation below shows that term $a$ has the desired behavior.  We freely use the fact that $W_i$ has infinitely many moments, which is assured by \Cref{lemma:Kn_moments}.
\begin{align*}
a &= \frac{1}{8 N_1} \sum_{i \in \M_r} W_i^2 \{ \epsilon_i + h_n^{\top} g(X_i) \}^2\\
&= \frac{1}{8 N_1} \sum_{i \in \M_r} W_i^2 \epsilon_i^2 + \underbrace{\frac{1}{8 N_1} \sum_{i \in \M_r} 2 W_i^2 \epsilon_i h_n^{\top} g(X_i)}_{= o_P(1) \text{ by \Cref{lemma:conditional_wlln} conditional on $\{ (X_i, Z_i) \}_{i \leq n}$}} + \underbrace{\frac{1}{8 N_1} \sum_{i \in \M_r} W_i^2 \{ h_n^{\top} g(X_i) \}^2}_{\leq \| h_n \|^2 \sum_{i \in \M_r} W_i^2 \| g(X_i) \|^2 / 8 N_1 = O_P(1/n)}\\
&= \frac{1}{8 N_1} \sum_{i \in \M_r} W_i^2 \sigma^2(X_i) + \underbrace{\frac{1}{8 N_1} \sum_{i \in \M_r} W_i^2 \{ \epsilon_i^2 - \sigma^2(X_i) \}}_{= o_P(1) \text{ by \Cref{lemma:conditional_wlln} conditional on $\{ (X_i, Z_i) \}_{i \leq n}$}} + o_P(1)\\
&= \frac{1}{8 N_1} \sum_{i \in \M_r} W_i^2 \sigma^2(X_i) + o_P(1).
\end{align*}
The proof that $\mathbf{A}_r = O_P(1)$ follows by applying similar arguments to each entry of $\mathbf{A}_r$.  As in the proof of \Cref{lemma:paired_sandwich_filling}, we omit the details.
\end{proof}

\begin{lemma} \label{lemma:replaced_hc_standard_error}
Define $\tilde \sigma^2_{r,\hc}$ as follows:
\begin{align*}
\tilde \sigma^2_{r, \hc} &= \left( \sum_{i \in \M_r} W_i \psi_S(X_i, Z_i) \psi_S(X_i, Z_i)^{\top} \right)^{-1} \left( \sum_{i \in \M_r} W_i^2 \tilde \epsilon_i^2 \psi_S(X_i, Z_i) \psi_S(X_i, Z_i)^{\top} \right) \left( \sum_{i \in \M_r} W_i \psi_S(X_i, Z_i) \psi_S(X_i, Z_i)^{\top} \right)^{-1}.
\end{align*}
Then $N_1 \tilde \sigma^2_{r, \hc} = \sum_{i \in \M_r} W_i \sigma^2(X_i) / N_1 + o_P(1)$.
\end{lemma}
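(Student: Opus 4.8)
The plan is to imitate the proof of \Cref{lemma:paired_hc_standard_error}, replacing the unweighted bread and meat matrices by their multiplicity-weighted analogues. First I would introduce the normalized matrices
\[
\mathbf{B}_r = \frac{1}{2N_1}\sum_{i\in\M_r}W_i\,\psi_S(X_i,Z_i)\psi_S(X_i,Z_i)^{\top}, \qquad \mathbf{A}_r = \frac{1}{2N_1}\sum_{i\in\M_r}W_i^2\,\tilde\epsilon_i^2\,\psi_S(X_i,Z_i)\psi_S(X_i,Z_i)^{\top},
\]
which are exactly the objects analyzed in \Cref{lemma:replaced_design_matrix} and \Cref{lemma:replaced_sandwich_filling}. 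Factoring $2N_1$ out of each of the three factors in the definition of $\tilde\sigma^2_{r,\hc}$ and reading off the top-left entry gives the algebraic identity $N_1\tilde\sigma^2_{r,\hc} = \tfrac12\,e_1^{\top}\mathbf{B}_r^{-1}\mathbf{A}_r\mathbf{B}_r^{-1}e_1$, so the entire claim reduces to controlling this quadratic form.

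Next I would dispatch the two factors separately. \Cref{lemma:replaced_design_matrix}, applied with $\phi(x)=(x,g_S(x))$, shows that $\mathbf{B}_r$ converges in probability to the block-diagonal limit whose $(1,1)$ entry is $0.25$ and whose remaining block is invertible; inverting this limit gives $\mathbf{B}_r^{-1}e_1 = (4,\mathbf{0})^{\top} + o_P(1)$, and by symmetry $e_1^{\top}\mathbf{B}_r^{-1} = (4,\mathbf{0}) + o_P(1)$. \Cref{lemma:replaced_sandwich_filling} shows that $\mathbf{A}_r = O_P(1)$. Since the $o_P(1)$ perturbations of $\mathbf{B}_r^{-1}e_1$ multiply the stochastically bounded matrix $\mathbf{A}_r$, Slutsky's theorem collapses the quadratic form to its leading term, $N_1\tilde\sigma^2_{r,\hc} = \tfrac12(4,\mathbf{0})\,\mathbf{A}_r\,(4,\mathbf{0})^{\top} + o_P(1) = 8\,\mathbf{A}_{r,11} + o_P(1)$.

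Finally I would identify $8\,\mathbf{A}_{r,11}$ with the stated target. \Cref{lemma:replaced_sandwich_filling} shows that the diagonal entry $\mathbf{A}_{r,11}$ concentrates around the matching-weighted average of the conditional variances, after the pseudo-residuals $\tilde\epsilon_i^2$ have been replaced by $\sigma^2(X_i)$. The key bookkeeping step is then to re-express this average over $\M_r$ as a sum over the matches themselves, using the identity $\sum_{i\in\M_r}W_i\sigma^2(X_i) = \sum_{Z_i=1}\sigma^2(X_i) + \sum_{Z_j=1}\sigma^2(X_{m_r(j)})$, whose right-hand side splits into the treated variances and the variances of their matched controls. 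Both pieces are controlled by \Cref{lemma:replaced_wlln} applied to $g=\sigma^2$, and together they deliver $8\,\mathbf{A}_{r,11} = \sum_{i\in\M_r}W_i\sigma^2(X_i)/N_1 + o_P(1)$, which is precisely the conclusion of the lemma.

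The main obstacle, I expect, is not the linear-algebra reduction but the careful weight bookkeeping in the last step: one must track exactly how the multiplicity weights enter the filling matrix and verify that replacing $\tilde\epsilon_i^2$ by $\sigma^2(X_i)$ incurs only an $o_P(1)$ error even after weighting by quantities that enjoy only finitely many uniformly bounded moments. This is where \Cref{lemma:Kn_moments} (uniform moment control of the multiplicities) and the consistency of the weighted regression coefficients from \Cref{lemma:replaced_regression_consistency} do the heavy lifting, ensuring the difference between $\tilde\epsilon_i^2$ and $\sigma^2(X_i)$ averages out despite the reuse of controls.
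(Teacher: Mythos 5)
Your reduction is exactly the paper's: you factor $2N_1$ out of the three factors, write $N_1\tilde\sigma^2_{r,\hc} = \tfrac{1}{2}\, e_1^{\top}\mathbf{B}_r^{-1}\mathbf{A}_r\mathbf{B}_r^{-1}e_1$, invoke \Cref{lemma:replaced_design_matrix} (with $\phi(x)=(x,g_S(x))$) to get $\mathbf{B}_r^{-1}e_1 = (4,\mathbf{0}) + o_P(1)$ --- indeed a more apt citation than the paper's own proof, which points to \Cref{lemma:replaced_covariance_properties} at this step --- and use \Cref{lemma:replaced_sandwich_filling} for $\mathbf{A}_r = O_P(1)$, arriving at $N_1\tilde\sigma^2_{r,\hc} = 8\,\mathbf{A}_{r,11} + o_P(1)$. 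Up to this point the two arguments coincide.

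Your final identification step, however, contains a genuine error. \Cref{lemma:replaced_sandwich_filling} gives $8\,\mathbf{A}_{r,11} = \sum_{i\in\M_r} W_i^2\,\sigma^2(X_i)/N_1 + o_P(1)$ with \emph{squared} weights: the $(1,1)$ entry of $\mathbf{A}_r$ carries $W_i^2\tilde\epsilon_i^2(Z_i-0.5)^2$, and nothing in that lemma removes the square. Your bookkeeping identity $\sum_{i\in\M_r}W_i\sigma^2(X_i) = \sum_{Z_i=1}\sigma^2(X_i)+\sum_{Z_j=1}\sigma^2(X_{m_r(j)})$ is correct as far as it goes, but it concerns the $W_i$-weighted sum and cannot bridge the gap, since $\sum_{i\in\M_r}(W_i^2-W_i)\,\sigma^2(X_i)/N_1 = \sum_{i\in\M_r,\,Z_i=0} K_{i,n}(K_{i,n}-1)\,\sigma^2(X_i)/N_1$ is nonnegative and generally non-vanishing under matching with replacement, where a positive fraction of controls are reused ($K_{i,n}\geq 2$). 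So your chain actually proves the conclusion with $W_i^2$, not $W_i$, and no appeal to \Cref{lemma:replaced_wlln}, \Cref{lemma:Kn_moments}, or \Cref{lemma:replaced_regression_consistency} can repair the asserted conversion. The discrepancy in fact originates in the paper: the printed statement of the lemma (with $W_i$) is inconsistent with \Cref{lemma:replaced_sandwich_filling} and with the paper's own concluding line, and the $W_i^2$ version is the one required downstream, since the $t$-statistic argument in \Cref{theorem:replaced_robustness} needs $N_1\tilde\sigma^2_{r,\hc}$ to match $\tilde s_n^2 = \sum_{i\in\M_r}W_i^2\sigma^2(X_i)/N_1$ from \Cref{lemma:replaced_expansion}. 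The right move is to flag the displayed statement as a typo and prove the $W_i^2$ version, rather than manufacturing a step that delivers the misprinted display.
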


\begin{proof}
Let $\mathbf{B}_r = \sum_{i \in \M_r} W_i \psi_S(X_i, Z_i) \psi_S(X_i, Z_i)^{\top} / 2 N_1$ and $\mathbf{A}_r = \sum_{i \in \M_r} W_i^2 \tilde \epsilon_i^2 \psi_S(X_i, Z_i) \psi_S(X_i, Z_i)^{\top} / 2 N_1$.  Then $N_1 \tilde \sigma^2_{r, \hc} = 0.5 e_1^{\top} \mathbf{B}_r^{-1} \mathbf{A}_r \mathbf{B}_r$.  By \Cref{lemma:replaced_covariance_properties}, $\mathbf{B}_r e_1 = (4, \mathbf{0}) + o_P(1)$ and by \Cref{lemma:replaced_sandwich_filling}, $\mathbf{A}_r = O_P(1)$.  Thus, $N_1 \tilde \sigma_{r, \hc}^2 = 8 \mathbf{A}_{r,11} + o_P(1)$.  Finally, \Cref{lemma:replaced_sandwich_filling} shows that $8 \mathbf{A}_{r,11} = \sum_{i \in \M_r} W_i^2 \sigma^2(X_i) / N_1 + o_P(1)$.
\end{proof}

% ========================================================
\section{Additional proofs} \label{section:additional_proofs}

In this section, we prove \Cref{lemma:Kn_moments}, which shows that $K_{i,n}$ (the number of times observation $i$ is used as an untreated match under the matching-with-replacement scheme (\ref{eq:matching_with_replacement})) has uniformly bounded moments of all orders.  Throughout, we assume that the distribution $P$ satisfies \Cref{assumption:primitives}.

% ========================================================
\subsection{Notation}

In this section, we use the following notations.  We let $\mathcal{F}_n = \{ (X_i, Z_i, U_i) \}_{i \leq n}$ denote the information used in forming matches, where $U_i \sim \textup{Uniform}(0, 1)$ is independent randomness used in the tie-breaking scheme of \cite{devroye_etal1994} which works as follows:  among untreated observations $j$ with the minimal value of $d_M(X_i, X_j)$, we choose the one which minimizes $|U_i - U_j|$ as a match.  For matches, we use the following notation, which is slightly more descriptive than what was used in the main text:
\begin{align*}
m_r(i, \F_n) &= \argmin_{Z_j = 0} d_M(X_i, X_j)\\
L_{i,n}(\F_n) &= \sum_{j = 1}^n \mathbf{1} \{ m_r(j, \F_n) = i \}\\
K_{i,n}(\F_n) &= \sum_{j = 1}^n Z_j \mathbf{1} \{ m_r(j, \F_n) = i \}
\end{align*}
(here, we recall $d_M$ is the estimated Mahalanobis distance defined in \Cref{lemma:mahalanobis_euclidean_equivalence}).  In words, $m_r(i, \F_n)$ is the nearest untreated neighbor of observation $i$.  When there are no untreated observations in the dataset $\F_n$, we arbitrarily set $m_r(i, \F_n) = 0$.  Meanwhile, $L_{i,n}(\F_n)$ counts the number of times observation $i$ is the nearest untreated neighbor of another observation, and $K_{i,n}(\F_n)$ counts the number of times observation $i$ is used as an untreated match.  We also use the abbreviation ``NN" for ``nearest neighbors."

% ========================================================
\subsection{Preparation}

\begin{lemma}
\label{lemma:maximum_neighbors}
Let $\gamma_d < \infty$ be the minimal number of cones centered at the origin of angle $\pi/6$ that cover $\R^d$.  Then for any distinct points $x_1, \ldots, x_n \in \R^d$, we have:
\begin{align*}
\sum_{j = 1}^n \mathbf{1} \{ \text{$x_1$ is among the $k$-NN of $x_j$ in $\{ x_i \}_{i \in [n] \backslash \{ j \}}$} \} \leq k \gamma_d.
\end{align*}
\end{lemma}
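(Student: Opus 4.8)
The plan is to fix $x_1$ at the apex, partition the remaining directions into the $\gamma_d$ cones guaranteed by the statement, and show that within each single cone at most $k$ of the points can have $x_1$ among their $k$ nearest neighbors. Summing this per-cone bound over the $\gamma_d$ cones then yields $k\gamma_d$. The only genuinely geometric ingredient is an elementary \emph{cone inequality}, and everything else is a pigeonhole/contradiction argument.

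The cone inequality I would establish first is this: if $y,z$ are points with $\angle(y-x_1, z-x_1) < \pi/3$ and $\|y - x_1\| \le \|z - x_1\|$, then $\|y - z\| < \|z - x_1\|$. I would prove it directly from the law of cosines. Writing $\theta$ for the angle at $x_1$,
\[
\|y - z\|^2 = \|y - x_1\|^2 + \|z - x_1\|^2 - 2\|y - x_1\|\,\|z - x_1\|\cos\theta,
\]
so, after dividing by $\|y - x_1\| > 0$, the inequality $\|y - z\|^2 < \|z - x_1\|^2$ is equivalent to $\|y - x_1\| < 2\cos\theta\,\|z - x_1\|$. Since $\theta < \pi/3$ forces $2\cos\theta > 1$, and since $\|y - x_1\| \le \|z - x_1\|$, this holds strictly. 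The point of the $\pi/6$ specification is precisely that any two nonzero vectors lying in a common cone subtend an angle strictly less than $\pi/3$ at the apex, so the inequality applies to every pair of points whose offsets from $x_1$ lie in the same cone — and, crucially, it remains strict even when $y$ and $z$ are equidistant from $x_1$, which removes any need to fuss over ties.

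With this in hand I would argue cone by cone. Fix a cone $C$, and let $\mathcal{J}$ be the set of indices $j \ne 1$ with $x_j - x_1 \in C$ for which $x_1$ is among the $k$ nearest neighbors of $x_j$ in $\{x_i\}_{i \ne j}$. Suppose toward a contradiction that $|\mathcal{J}| \ge k+1$, and let $z = x_{j^\star}$ be an element of $\mathcal{J}$ maximizing $\|x_{j^\star} - x_1\|$ (ties broken arbitrarily). Then $\mathcal{J}$ contains at least $k$ further indices $j'$, each with $x_{j'} - x_1 \in C$ and $\|x_{j'} - x_1\| \le \|z - x_1\|$, so the cone inequality gives $\|x_{j'} - z\| < \|z - x_1\|$ for every such $j'$. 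These $x_{j'}$ are distinct sample points, all different from $z$ and from $x_1$, hence genuine members of the candidate set $\{x_i\}_{i \ne j^\star}$; thus $z$ has at least $k$ points strictly closer to it than $x_1$, so $x_1$ is not among the $k$ nearest neighbors of $z$. This contradicts $z \in \mathcal{J}$, and therefore $|\mathcal{J}| \le k$. Since every index $j \ne 1$ places the nonzero vector $x_j - x_1$ in exactly one of the $\gamma_d$ cones, summing the bound $|\mathcal{J}| \le k$ over the cones gives $\sum_{j} \mathbf{1}\{x_1 \text{ is among the $k$-NN of } x_j\} \le k\gamma_d$.

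I do not expect a serious obstacle here, as the whole argument is elementary. The two places demanding care are (i) keeping the cone inequality strict in the equidistant case, which is exactly why the angular bound must sit strictly below $\pi/3$ (the $\pi/6$ aperture provides the slack), and (ii) checking that the $k$ strictly-closer points produced for $z$ are genuine distinct competitors in $\{x_i\}_{i \ne j^\star}$, so that $x_1$ really is forced out of the top $k$.
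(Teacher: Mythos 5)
Your argument is correct in substance and is precisely the classical geometric proof (Stone's lemma) of the fact that the paper does not prove at all but simply imports: the paper's entire proof is a citation to Corollary 11.1 of Devroye, Gy\"orfi and Lugosi (1996), plus the observation that the substitution $x_i \mapsto \hat{\mathbf{\Sigma}}^{-1/2} x_i$ reduces the Mahalanobis metric (which is how the lemma is actually applied downstream) to the Euclidean case. So you have supplied the missing geometry rather than taken a different route; if you intend the $k$-NN in the statement to be with respect to $d_M$, you should add that one linear change of variables, which maps distinct points to distinct points and preserves the nearest-neighbor relation. One small inaccuracy to flag: two nonzero vectors lying in a common \emph{closed} cone of angle $\pi/6$ subtend an angle of at most $\pi/3$, not strictly less than $\pi/3$ (take both on the boundary of the cone), so your cone inequality can degenerate to the equality $\|y-z\| = \|z-x_1\|$ exactly when $\theta = \pi/3$ \emph{and} $\|y-x_1\| = \|z-x_1\|$ simultaneously. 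Your law-of-cosines computation already shows that strictness in either hypothesis restores the strict conclusion, so the pigeonhole step only needs care in this doubly degenerate tie configuration; this is the same wrinkle the standard reference handles via a tie-breaking convention, and it is immaterial for the paper's use of the lemma, where ties are eliminated by appending the auxiliary uniform coordinate $r(\epsilon)U_i$ (which is exactly why $\gamma_{d+1}$ rather than $\gamma_d$ appears in the subsequent probability estimate) and where only finiteness of the constant matters for bounding the moments of $K_{i,n}$.
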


\begin{proof}
This follows from \cite[Corollary 11.1]{devroye_etal1996} after replacing $x_i$ by $\hat{\mathbf{\Sigma}}^{-1/2} x_i$, where we recall the convention that $\hat{\mathbf{\Sigma}}^{-1} = \mathbf{I}_{d \times d}$ when $\hat{\mathbf{\Sigma}}$ is singular.
\end{proof}

\begin{lemma}
\label{lemma:coupling}
Independently of $\F_n$, let $Z_2', \ldots, Z_n' \overset{\text{iid}}{\sim} \textup{Bernoulli}(1 - \delta)$.  Let $\F_n'$ be the dataset $\F_n$ except with $Z_i$ replaced by $Z_i'$ for all $i \geq 2$.  Then $\E \{ L_{1,n}(\F_n')^q \} \leq \E \{ L_{1,n}(\F_n')^q \}$ for all $q > 0$.
\end{lemma}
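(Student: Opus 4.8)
The plan is to prove the (evidently intended) inequality $\E\{ L_{1,n}(\F_n)^q \} \le \E \{ L_{1,n}(\F_n')^q \}$ by constructing a \emph{monotone coupling} of the two datasets on a common probability space and showing that $L_{1,n}$ is a pointwise monotone function of the set of untreated units. The guiding observation is that $L_{1,n}$ depends on the treatment labels only through the untreated set $T = \{ i : Z_i = 0 \}$: each index $j$ is assigned to the candidate in $T$ that is most preferred in the fixed total order ``smaller $d_M(X_j, \cdot)$, ties broken by smaller $|U_j - U_\cdot|$'', and $L_{1,n}$ counts the indices assigned to unit $1$. Overlap (\Cref{assumption:primitives}.\ref{item:overlap}) gives $e(X_i) = P(Z = 1 \mid X_i) \le 1 - \delta$, which is exactly the room needed to make the original treatments \emph{smaller} than $\textup{Bernoulli}(1-\delta)$ treatments under a suitable coupling.

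First I would build the coupling. On a common space draw $X_i \sim P_X$, tie-break variables $U_i \sim \textup{Uniform}(0,1)$, and auxiliary $V_i \sim \textup{Uniform}(0,1)$, all mutually independent across $i$. Set $Z_i = \mathbf{1}\{ V_i \le e(X_i) \}$ for every $i$, set $Z_1' = Z_1$, and set $Z_i' = \mathbf{1}\{ V_i \le 1 - \delta \}$ for $i \ge 2$. Then $(X_i, Z_i, U_i)_{i \le n}$ has the law of $\F_n$, while $(X_i, Z_i', U_i)_{i \le n}$ has the law of $\F_n'$: indeed, for $i \ge 2$, $Z_i' \sim \textup{Bernoulli}(1-\delta)$ is independent of $X_i$ because $V_i$ is independent of $X_i$, these labels are independent across $i$, and unit $1$ retains its original triple. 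Since $e(X_i) \le 1 - \delta$, the coupling is comonotone: $Z_i \le Z_i'$ for all $i$, so the untreated sets satisfy $T' := \{ i : Z_i' = 0 \} \subseteq \{ i : Z_i = 0 \} =: T$, with $1 \in T$ if and only if $1 \in T'$ because $Z_1 = Z_1'$.

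Next I would establish the pointwise bound $L_{1,n}(\F_n) \le L_{1,n}(\F_n')$. If $Z_1 = 1$, unit $1$ is never anyone's nearest \emph{untreated} neighbor, so both counts are $0$ and there is nothing to prove; assume instead $Z_1 = 0$, so $1 \in T' \subseteq T$. Fix $j$ with $m_r(j, \F_n) = 1$. Writing $\preceq$ for the total order above (so that $m_r(j, \cdot)$ selects the $\preceq$-least candidate), unit $1$ is $\preceq$-preferred to every element of $T$, hence to every element of the smaller set $T'$; since moreover $1 \in T'$, unit $1$ is also the $\preceq$-least element of $T'$, i.e. $m_r(j, \F_n') = 1$. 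Therefore $\{ j : m_r(j, \F_n) = 1 \} \subseteq \{ j : m_r(j, \F_n') = 1 \}$, which is precisely $L_{1,n}(\F_n) \le L_{1,n}(\F_n')$.

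Finally, because $t \mapsto t^q$ is nondecreasing on $[0, \infty)$ and both counts are nonnegative integers, the coupled inequality upgrades to $L_{1,n}(\F_n)^q \le L_{1,n}(\F_n')^q$ for every realization; taking expectations and using that the coupled marginals reproduce the laws of $\F_n$ and $\F_n'$ yields the claim. I expect the delicate points to be bookkeeping rather than conceptual: verifying that the comonotone construction reproduces the exact marginal law demanded of $\F_n'$ (in particular the independence of each $Z_i'$ from $X_i$ for $i \ge 2$ and across units, despite $Z_i'$ sharing the variable $V_i$ with $Z_i$), and phrasing the distance-plus-tie-break selection as one fixed total order so that the elementary fact ``the $\preceq$-least element of a set, if it lies in a subset, is the $\preceq$-least element of that subset'' applies verbatim.
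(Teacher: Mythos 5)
Your proposal is correct and is essentially the paper's own argument: the same $(X_i,U_i,V_i)$ threshold coupling exploiting $P(Z=0\mid X)\ge\delta$ to force $Z_i\le Z_i'$, followed by the observation that shrinking the untreated set while keeping unit $1$ untreated can only enlarge $\{j: m_r(j,\cdot)=1\}$. You simply spell out the monotonicity step (via the fixed total order and the subset argument) that the paper dismisses as "easy to see," and you correctly read the statement's typo as the intended inequality $\E\{L_{1,n}(\F_n)^q\}\le\E\{L_{1,n}(\F_n')^q\}$.
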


\begin{proof}
We prove this by coupling the distributions of $\F_n$ and $\F_n'$.  On some probability space, construct $n$ independent random vectors $(X_i, U_i, V_i) \sim P_X \times \textup{Uniform}(0, 1) \times \textup{Uniform}(0, 1)$.  Define $Z_i = \mathbf{1} \{ V_i > P(Z_i = 0 \mid X_i) \}$ and $Z_i' = \mathbf{1} \{ V_i > \delta \}$.  Since $P(Z = 0 \mid X) \geq \delta$ almost surely (\Cref{assumption:primitives}.\ref{item:overlap}), $Z_i' \geq Z_i$ with probability one.  With this coupling, it is easy to see that $L_{1,n}(\F_n) \leq L_{1,n}(\F_n')$ almost surely.  This is because changing some ``untreated" units to ``treated" only decreases the number of ``competitors" of observation 1.  Hence, $L_{1,n}(\F_n) \leq L_{1,n}(\F_n')$ and the conclusion follows.
\end{proof}

\begin{lemma}
\label{lemma:multiple_matches_probability_estimate}
Let $\F_n'$ be as in \Cref{lemma:coupling}.  Then for all $k \geq 2$, we have:
\begin{align}
\P\{ m_r(2, \F_n') = \ldots = m_r(k, \F_n') = 1 \mid Z_1 = 0, N_0 \} \leq \left( \frac{k \gamma_{d + 1}}{N_0} \right)^{k - 1} \label{eq:multiple_matches_probability_estimate}
\end{align}
\end{lemma}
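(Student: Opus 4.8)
The plan is to condition on $Z_1 = 0$ and $N_0 = n_0$, reduce the claim to a purely geometric statement about iid points, and then close it with an exchangeability-counting argument driven by \Cref{lemma:maximum_neighbors}. The key structural feature of $\F_n'$ is that its treatments are drawn independently of the covariates, so conditioning on $\{Z_1 = 0, N_0 = n_0\}$ leaves the augmented positions $\{(X_i, U_i)\}$ iid and independent of which units are untreated. On the event in the display, each query $j \in \{2, \dots, k\}$ is matched to unit $1$; since an untreated unit would be matched to itself, this forces units $2, \dots, k$ to be treated, and the only competitors for the role of nearest untreated neighbor are the $n_0$ untreated units. Hence the event implies the event $A$ that, among $n_0$ untreated points (one of which is unit $1$) together with $k-1$ independent query points, unit $1$ is the nearest untreated neighbor of every query. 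Treated non-query units do not affect $A$ and may be discarded, so it suffices to bound $\P(A)$ for $N := n_0 + k - 1$ iid points in the tie-broken space $\R^{d+1}$.

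The geometric heart of the argument is the observation that on $A$, unit $1$ lies among the $(k-1)$-nearest neighbors of each query within the full $N$-point set. Indeed, any point strictly closer to a query $q$ than unit $1$ cannot be untreated (else it would beat unit $1$ as a match), hence must be one of the remaining $k-2$ queries; so unit $1$ has neighbor-rank at most $k-1$ for every query. Using the tie-breaking variables $U_i$ to embed the matching problem into $\R^{d+1}$ and render the relevant comparisons strict, \Cref{lemma:maximum_neighbors} (applied with parameter $k-1$ in dimension $d+1$) shows that for any fixed candidate match $x$, the set $G(x)$ of points having $x$ among their $(k-1)$-nearest neighbors has cardinality at most $(k-1)\gamma_{d+1}$.

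To convert this count into a probability bound I will exploit the exchangeability of the $N$ iid points. Write $\P(A)$ as an average over all ways of designating one point as unit $1$, another $n_0 - 1$ as the remaining untreated units, and the last $k-1$ as queries; there are $D = N \binom{N-1}{n_0-1}$ such designations, and by exchangeability each gives the same probability, so $\P(A) = \E[\nu]/D$, where $\nu$ counts the designations under which $A$ holds for a fixed configuration. For each of the $N$ choices of unit $1$, any admissible query set must be a $(k-1)$-subset of $G(\text{unit }1)$, whence $\nu \le N \binom{(k-1)\gamma_{d+1}}{k-1}$. Dividing, and using $\binom{N-1}{n_0-1} = \binom{n_0+k-2}{k-1} \ge n_0^{k-1}/(k-1)!$ together with $\binom{(k-1)\gamma_{d+1}}{k-1} \le ((k-1)\gamma_{d+1})^{k-1}/(k-1)!$, gives $\P(A) \le ((k-1)\gamma_{d+1}/n_0)^{k-1} \le (k\gamma_{d+1}/N_0)^{k-1}$, which is the asserted bound.

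The two binomial estimates and the exchangeability bookkeeping are routine. The step I expect to require the most care is the passage to $\R^{d+1}$: one must verify that the tie-breaking scheme makes the nearest-untreated-neighbor relation a strict order compatible with the cone-covering bound behind \Cref{lemma:maximum_neighbors}, so that the ambient dimension genuinely increases from $d$ to $d+1$ and all $(k-1)$-nearest-neighbor sets are unambiguous. The only other point to confirm is that discarding the treated non-query units, and forcing units $2, \dots, k$ to be treated, loses no configuration that contributes to the original event.
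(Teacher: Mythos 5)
Your proposal is correct and follows essentially the same route as the paper's proof: force units $2,\dots,k$ to be treated, reduce by exchangeability to a symmetric counting problem over $N_0+k-1$ points, and invoke the cone-covering bound of \Cref{lemma:maximum_neighbors} in the tie-broken space $\R^{d+1}$. The only difference is cosmetic bookkeeping --- you count unordered designations with binomial coefficients where the paper sums over ordered index tuples --- and both yield the stated bound.
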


\begin{proof}
Throughout this proof, we simply write $m_r(i)$ in place of $m_r(i, \F_n')$.

Assume first that $X_1, \ldots, X_n$ are almost surely distinct.  

In that case, $m_r(i) = 1$ can only happen if $Z_i = 1$;  otherwise, $X_i$ would be its own nearest untreated neighbor.  Therefore, whenever $(k - 1) + N_0 > n$, one of the $X_i$'s must have $Z_i = 0$ by the pidgeonhole principle, and hence $\P\{ m_r(2) = \ldots = m_r(k) = 1 \mid Z_1 = 0, N_0 \} = 0$.  Clearly (\ref{eq:multiple_matches_probability_estimate}) holds in this case.

Now we consider the more interesting case where $(k - 1) + N_0 \leq n$.  In this case, we will show that the bound holds even conditionally on $(X_1, U_1)$.  Begin by conditioning on the event $Z_2 = \ldots = Z_k = 1$ and translating the event into the language of nearest neighbors:
\begin{align*}
&\P \{ m_r(2) = \ldots = m_r(k) = 1 \mid X_1, U_1, Z_1 = 0, N_0) \}\\
&\leq \P\{ m_r(2) = \ldots = m_r(k) = 1 \mid X_1, U_1, Z_1 = 0, N_0, Z_2 = \ldots = Z_k = 1 \}\\
&= \P(  \text{$X_1$ is the NN of $X_j$ in $\{ X_i \}_{Z_i = 0}$ for all $j \in \{ 2, \ldots, k \}$} \mid X_1, U_1, Z_1 = 0, N_0, Z_2 = \ldots = Z_k = 1 )
\end{align*}
By symmetry, the probability in the upper bound is the same no matter which $N_0 - 1$ observations in $\{ k + 1, \ldots, n \}$ are the ones with $Z_i = 0$.  Therefore, we may as well assume they are $k + 1, \ldots, k + N_0 - 1$, which gives the bound:
\begin{align*}
&\P \{ m_r(2) = \ldots = m_r(k) = 1 \mid X_1, U_1, Z_1 = 0, N_0 \}\\
&\leq \P [ \forall j \in [k] \backslash \{ 1 \}\text{, $X_1$ is the NN of $X_j$ in $\{ X_i \}_{i \in [k-1+N_0] \backslash \{ 2, \ldots, k \}}$} \mid X_1, U_1, Z_1 = 0, Z_2 = \ldots = Z_k = 1, N_0]\\
&= \P[ \forall j \in [k] \backslash \{ 1 \}\text{, $X_1$ is the NN of $X_j$ in $\{ X_i \}_{i \in [k - 1 + N_0] \backslash \{ 2, \ldots, k \}}$} \mid X_1, U_1, Z_1 = 0, N_0]\\
&\leq \P[ \forall j \in [k] \backslash \{ 1 \} \text{, $X_1$ is among the $k$-NN of $X_r$ in $\{ X_i \}_{i \in [k - 1 + N_0] \backslash \{ j \}}$} \mid X_1, U_1, Z_1 = 0, N_0].
\end{align*}
Again by symmetry, the probability in the final line of the preceding display would be the same if we replaced $[k] \backslash \{ 1 \}$ with any other set of $k - 1$ distinct indices in $[k - 1 + N_0 ] \backslash \{ 1 \}$.  Combining this observation with \Cref{lemma:maximum_neighbors} gives the desired bound with $\gamma_d$ in place of $\gamma_{d + 1}$:
\begin{align*}
&\P[ m_r(2) = \ldots = m_r(k) = 1 \mid X_1, U_1, Z_1 = 0, N_0]\\
&= \sum_{1 < i_1 < \ldots <i_{k-1} \leq k - 1 + N_0} \frac{\P[ X_1 \text{ is among the $k$-NN of $X_{i_r}$ in $\{ X_i \}_{i \in [k - 1 + N_0] \backslash \{ i_r \}}$ for each $i_r$} \mid X_1, U_1, Z_1 = 0, N_0]}{(N_0 + k - 1) \times (N_0 + k - 2) \times \cdots \times (N_0)}\\
&= \E \left(  \frac{\sum_{1 < i_1 < \ldots < i_{k-1} \leq k - 1 + N_0} \mathbf{1} \{ \text{$X_1$ is among the $k$-NN of $X_{i_r}$ in $\{ X_i \}_{i \in [k - 1 + N_0] \backslash \{ i_r \}}$ for each $i_r$ \}}}{(N_0 + k - 1) \times (N_0 + k - 2) \times \cdots \times (N_0)} \, \bigg| \, X_1, U_1, Z_1 = 0, N_0 \right)\\
&\leq \E \left\{ \frac{\left( \sum_{1 < i \leq k -1 + N_0} \mathbf{1} \{ \text{$X_1$ is among the $k$-NN of $X_i$ in $\{ X_i \}_{i \in [k - 1 + N_0] \backslash \{ i \}}$} \} \right)^{k -1}}{(N_0 + k - 1) \cdots (N_0)} \, \bigg| \, X_1, U_1, Z_1 = 0, N_0 \right\}\\
&\leq \frac{(k \gamma_d)^{k - 1}}{(N_0 + k - 1) \cdots (N_0)}\\
&\leq \left( \frac{k \gamma_d}{N_0} \right)^{k - 1}.
\end{align*}
This proves the result in the case where the $X_i$'s are almost surely distinct.

We only sketch the extension to the general case, which closely follows \cite{devroye_etal1994}.  Replace $X_i$ by $\{ X_i, r(\epsilon) U_i \}$ where $r(\epsilon)$ is so small that with probability at least $1 - \epsilon$, nearest-neighbor matchinb based on the $\tilde X_i$'s is the same as matching based on the $X_i$'s and then tie-breaking using the $U_i$'s (this can always be achieved by choosing $r(\epsilon)$ small enough so that all nonzero differences $\| X_i - X_j \|$ are larger than $r(\epsilon)$ with probability at least $1 - \epsilon$).  Then, but for an additive slippage of $\epsilon$, the above result applies to the $\tilde X_i$'s but with $\gamma_{d + 1}$ instead of $\gamma_d$ since $\tilde X_i \in \R^{d + 1}$.  Finally, take $\epsilon$ down to zero.
\end{proof}

% ========================================================
\subsection{Proof of \Cref{lemma:Kn_moments}}

\begin{proof}
Since $K_{1,n}(\F_n)^q \leq L_{1,n}(\F_n)^q$, it suffices to prove a bound for $L_{1,n}(\F_n)$.  Moreover, Jensen's inequality implies we only need to consider integer values of $q$.

By \Cref{lemma:coupling}, $\E \{ L_{1,n}(\F_n)^q \} \leq \E \{ L_{1,n}(\F_n')^q \mid Z_1 = 0 \}$.  We control the expectation in this upper bound by first conditioning on $N_0$:
\begin{align*}
\E \{ L_{1,n}(\F_n') \mid Z_1 = 0, N_0 \} &= \E \left[ \left( 1 + \sum_{i = 2}^n \mathbf{1} \{ m_r(2, \F_n') = 1 \} \right)^q \, \bigg| \, Z_1 = 0, N_0 \right]\\
&\leq 2^{q - 1} \left( 1 + \E \left[ \left( \sum_{i = 2}^n \mathbf{1} \{ m_r(2, \F_n') = 1 \} \right)^q \, \bigg| \, Z_1 = 0, N_0 \right] \right)\\
&\leq 2^{q - 1} \left\{ 1 + \sum_{2 \leq i_1, \ldots, i_q \leq n} \P[ m_r(i_1, \F_n') = \ldots = m_r(i_q, \F_n') = 1 \mid Z_1 = 0, N_0] \right\}
\end{align*}
By exchangeability of the observations, $\P \{ m_r(i_1, \F_n') = \ldots = m_r(i_q, \F_n') = 1 \mid Z_1 = 0, N_0 \}$ depends only on the number of distinct indices $(i_1, \ldots, i_q)$ and not on the identity of those indices.  For any $\ell \leq q$, the number of sequences $(i_1, \ldots, i_q) \in [n - 1]^q$ with $\ell$ distinct indices is at most $\ell^q \binom{n - 1}{\ell}$.  Thus, \Cref{lemma:multiple_matches_probability_estimate} gives the further bound:
\begin{align*}
\E \{ L_{1,n}( \F_n') \mid Z_1 = 0, N_0 \} &\leq 2^{q - 1} \left\{ 1 + \sum_{\ell = 1}^q \ell^q \binom{n - 1}{\ell} \P [ m_r(2, \F_n') = \ldots = m_r(\ell + 1, \F_n') = 1 \mid Z_1 = 0, N_0 ] \right\}\\
&\leq 2^{q - 1} \left[ 1 + \sum_{\ell = 1}^q \ell^q \binom{n - 1}{\ell} \left\{ \frac{(\ell + 1) \gamma_{d + 1}}{N_0} \right\}^{\ell}  \right]\\
&\leq 2^{q - 1} \left[ 1 + \sum_{\ell = 1}^q \ell^q \{ \gamma_{d + 1}(\ell + 1) \}^{\ell}(n - 1)^{\ell} N_0^{-\ell} \right].
\end{align*}
Now, we take expectations over $N_0$ on both sides of the preceding display.  The conditional distribution of $N_0$ given $Z_1 = 0$ stochastically dominates that of $1 + N_0$ where $N_0' \sim \textup{Bernoulli}(n - 1, \delta)$.  By standard binomial concentration, it can be shown that for any $\ell$, there exists a constant $C(\ell) < \infty$ and $N_{\ell} \geq 1$ such that $\E\{ (1 + N_0')^{-\ell} \} \leq C(\ell) / \{ \delta (n - 1) \}^{\ell}$ for all $n \geq N_{\ell}$.  Therefore, we may conclude:
\begin{align*}
\E \{ L_{1,n}( \F_n')^q \mid Z_1 = 0 \} &\leq 2^{q - 1} \left[ 1 + \sum_{\ell = 1}^q \ell^q \{ \gamma_{d + 1} (\ell + 1) \}^{\ell} (n - 1)^{\ell} \E \{ (1 + N_0')^{- \ell} \} \right]\\
&\leq 2^{q - 1} \left[ 1 + \sum_{\ell = 1}^q \ell^q \{ \gamma_{d + 1}(\ell + 1) \}^{\ell} (n - 1)^{\ell} \frac{C(\ell)}{\delta^{\ell} (n - 1)^{\ell}} \right]\\
&\leq 2^{q - 1} \left[ 1 + \sum_{\ell = 1}^q C(\ell) \ell^q \{ \gamma_{d + 1}(\ell + 1)/\delta) \}^{\ell} \right].
\end{align*}
Since this upper bound does not depend on $n$ and holds for all large $n$, we conclude that $\E \{ L_{1,n}(\F_n')^q \mid Z_1 = 0 \}$ is uniformly bounded in $n$.  Since $\E \{ K_{1,n}(\F_n)^q \} \leq \E \{ L_{1,n}(\F_n)^q \} \leq \E \{ L_{1,n}(\F_n')^q \} \leq \E \{ L_{1,n}(\F_n')^q \mid Z_1 = 0 \}$, this proves the result.
\end{proof}

\end{document}